\newcommand{\mybox}[1]{\fbox{\parbox{.3\textwidth}{\centering \bf #1}}}
\setlist{leftmargin=*}
\newcounter{sarrow}
\newcommand\xrsquigarrow[1]{%
\stepcounter{sarrow}%
\begin{tikzpicture}[decoration={pre=lineto,pre length=4pt,post=lineto, post length=4pt},>=stealth']
\node (\thesarrow) {\strut#1};
\draw[<->,decorate] (\thesarrow.south west) -- (\thesarrow.south east);
\end{tikzpicture}%
}
\newcommand\downsquigarrow{%
\begin{tikzpicture}[decoration={snake, segment length = 1mm, amplitude = 1pt},>=stealth']
\draw[->,decorate] (0,0) -- (0,-0.4) ;
\node (0,0) {};
\end{tikzpicture}%
}
\newcommand\ddownarrow{%
\begin{tikzpicture}
\draw[-implies,double equal sign distance] (0,0) -- (0,-0.4) ;
\node (0,0) {};
\end{tikzpicture}%
}
\newcommand\uupddownarrow{%
\begin{tikzpicture}
\draw[implies-implies,double equal sign distance] (0,0) -- (0,-0.4) ;
\node (0,0) {};
\end{tikzpicture}%
}
\newcommand\bfuupddownarrow{%
\begin{tikzpicture}
\draw[implies-implies,double equal sign distance,very thick] (0,0) -- (0,-0.45) ;
\node (0,0) {};
\end{tikzpicture}%
}
\numberwithin{equation}{section}
\newtheoremstyle{corsivo}
   {\medskipamount}{\medskipamount}%
   {\itshape}{}%
   {\bfseries}{}%
   { }
   {\thmname{#1}\thmnumber{\@ifnotempty{#1}{ }\@upn{#2}}%
    \thmnote{ {\bfseries(#3)}}.}%
\theoremstyle{corsivo}
\newtheorem{thm}{Theorem}[section]
\newtheorem{lemma}[thm]{Lemma}
\newtheorem{prop}[thm]{Proposition}
\newtheoremstyle{dritto}
   {\medskipamount}{\medskipamount}%
   {\rmfamily}{}%
   {\bfseries}{}%
   { }
   {\thmname{#1}\thmnumber{\@ifnotempty{#1}{ }\@upn{#2}}%
    \thmnote{ {\bfseries(#3)}}.}%
\theoremstyle{dritto}
\newtheorem{dfn}[thm]{Definition}
\newtheorem{rmk}[thm]{Remark}
\newtheorem{assumption}[thm]{Assumption}
\newcommand{\sub}[1]{_{\mathrm{#1}}}
\newcommand{\eps}{\varepsilon}
\newcommand{\epsi}{\varepsilon}
\newcommand{\la}{\lambda}
\newcommand{\Id}{\mathds{1}}  
\newcommand{\id}{\mathbb{I}}   
\newcommand{\eu}{\mathrm{e}}
\newcommand{\iu}{\mathrm{i}}   
\newcommand{\di}{\mathrm{d}}
\newcommand{\B}{\mathbb{B}}                     
\newcommand{\N}{\mathbb{N}}
\newcommand{\Z}{\mathbb{Z}}
\newcommand{\Q}{\mathbb{Q}}
\newcommand{\R}{\mathbb{R}}
\newcommand{\C}{\mathbb{C}}
\newcommand{\T}{\mathbb{T}}
\newcommand{\A}{\mathcal{A}}
\newcommand{\BH}{\mathcal{B}(\mathcal{H})}
\newcommand{\Hi}{\mathcal{H}}
\newcommand{\Hf}{\mathcal{H}^b\sub{f}}
\newcommand{\Df}{\mathcal{D}^b\sub{f}}
\newcommand{\U}{\mathcal{U}}
\newcommand{\UZ}{\U_b}   
\newcommand{\inner}[2]{\left\langle #1, #2 \right\rangle}  
\newcommand{\scal}[2]{\left\langle #1, #2 \right\rangle}
\newcommand{\norm}[1]{\left\| #1 \right\|}
\newcommand{\bra}[1]{\left\langle #1 \right|}
\newcommand{\ket}[1]{\left| #1 \right\rangle}
\newcommand{\set}[1]{ \left\{  #1 \right\}} 
\DeclareMathOperator{\Tr}{Tr}         
\DeclareMathOperator{\Ran}{Ran} 
\DeclareMathOperator{\Span}{Span}
\DeclareMathOperator{\dist}{dist}
\newcommand{\ie}{{\sl i.\,e.\ }}   
\newcommand{\eg}{{\sl e.\,g.\ }} 
\newcommand{\virg}[1]{``#1''}
\newcommand{\crucial}[1]{{\it \textbf{#1}}}
\newcommand{\half}{\mbox{\footnotesize $\frac{1}{2}$}}
\renewcommand{\(}{\left(}
\renewcommand{\)}{\right)}
\renewcommand{\endrmk}{\hfill $\diamond$}
\let\oldfootnote\footnote
\renewcommand{\footnote}[1]{\oldfootnote{\  #1}}
\title[Decay of Wannier functions in Chern and Quantum Hall insulators]{Optimal decay of Wannier functions \\[3mm] 
in Chern  and Quantum Hall insulators}
\author[D. Monaco, G. Panati, A. Pisante, S. Teufel]{Domenico Monaco, Gianluca Panati, Adriano Pisante, Stefan Teufel}
\date{December 30, 2016. Final version for \textsl{arXiv.org}}
\begin{document}

\begin{abstract} 
We investigate the localization properties of independent electrons in a periodic background, possibly including a periodic magnetic field, as \eg in Chern insulators and in Quantum Hall systems. Since, generically, the spectrum of the Hamiltonian is absolutely continuous, localization is characterized by the decay, as $|x| \rightarrow \infty$, of the composite (magnetic) Wannier functions associated to the Bloch bands below the Fermi energy, which is supposed to be in a spectral gap.  
We prove the validity of a \crucial{localization dichotomy}, in the following sense: either there exist exponentially localized composite Wannier functions, and correspondingly the system is in a trivial topological phase with vanishing Hall conductivity, {or} the decay of \emph{any} composite Wannier function is such that the expectation value of the squared position operator, or equivalently of the Marzari-Vanderbilt localization functional, is $+ \infty$. In the latter case, the Bloch bundle is topologically non-trivial, and one expects a non-zero Hall conductivity. 
\end{abstract}

\maketitle

\tableofcontents

\goodbreak


\section{Introduction: transport, localization and topology}
\label{Sec:Intro}

The understanding of transport properties of quantum systems out of equilibrium is a crucial challenge in statistical mechanics. A long term goal is to explain the  conductivity properties of solids starting from first principles, as \eg from the Schr\"odinger equation governing the dynamics of electrons and ionic cores. While the general goal appears to be beyond the horizon, some results can be obtained for specific models, in particular for independent electrons in a periodic or random background.

As a general paradigm, in this case the electronic transport properties are related to the spectral type of the Hamiltonian and to the (de-)localization of the corresponding (generalized) eigenstates. However, when \crucial{periodic systems} are considered, the Hamiltonian operator has generically purely absolutely continuous spectrum
\footnote{A remarkable exception is the well-known Landau Hamiltonian. Notice, however, that if a periodic background potential is included in the model, one is generically back to the absolutely-continuous setting.
}. 
Therefore, one needs {a finer notion of localization}, which allows for example to predict when a crystal, in the absence of any external magnetic field, exhibits a zero transverse conductivity, as it happens for ordinary insulators, and when a non-vanishing one, as in the case of the recently realized \crucial{Chern insulators} \cite{Bestwick et al 2015, Chang et al 2015} predicted by Haldane \cite{Haldane88,HasanKane}. 

Our main message is that such a finer notion of  localization is provided by the rate of decay of \crucial{composite Wannier functions} (CWF)  associated to the gapped periodic Hamiltonian. The use of this notion enables us to prove a \crucial{localization dichotomy}, illustrated in Table~\ref{Tab:Dichotomy}, which in a nutshell can be formulated as follows: 
\renewcommand{\labelenumi}{{\rm(\roman{enumi})}}
\begin{enumerate}[label = (\roman*), ref=(\roman*)]
\item \label{item:ClaimI} whenever the system is time-reversal (TR) symmetric, there exist exponentially localized composite Wannier functions  which are associated to the Bloch bands below the Fermi energy, assuming that the latter is in a spectral gap;
\item \label{item:ClaimII}  viceversa, as soon as  TR-symmetry is broken, as it happens for Chern insulators, generically the composite Wannier functions  are delocalized and the transverse conductivity does not vanish.  
\end{enumerate} 
Moreover, such a localization dichotomy is a \crucial{topological phenomenon}: the relevant information is the triviality of the \crucial{Bloch bundle} associated to the occupied states, that is, the vector bundle over the Brillouin torus whose fiber over $k$ is spanned by the occupied Bloch states at fixed  crystal momentum $k$. 

\begin{table}[ht] 
\label{Tab:Dichotomy}
\begin{tabular}{ccc}
\mybox{Time-reversal \\ symmetry} & 
\xrsquigarrow{Symmetry} & 
\mybox{Broken \\ TR symmetry} \\
\ddownarrow & & \downsquigarrow \\
\mybox{Trivial \\ Bloch bundle} &
\xrsquigarrow{Topology} 
& \mybox{Non-trivial \\ Bloch bundle} \\
\uupddownarrow & & \bfuupddownarrow \\
\mybox{Exponentially loc.\\ Wannier functions \\ $\exists \: \beta > 0 : \mathrm{e}^{\beta |x|} w \in L^2(\mathbb{R}^d)$} &
\xrsquigarrow{Localization} &
\mybox{Delocalized \\ Wannier functions \\ $\langle w, \, |x|^2 \, w \rangle = + \infty$} \\
\downsquigarrow & & \downsquigarrow \\
\mybox{Vanishing \\ Hall current} &
\xrsquigarrow{Transport} &
\mybox{Non-zero \\ Hall current} \\[2em]
\end{tabular}

\caption{The main ideas outlined in the Introduction are summarized in this synoptic table. Here the symbol ($\Rightarrow$) corresponds to implication, while ($\rightsquigarrow$) denotes implication in specific models or under suitable assumptions. Implications which are proved within this paper are in boldface.}
\end{table}

The first claim above, namely \ref{item:ClaimI},  is well-known in the literature. Starting with the pioneering result by W. Kohn \cite{Kohn59}, in several decades it has been proved that, whenever the system is TR-symmetric, there exists a choice of the Bloch gauge yielding exponentially localized CWFs, independently from the number $m$ of Bloch bands, provided the system has dimension $d \leq 3$. For $m=1$ the claim has been proved in \cite{Kohn59, Bl, Cl2, Ne83, HeSj89}, while the case of composite bands ($m>1$) was solved first for $1$-dimensional systems by using adiabatic perturbation theory \cite{NeNe82, Ne91}, and later for $d \leq 3$ by bundle-theoretic techniques \cite{Panati2007, BrouderPanati2007, MoPa} (see also \cite{Kuchment15} for a recent review). More recently, explicit algorithms have been proposed to construct well-localized Wannier functions which are moreover real-valued \cite{FiMoPa_1, CoHeNe2015, CaLePaSt2016}, and the connection with the minimizers of the Marzari-Vanderbilt localization functional has been investigated, see \cite{PanatiPisante} and references therein.  

In this paper, we prove instead claim \ref{item:ClaimII}. We consider a gapped periodic system, and we assume that the Fermi projector corresponds to a \emph{non-trivial} (magnetic) Bloch bundle, as it happens generically when TR-symmetry is broken. For example, one might think of  Chern insulators or Quantum Hall systems. The rate of decay of composite Wannier functions changes drastically in this case, from exponential to polynomial. We prove in Theorem \ref{Thm:DecayWannierBasis} that the \crucial{optimal decay} for a system $w =(w_1, \ldots, w_m)$ of CWFs in a non-trivial topological phase is characterized by the divergence of the second moment of the position operator, defined as 
$$
\langle X^2 \rangle_{w} \equiv \sum_{a=1}^m \int_{\R^d} |x|^2  |w_a(x)|^2 \di x .
$$
Heuristically, this corresponds to a power-law decay $|w_a(x)| \asymp |x|^{-\alpha}$, with $\alpha=2$ for $d=2$ and $\alpha=5/2$ for $d=3$. The former exponent was foreseen by Thouless \cite{Thouless84}, who also argued that the exponential decay of the Wannier functions is intimately related to the vanishing of the Hall current. Around the same time, Zak and collaborators \cite{DanaZak83, Zak92} showed that, as far as localized magnetic orbitals are concerned, completeness, orthogonality and exponential decay are incompatible. Further analytic investigations \cite{Rashba97} confirmed this picture.

The previous discussion, which is substantiated by Theorems~\ref{Thm:SobolevBlochFrames}-\ref{Thm:NoSobolevBlochFrames} and by Theorem~\ref{Thm:DecayWannierBasis} for periodic Schr\"odinger operators and tight-binding models, is summarized in the following

\medskip

\noindent \textbf{Localization--Topology Correspondence}: {\itshape
Consider a gapped periodic quantum system. Then it is always possible to construct  a system $w=(w_1, \ldots, w_m)$ of CWFs for the occupied states  such that
\begin{equation} \label{Eq:s estimate}
\sum_{a=1}^m \int_{\R^d} |x|^{2s} \, |w_a(x)|^2 \di x  < + \infty   \qquad   \text{ for every } s <1.
\end{equation}

Moreover, the following statements are equivalent:  
\begin{enumerate}[label=(\alph*)]
\item \label{item:2ndMoment} {\bf Finite second moment:} there exists a choice of Bloch gauge such that the corresponding CWFs $w=(w_1, \ldots, w_m)$ satisfy 
$$
\langle X^2 \rangle_{w} =  \sum_{a=1}^m \int_{\R^d}  {|x|^2} \, |w_a(x)|^2 \di x  < + \infty; 
$$
\item {\bf Exponential localization:}  there exists $\alpha > 0$ and a choice of Bloch gauge such that the corresponding CWFs $\widetilde w= (\widetilde w_1, \ldots, \widetilde w_m)$ satisfy
\[
\sum_{a=1}^m \int_{\R^d}  {\eu^{2 \beta |x|}} \, |\widetilde w_a(x)|^2 \di x  < + \infty \qquad \text{ for every } \beta \in [0, \alpha);
\]
\item {\bf Trivial topology:}  the Bloch bundle associated to the occupied states is trivial.
\end{enumerate} 
In case \ref{item:2ndMoment} holds, then there exist a sequence $\set{w^{(\ell)}}$ of systems of \emph{exponentially localized} CWFs such that $w^{(\ell)} \to w$ in $L^2(\R^d, \langle x \rangle^2 \di x)^m$ as $\ell \to \infty$.
}

\medskip

Our result can be reformulated in terms of the localization functional introduced by Marzari and Vanderbilt \cite{MaVa,Wannier review}, which with our notation reads
\begin{equation} \label{Eq:MV functional}
F\sub{MV}(w) = \sum_{a=1}^m \int_{\R^d} |x|^2 |w_a(x)|^2 \, \di x -  \sum_{a=1}^m \sum_{j=1}^d \(  \int_{\R^d} x_j |w_a(x)|^2 \, \di x     \)^2  =: \langle X^2 \rangle_w  -   \langle X \rangle^2_w.
\end{equation}

In view of the first part of the statement, there always exists a system of CWFs satisfying \eqref{Eq:s estimate} for fixed $s = 1/2$, so that the first moment $\langle X \rangle_w$ is finite. Hence, the Marzari--Vanderbilt functional is finite if and only if  $ \langle X^2 \rangle_w$ is. By the second part of the Localization--Topology Correspondence, the latter condition is equivalent to the triviality of the Bloch bundle. The result is in agreement with previous numerical and analytic investigations on the Haldane model \cite{ThonhauserVanderbilt}. As a consequence, the minimization of $F\sub{MV}$ is possible only in the topologically trivial case, and numerical simulations in the topologically non-trivial regime should be handled with care: we expect that the numerics become unstable when the mesh in $k$-space becomes finer and finer. 

Furthermore, our result sets a new paradigm in the relation between topology and localization. As foreseen by Thouless \textsl{et al.} \cite{TKNN82} and Haldane \cite{Haldane88}, the topology of the Bloch bundle is mirrored by the Hall conductivity of a non-interacting $2d$ gas of electrons in a periodic background. Remarkably, this topologically protected transport is robust against interactions, qualifying as a universal feature. Indeed, recent rigorous results on the Hubbard-Haldane model \cite{GiMaPo2016} show that the transverse conductivity of a gas of interacting fermions exactly equals the one of the non-interacting gas. From our perspective, topology and transport reflect on the localization properties of the system, expressed in terms of CWFs \cite{Thouless84}.

Further possible applications of the Localization--Topology Correspondence go beyond the realm of crystalline solids, including superfluids and superconductors \cite{PeottaTorma2015, Peotta_etal_2016}, and tensor network states \cite{Read2016}. For example, in the context of flat band superconductivity a crucial question is whether the superfluid weight $D\sub{s}$ is actually non-zero, yielding the dissipationless transport and the Meissner effect that define superconductivity. In a recent breakthrough paper  \cite{PeottaTorma2015}, it was noticed that the superfluid weight depends not only on the dispersion relation but also on the Bloch eigenfunctions of the relevant energy band. More specifically, the authors demonstrate that for $d=2$ one has $D\sub{s} \geq |c_1(P)|$, where $c_1(P)$ is the first Chern number, as defined in \eqref{c1}, and $P$ is the relevant family of projectors. Our paper shows that a non-vanishing Chern number implies the delocalization of composite Wannier functions, which might be related to the existence of a long-range order associated to the transition to the superconductive phase, namely  $D\sub{s} > 0$. In view of that, we hope that our results will trigger new developments in the theory of superconductors and of many-body systems, and possibly in other realms of solid-state physics.

We conclude this Introduction by outlining the structure of the paper. Section~\ref{sec:results} contains our main results, namely Theorems~\ref{Thm:SobolevBlochFrames} and \ref{Thm:NoSobolevBlochFrames}. These are formulated within a general framework, which goes beyond that of Wannier functions in insulators. The results are stated in terms of families of projectors depending on a parameter $k \in \R^d$, which in applications to crystals correspond to the spectral projectors on the occupied states as functions of the crystal momentum $k$: it is customary in the physics literature to denote these by
$$
P_*(k) = \sum_n^{\rm occ} \ket{u_n(k)}\bra{u_n(k)}, 
$$
where $u_n(k)$ denotes the periodic part of the $n$-th (magnetic) Bloch function. Bloch frames (compare Definition~\ref{dfn:Bloch}) associated to such families of projectors play in momentum space the role that Wannier functions play in position space. In particular, the (Sobolev) regularity of Bloch frames as functions of $k$ is linked to the decay rate at infinity of the associated CWFs (see also Appendix~\ref{app:Stein}). Thus, when applied to the concrete case of a magnetic periodic Schr\"{o}dinger operator (Section~\ref{sec:MBF}), the general results yield the optimal decay of Wannier functions in topologically non-trivial systems like Chern and quantum Hall insulators, as stated in the Localization--Topology Correspondence above (compare Theorem~\ref{Thm:DecayWannierBasis}). The last Sections~\ref{Sec:Reduction} to \ref{sec:Decay} contain the tools and the arguments needed to prove the main results stated in Section~\ref{sec:results}.

\bigskip

\noindent \textbf{Acknowledgements.} We are indebted to Horia Cornean for many useful and stimulating discussions. We are  grateful to Gian Michele Graf for useful comments, and to Sebastiano Peotta for pointing out to us the relevance of the delocalization of composite Wannier functions in the context of flat band superconductivity. \newline
D.M.\ and S.T.\ acknowledge financial support from the German Science Foundation (DFG) within the GRK 1838 ``Spectral theory and dynamics of quantum systems''.


\section{Assumptions and general results} \label{sec:results}

In this Section, we state our results in a general setting, aiming at the aforementioned applications to composite Wannier functions in crystals with broken TR symmetry and potentially to other gapped periodic quantum systems. In particular, the following abstract results apply both to continuous models, as \eg the magnetic Schr\"{o}dinger operators considered in the Section~\ref{sec:MBF}, and to discrete models, as \eg  the Hofstadter and the Haldane model \cite{Hofstadter76, Haldane88}.

\subsection{Families of projectors and Bloch frames}   

In the following, we let $\Hi$ be a separable Hilbert space with scalar product $\scal{\cdot}{\cdot}$, $\BH$ denote the algebra of  bounded linear operators on $\Hi$, and $\U(\Hi)$ denote the group of unitary operators on $\Hi$. We also consider a maximal lattice $\Lambda \simeq \Z^d \subset \R^d$ which, in the application to Schr\"{o}dinger operators, is identified with the reciprocal (magnetic) lattice. If $\Lambda$ is generated by the basis $\set{e_1, \ldots, e_d} \subset \R^d$, a fundamental unit cell for $\Lambda$ is chosen by setting
\begin{equation} \label{unit cell}
\B := \set{ k = \sum_{j=1}^{d} k_j e_j \in \R^d :  -\frac{1}{2} \le k_j \le \frac{1}{2}, \: j \in \set{1, \ldots, d} }.
\end{equation}
We will also use the notation
\[ \B_{ij} := \set{k \in \B : k_\ell = 0 \text{ if } \ell \notin \set{i, j} }, \quad i, j \in \set{1, \ldots, d}. \]

\begin{assumption} \label{Ass:projectors tau}
We consider a family of orthogonal projectors $\set{P(k)}_{k \in \R^d} \subset \BH$ satisfying the following assumptions:
\begin{enumerate}[label=$(\mathrm{P}_{\arabic*})$,ref=$(\mathrm{P}_{\arabic*})$]
\item \label{item:smooth_tau} \crucial{analiticity}: the map $\R^d \ni k \mapsto P(k) \in \BH$ is real-analytic;
\item \label{item:tau} \crucial{$\tau$-covariance}: the map $k \mapsto P(k)$ is covariant with respect to a unitary representation $\tau  :   \Lambda \to \U(\Hi)$, $\la \mapsto \tau(\la) \equiv \tau_\la$, in the sense that
\[ P(k+\lambda) = \tau_\la \, P(k) \, \tau_\la^{-1} \quad \text{for all } k \in \R^d, \, \lambda \in \Lambda. \]
\end{enumerate} 
\end{assumption}

\begin{dfn} \label{Def:Chern}
A family of orthogonal projectors $\set{P(k)}_{k \in \R^d} \subset \BH$ as in Assumption~\ref{Ass:projectors tau} is called \crucial{Chern non-trivial} if for at least one choice of $i,j \in \set{1, \ldots, d}$, with $i < j$, the number
\begin{equation} \label{c1}
c_1(P)_{ij} := \frac{1}{2 \pi \iu} \int_{\B_{ij}} \Tr \Big( P(k) \left[ \partial_i P(k), \partial_j P(k) \right] \Big) \, \di k_i \wedge \di k_j
\end{equation}
is non-zero. If $c_1(P)_{ij} = 0$ for all $1 \le i<j\le d$, then the family $\set{P(k)}_{k \in \R^d}$ is called \crucial{Chern trivial}.
\end{dfn}
 
Assumption \ref{item:smooth_tau} implies that the rank $m$ of the projectors $P(k)$ is constant in $k$, and we will assume  that $m < + \infty$. The above assumptions \ref{item:smooth_tau} and \ref{item:tau} are satisfied by the spectral projectors $\set{P_*(k)}_{k \in \R^d}$ corresponding to an isolated family of Bloch bands of a magnetic periodic Schr\"{o}dinger operator, as is guaranteed by Proposition~\ref{Prop P properties} below. Besides, it is easy to check that the previous Assumption is also satisfied when an isolated Bloch band for the Hofstadter or the Haldane model is considered, with the additional simplification that $\Hi$ is finite dimensional. 

The terminology ``Chern (non-)trivial'' from Definition~\ref{Def:Chern} is borrowed from the theory of vector bundles. Indeed, to any family of projectors $\set{P(k)}_{k \in \R^d}$ as in Assumption~\ref{Ass:projectors tau} one can associate a smooth Hermitian vector bundle of rank $m$ over the $d$-dimensional torus $\T^d := \R^d / \Lambda$, called the \crucial{Bloch bundle}. Informally, the Bloch bundle has the range of the projector $\Ran P(k) \subset \Hi$ as fiber over the point $k \in \T^d$ -- see \cite{Panati2007,MoPa} for further details. When $d \le 3$, the Bloch bundle is trivial (\ie isomorphic to a product bundle $\T^d \times \C^m$) exactly when the \emph{first Chern numbers} defined via \eqref{c1} vanish \cite{Panati2007}. In higher dimension $d>3$, this condition is not sufficient anymore, and characterizing trivial Bloch bundles becomes more involved.

\begin{dfn}[Bloch frame] \label{dfn:Bloch}
Let $\mathcal{P} =\set{P(k)}_{k \in \R^d}$ be a family of projectors satisfying Assumption \ref{Ass:projectors tau}. A \crucial{local Bloch frame} subordinated to $\mathcal{P}$ on a region $\Omega \subset \R^d$ is a map 
\begin{eqnarray*}
\Phi : \, & \Omega & \longrightarrow \quad \Hi \oplus \ldots \oplus \Hi = \Hi^m \\
& k &\longmapsto  \quad  (\phi_1(k), \ldots, \phi_m(k))
\end{eqnarray*} 
such that for a.e. $k \in \Omega$ the set $\set{\phi_1(k), \ldots, \phi_m(k)}$ is an orthonormal basis spanning $\Ran P(k)$.  If $\Omega = \R^d$ we say that $\Phi$ is a \crucial{global Bloch frame}. Moreover, we say that a (global) Bloch frame is 
\begin{enumerate}[label=$(\mathrm{F}_{\arabic*})$,ref=$(\mathrm{F}_{\arabic*})$]
\setcounter{enumi}{-1}
 \item \label{item:F0} \crucial{continuous} (respectively \crucial{smooth}, \crucial{analytic}) if the maps $\phi_a : \R^d  \to \Hi$ are continuous (respectively $C^\infty$-smooth, analytic) for all $a \in \set{1, \ldots, m}$;
 \item \label{item:F1}  \crucial{$H^s$-regular}  if the maps $\phi_a : \R^d  \to \Hi$ lie in the corresponding local Sobolev space $H\sub{loc}^s(\R^d, \Hi)$  for all $a \in \set{1, \ldots, m}$;
 \item \label{item:F2} \crucial{$\tau$-equivariant} if 
\begin{equation*} \label{tau-cov}
\phi_a(k + \lambda) = \tau_\la \, \phi_a(k) \quad \text{for all } k \in \R^d, \: \lambda \in \Lambda, \: a \in \set{1, \ldots, m}. 
\end{equation*}
\end{enumerate}
\end{dfn}

In geometric terms, a Bloch frame is a trivializing frame for the Bloch bundle associated to the family of projectors $\set{P(k)}_{k \in \R^d}$. The existence of a global, continuous, and $\tau$-equivariant Bloch frame is topologically obstructed, and this obstruction is quantified precisely by the Chern numbers \eqref{c1}, whose vanishing is equivalent to the triviality of the bundle itself \cite{Panati2007, Monaco16}.

\subsection{Main results}

Having set all the notation we need, we are now ready to state our main results in this general setting. The consequences on the decay rate of Wannier functions for Chern and Quantum Hall insulators will be deduced at the end of the next Section, see Theorem~\ref{Thm:DecayWannierBasis}. 

\begin{thm} \label{Thm:SobolevBlochFrames}
Assume $d \leq 3$. Let $\mathcal{P} =\set{P(k)}_{k \in \R^d}$ be a family of orthogonal projectors satisfying Assumption \ref{Ass:projectors tau},  with finite rank $m \in \N^{\times}$. Then there exists a global $\tau$-equivariant Bloch frame for $\mathcal{P}$ which is $H^s$-regular for all $s < 1$.
\end{thm}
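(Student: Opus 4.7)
The plan is to construct a $\tau$-equivariant Bloch frame that is smooth away from a ``defect set'' $S \subset \R^d$ of codimension $2$, and then to verify that the singularity at $S$ is mild enough to preserve $H^s$-regularity for every $s<1$. By the $\tau$-covariance of $\mathcal{P}$, specifying a $\tau$-equivariant frame on $\R^d$ is equivalent to specifying a frame on the closed unit cell $\B$ which matches up correctly, under the action of $\tau$, on pairs of opposite faces of $\partial \B$. Since $\B$ is contractible and $k \mapsto P(k)$ is real-analytic, Kato parallel-transport (or the bundle-theoretic construction used in \cite{Panati2007, MoPa}) produces an analytic Bloch frame $\Psi$ on a neighborhood of $\B$. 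The obstruction to adjusting $\Psi$ so that it satisfies the $\tau$-equivariance relations on $\partial \B$ is encoded in a set of transition unitaries on the faces, which glue to give the Bloch bundle over $\T^d$; in dimension $d\leq 3$ this bundle is classified by the integer Chern numbers $c_1(P)_{ij}$ defined in \eqref{c1}.

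The next step is to homotope these transition unitaries so that the resulting frame $\Phi$ is smooth on $\B\setminus S$, where $S$ is a finite collection of points in the interior of $\B$ when $d=2$, and a finite union of analytic arcs (transverse to the lattice directions) when $d=3$. This exploits the classical fact that any principal $U(m)$-bundle on $\T^d$ with $d\leq 3$ admits a section away from the codimension-$2$ skeleton of a sufficiently fine cell decomposition. Near each connected component of $S$, the frame can be arranged to have the explicit model form
\[
\Phi(k) \;=\; \Psi(k)\cdot V\!\left(\frac{\pi_S(k-k_0)}{|\pi_S(k-k_0)|}\right),
\]
where $\pi_S$ is the orthogonal projection onto the plane transverse to $S$ at $k_0 \in S$, and $V : S^1 \to \U(\C^m)$ is a smooth map carrying the topological charge of the singularity, acting block-diagonally on $\C^m$ via $\mathrm{diag}(\mathrm{e}^{\iu n\theta}, 1, \ldots, 1)$ for some $n \in \Z$.

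Finally, one checks that this model factor belongs to $H^s_{\mathrm{loc}}(\R^d,\C^m)$ for every $s<1$: it is bounded, and its weak gradient decays like $|\pi_S(k-k_0)|^{-1}$, which by the standard fractional Sobolev characterisation (or by interpolation between $L^\infty$ and the homogeneous $\dot{H}^1$-seminorm, as recalled in Appendix~\ref{app:Stein}) is integrable against the Gagliardo kernel of order $s<1$ on any relatively compact neighborhood of $S$. Multiplication by the analytic factor $\Psi(k)$ preserves $H^s_{\mathrm{loc}}$-regularity, and extending to $\R^d$ by $\tau$-equivariance gives the desired global frame, with $H^s_{\mathrm{loc}}$-norm controlled uniformly on translates of $\B$. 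The main obstacle is the homotopy step, particularly when $d=3$: one must realize the defect locus $S$ as a codimension-$2$ set contained in the interior of $\B$ while still matching the $\tau$-equivariance conditions on $\partial \B$; this is handled by using the integrality of the Chern numbers $c_1(P)_{ij}$ to express the obstruction as a sum of elementary ``vortex'' defects, each of the model form above, which can be inserted one at a time without disturbing the boundary compatibility.
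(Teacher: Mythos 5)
Your overall strategy is the same as the paper's: build a frame with codimension-$2$ singularities and gradient blowing up like the reciprocal distance to the singular set, then convert this into $H^s$-regularity for $s<1$. The paper's proof (after reducing $\tau$-covariance to plain periodicity via Proposition~\ref{tau to periodic}) is more hands-on: it constructs a continuous periodic frame explicitly on the $1$-skeleton of $\B$ via $1$-dimensional parallel transport, then extends radially into the interior by parallel transport along rays, smooths the resulting seam at a fixed radius with Lemma~\ref{local smooth}, and verifies the bounds $\norm{\nabla_k\phi_a(k)}\le C/|k|$ in $d=2$ and $\norm{\nabla_k\phi_a(k)}\le C/(|k|\sin\theta)$ on cones in $d=3$, before running a careful $W^{1,p}\hookrightarrow H^s$ embedding argument. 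Your route through transition unitaries, a vortex-model singularity, and a homotopy that concentrates the defect is a legitimate alternative in spirit; the paper's approach buys explicit gradient control and avoids any appeal to abstract homotopy/obstruction results beyond what is already in \cite{Panati2007}.

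There is, however, a genuine gap in your treatment of the $d=3$ case. You assert that the defect locus $S$ can be taken to be ``a finite union of analytic arcs'' that is, as you later say, ``contained in the interior of $\B$'' after matching the $\tau$-equivariance conditions on $\partial\B$. This is topologically impossible when the family is Chern non-trivial: the defect locus of a section of the associated determinant line bundle over $\T^3$ is Poincar\'{e} dual to $c_1(P)\in H^2(\T^3,\Z)$, hence is a $1$-cycle whose class in $H_1(\T^3,\Z)\cong\Z^3$ is the vector $(c_1(P)_{23},-c_1(P)_{13},c_1(P)_{12})$. If any $c_1(P)_{ij}\neq 0$ this class is non-trivial, so the defect cannot be a contractible union of loops inside the open cell; it must wind around the torus and therefore pierce opposite faces of $\B$. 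Your ``insert one vortex at a time without disturbing the boundary compatibility'' argument works in $d=2$, where the Poincar\'{e} dual is a $0$-cycle and a collection of interior points is unobstructed, but not in $d=3$. This is precisely the difficulty the paper's construction is designed to resolve: by placing the singular set along the coordinate half-axes (so that it exits each face of $\B$ at the face center, matching under periodicity), the paper obtains a defect locus that respects the identifications on $\partial\B$ and at the same time controls the gradient near the triple intersection at the origin via the refined conical bound $C/(|k|\sin\theta)$, a subtlety which your plain ``$1/\mathrm{dist}(k,S)$'' estimate does not address where arcs meet. To repair your argument you would need to allow $S$ to intersect $\partial\B$ and to check both that the intersection pattern is compatible with $\tau$-equivariance and that the gradient bound survives at any crossing points of the arcs.
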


The proof of Theorem \ref{Thm:SobolevBlochFrames} is postponed to Section \ref{sec:Proofs}. 

\begin{thm} \label{Thm:NoSobolevBlochFrames}
Assume $d \leq 3$. Let $\mathcal{P} =\set{P(k)}_{k \in \R^d}$ be a family of orthogonal projectors satisfying Assumption \ref{Ass:projectors tau},  with finite rank $m \in \N^{\times}$. Suppose that there exists a global $\tau$-equivariant Bloch frame $\Phi$ for $\mathcal{P}$ in $H\sub{loc}^1(\R^d, \Hi^m)$. Then 
\renewcommand{\labelenumi}{{\rm(\roman{enumi})}}
\begin{enumerate} 
 \item \crucial{triviality of the Bloch bundle}: $c_1(P)_{ij} = 0$ for any choice of $1 \le i < j  \le d$; as a consequence, the Bloch bundle associated to $\mathcal{P}$ is trivial;  
 \item \crucial{approximation by analytic frames}: there exists a sequence $\set{\Psi^{(n)}}$ of global \emph{analytic} $\tau$-equivariant Bloch frames for $\mathcal{P}$, such that $\Psi^{(n)} {\longrightarrow} \Phi $  in the space ${H^1\sub{loc}}(\R^d, \Hi^m)$ as $n \to \infty$.
\end{enumerate} 
\end{thm}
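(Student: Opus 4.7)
The plan is to address part (ii) first---constructing analytic approximating Bloch frames---and to read off (i) as an immediate consequence. Specifically, I would apply a regularize--project--orthonormalize scheme to the given $H^1_{\mathrm{loc}}$ Bloch frame $\Phi$. First, convolve each component with a real-analytic mollifier $\chi_n$, for example the heat kernel $\chi_{t_n}$ with $t_n \to 0$, whose holomorphic extension to $\C^d$ guarantees that $\phi_a^{(n)} := \chi_n * \phi_a$ is real-analytic on $\R^d$; translation invariance of the convolution and linearity of $\tau_\lambda$ ensure $\tau$-equivariance. Second, project fiberwise via $\psi_a^{(n)} := P\,\phi_a^{(n)}$: by Assumption~\ref{item:smooth_tau}, $P$ is real-analytic, so $\psi_a^{(n)}$ is real-analytic, and by the covariance~\ref{item:tau} the family $\psi^{(n)}$ inherits $\tau$-equivariance. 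Third, Löwdin-orthonormalize: set $G^{(n)}_{ab}(k) := \langle \psi_a^{(n)}(k), \psi_b^{(n)}(k)\rangle$ and define
\[
\Psi^{(n)}(k) := \psi^{(n)}(k)\, \bigl(G^{(n)}(k)\bigr)^{-1/2},
\]
which is real-analytic wherever $G^{(n)}$ is positive definite; unitarity of $\tau_\lambda$ ensures $\tau$-equivariance of $\Psi^{(n)}$. The convergence $\Psi^{(n)} \to \Phi$ in $H^1_{\mathrm{loc}}$ then follows by combining the standard mollifier estimate $\phi_a^{(n)} \to \phi_a$ in $H^1_{\mathrm{loc}}$ with the $\Lambda$-periodicity and real-analyticity of $P$ and $\partial_j P$ (hence their uniform boundedness on all of $\R^d$) and with $G^{(n)} \to I$ as $n \to \infty$.

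Part (i) then follows at once: each $\Psi^{(n)}$ is a smooth global $\tau$-equivariant Bloch frame, providing a smooth global trivialization of the Bloch bundle over $\T^d$. As recalled after Definition~\ref{Def:Chern}, in dimension $d \leq 3$ smooth triviality is equivalent to the vanishing of the first Chern numbers, so $c_1(P)_{ij} = 0$ for all $1 \leq i < j \leq d$. Alternatively, the vanishing can be verified directly: in the gauge $\Psi^{(n)}$ one has the pointwise identity
\[
\Tr\bigl(P[\partial_i P, \partial_j P]\bigr) = -\iu\,\bigl(\partial_i \tr A^{(n)}_j - \partial_j \tr A^{(n)}_i\bigr),
\]
with $(A^{(n)}_j)^{ab}(k) := \iu\,\langle \psi_a^{(n)}(k), \partial_j \psi_b^{(n)}(k)\rangle$, and $\tr A^{(n)}_j$ is $\Lambda$-periodic by $\tau$-equivariance and unitarity of $\tau_\lambda$; integrating a total derivative over the fundamental cell then gives $c_1(P)_{ij} = 0$.

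The main technical obstacle I foresee is to guarantee that the Löwdin orthonormalization is everywhere well defined, i.e.\ that $G^{(n)}(k)$ is positive definite at every $k \in \R^d$. Since $\Phi \in H^1_{\mathrm{loc}}$ only provides $L^2_{\mathrm{loc}}$ control on $\phi_a$, the approximants $\psi_a^{(n)}$ converge to $\phi_a$ in $H^1_{\mathrm{loc}}$ but not pointwise, so $G^{(n)}(k)$ need not be close to the identity at every individual $k$. Real-analyticity saves the construction on a set of full measure: $\det G^{(n)}$ is real-analytic, and cannot vanish identically (otherwise the $\psi_a^{(n)}$ would be linearly dependent everywhere, contradicting their $L^2_{\mathrm{loc}}$ convergence to the orthonormal $\phi$); hence its zero set is a proper analytic subvariety of measure zero and $\Psi^{(n)}$ is real-analytic on the complement, which already meets the a.e.\ requirement of Definition~\ref{dfn:Bloch}. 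Extending $\Psi^{(n)}$ as a real-analytic frame on all of $\R^d$, the strongest reading of part (ii), will require an additional argument exploiting $\tau$-equivariance to reduce to one fundamental cell, together with a perturbation or careful subsequence selection in $t_n$ to avoid the exceptional zero loci.
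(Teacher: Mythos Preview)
Your strategy of establishing (ii) first and reading off (i) is natural, but the orthonormalization step contains a genuine gap that you have correctly identified and not resolved. In dimensions $d=2,3$ the space $H^1_{\mathrm{loc}}$ does not embed into $C^0$, so the mollified--projected frame $\psi^{(n)}$ converges to $\Phi$ only in $H^1_{\mathrm{loc}}$, not uniformly; consequently there is no reason for the Gram matrix $G^{(n)}(k)$ to be close to the identity at every $k$, and in general it will degenerate on a nonempty analytic subvariety $Z_n$. On $Z_n$ the inverse square root $(G^{(n)})^{-1/2}$ blows up, and for a generic simple zero of $\det G^{(n)}$ along a hypersurface this blow-up is like $(\mathrm{dist})^{-1/2}$, whose gradient is not in $L^2$. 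Thus $\Psi^{(n)}$ is typically not in $H^1_{\mathrm{loc}}$, the convergence $\Psi^{(n)}\to\Phi$ in $H^1_{\mathrm{loc}}$ fails, and the Stokes argument for (i) cannot be applied either. Your suggested remedy of selecting $t_n$ to ``avoid the exceptional zero loci'' cannot succeed \emph{a priori}: if the Chern class were nonzero, \emph{every} continuous approximation of $\Phi$ by frames would have to degenerate somewhere, so arguing that one can avoid degeneration presupposes exactly the triviality you are trying to prove.

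The paper breaks this circularity by reversing the order. For (i) it approximates $\Phi$ in $H^1$ by smooth maps $\Phi^{(\ell)}$ (by convolution or Fourier truncation) \emph{without} projecting or orthonormalizing: these are neither Bloch frames nor $m$-frames, but the identity $\Omega = \di\mathcal{A}$ for the approximate Berry curvature and connection holds regardless, and since $\Phi^{(\ell)}$ is genuinely smooth and periodic, Stokes' theorem gives $\int_{\B_{ij}} \Omega^{(\ell)}=0$; passing to the limit in $L^1$ yields $c_1(P)_{ij}=0$. Only \emph{after} triviality is established does the paper construct the analytic approximants for (ii): triviality furnishes a global analytic Bloch frame $\chi$, one writes $\Phi=\chi\,U$ with $U\in H^1(\T^d,\U(\C^m))$, and then approximates $U$ by analytic unitaries. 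This last step is itself nontrivial for $d=3$: it requires the density of smooth maps in $H^1(\T^3,\U(\C^m))$, which holds because $\pi_2(\U(\C^m))=0$ (Hang--Lin), a topological input absent from your scheme.
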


The proof of Theorem \ref{Thm:NoSobolevBlochFrames} is postponed to Section \ref{sec:Decay}.

\begin{rmk}[Dependence of Theorems \ref{Thm:SobolevBlochFrames} and \ref{Thm:NoSobolevBlochFrames} on the dimension] \label{rmk:dimension}
We observe that the above two results are actually substantial only for $2 \le d \le 3$. In fact, it is well-known that, if $d=1$, then one can construct a global, $\tau$-equivariant and \emph{analytic} Bloch frame for a family of projectors satisfying Assumption \ref{Ass:projectors tau} (see \eg Remark \ref{rmk:3.4d=1}). Also, since no non-zero $2$-forms exist on a $1$-dimensional manifold, trivially $c_1(P) = 0$.

Notice also that, in dimension $2 \le d \le 3$, if there exists a global, $\tau$-equivariant Bloch frame which is $H^s$-regular for $s > d/2$, then $c_1(P)_{ij} = 0$ for all $1 \le i < j \le d$. Indeed, by Sobolev embedding such a frame would be also continuous: as was already mentioned, the existence of such continuous Bloch frames is characterized by the vanishing of the Chern numbers. In particular, when $d=2$ this excludes the existence of $\tau$-equivariant Bloch frames in $H^s$ for $s > 1$ whenever the family of projectors is Chern non-trivial, in the sense of Definition \ref{Def:Chern}.
\end{rmk}

The next Section will be devoted to the application of the previous general results in the context of magnetic periodic Schr\"{o}dinger operators, and to deduce the rate of decay of composite Wannier functions in gapped crystalline systems as outlined in the Introduction.

After this application, we pass to the proofs of Theorems \ref{Thm:SobolevBlochFrames} and \ref{Thm:NoSobolevBlochFrames}. First of all, we will reduce the problem from $\tau$-covariant to periodic families of projectors in Section~\ref{sec:tautoper}. The statements of our two general results are reduced to Theorems~\ref{Thm:PeriodicSobolevBlochFrames} and \ref{Corollary}, respectively.

In Section~\ref{sec:Proofs}, by means of the technique of parallel transport, we are able to construct Bloch frames which are periodic and have singularities concentrated in codimension $2$ (so on a point in $d=2$ and on lines in $d=3$). This technique also gives a full control on the growth of the gradient of any element of the frame when approaching the singularity, which allows to obtain the Sobolev regularity stated in Theorem~\ref{Thm:SobolevBlochFrames}.

Finally, Section~\ref{sec:Decay} contains the proof of Theorem~\ref{Thm:NoSobolevBlochFrames}. We provide two alternative proofs of item (i), concerning the triviality of the Bloch bundle. The first one involves the use of techniques from the theory of approximation of Sobolev maps with values in a manifold, which are detailed in Appendix~\ref{Sec:SobolevApprox}, combined with a finite-dimensional reduction presented in Section~\ref{Sec:Finite dimension}. This argument gives further insight on the geometry of the problem: essentially, the proof shows how the given Bloch bundle can be approximated by a sequence of \emph{trivial} Bloch bundles $\widetilde{ \mathcal{E}}_n$, each of which furthermore embeds in $\T^d \times V_n$  where $V_n$ is a finite-dimensional linear subspace of $\Hi$. In particular, the finite-dimensional reduction may provide theoretical ground for the error estimates in numerical simulations (compare Remark~\ref{rmk:stronger}). The second proof of item (i) in the statement of Theorem~\ref{Thm:NoSobolevBlochFrames} is more direct, but fails to take into account the geometric interpretation of the problem and is not able to exploit the finite-dimensional reduction. Point (ii) in Theorem~\ref{Thm:NoSobolevBlochFrames} then follows from (i) again by means of the results of Appendix~\ref{Sec:SobolevApprox}. A companion approximation result, concerning finite-dimensional truncations of the Hilbert space (Theorem \ref{Corollary}.\ref{item:Galer_sec7}), can be easily translated to the context of $\tau$-equivariant frames in the spirit of Theorem~\ref{Thm:NoSobolevBlochFrames}.


\section{Application to composite Wannier functions} 
\label{sec:MBF}

In this Section, after reviewing some basic facts concerning the analysis of magnetic periodic Hamiltonians and the corresponding composite Wannier bases, we apply the general results from last Section to the optimal decay of composite Wannier functions in gapped periodic quantum systems,  proving a restatement of the Localization--Topology Correspondence in the Introduction. The experienced reader can skip the review part, and jump directly to Section \ref{Sec:CWFs}.  \newline
For the sake of the presentation, we will focus on continuous models, but our results (in particular Theorem~\ref{Thm:DecayWannierBasis}) easily generalize to tight-binding and discrete models, under the assumption that the Fermi energy lies in a spectral gap. 

\subsection{Magnetic periodic Schr\"{o}dinger operators}

The dynamics of a particle in a crystalline solid subject to an electro-magnetic field can be modeled by use of a {\it magnetic periodic Schr\"{o}dinger Hamiltonian} (sometimes called {\it magnetic Bloch Hamiltonian}). In general, magnetic Schr\"odinger operators are in the form%
\footnote{Throughout this Section, we use Hartree atomic units, and moreover we reabsorb the reciprocal of the speed of light $1/c$ in the definition of the function $A_{\Gamma}$.}
\[ H_\Gamma = \half \( - \iu \nabla_x  - A_{\Gamma}(x) \)^2 +  V_\Gamma(x) \qquad \text{acting in } L^2(\R^d). \]
We will later specify conditions on the magnetic and scalar potentials $A_{\Gamma}$ and $V_\Gamma$ that guarantee in particular that $H_\Gamma$ defines a self-adjoint operator on a suitable domain (see Assumption \ref{assum:potentials} and Remark \ref{rmk:Concrete_Assumption}).

``Periodicity'' of the Hamiltonian means that $H_\Gamma$ should commute with translations by vectors in the Bravais lattice $\Gamma$ of the solid under consideration, which is generated by a basis $\set{a_1, \ldots, a_d}$ in $\R^d$ as $\Gamma = \Span_\Z\set{a_1, \ldots, a_d}\simeq \Z^d \subset \R^d$. The operator $H_\Gamma$ is then required to commute with the lattice translation operators 
\begin{equation} \label{Tgamma}
(T_\gamma \psi)(x) := \psi(x-\gamma), \quad \gamma \in \Gamma, \: \psi \in L^2(\R^d),
\end{equation}
as is the case when $A_{\Gamma}$ and $V_\Gamma$ are $\Gamma$-periodic functions. In particular, in 2 dimensions the magnetic flux per unit cell $\Phi_B$ should be zero. 

The case of non-zero magnetic flux per unit cell in dimensions 2 and 3, which generically appears when \eg a \emph{uniform} magnetic field is considered, can also be recast in this framework under some commensurability assumption. To see this, let $A_b(x)$ be a vector potential in $\R^d$ for a magnetic field of uniform strength $b \in \R$, \eg $A_b(x) = \frac{1}{2c} x \wedge B$ when $d=3$, where $c$ is the speed of light and $B = b \hat{B}$ is the applied magnetic field (the case $d=2$ can be recovered by setting $x=(x_1, x_2, 0)$ and $B = (0,0,b)$). 
Consider the \emph{Bloch-Landau Hamiltonian}
\[ H_{\Gamma, b} = \half \left( - \iu \nabla_x - A_b(x) \right)^2 +  V_\Gamma(x). \]
The role of the natural translations \eqref{Tgamma}, which commute with $H_\Gamma$ and among themselves, is now played by the \emph{magnetic translations} \cite{Zak1964}
\[ (T_\gamma^{A_b} \psi)(x) := \eu^{\iu \gamma \cdot A_b(x)} \, \psi(x-\gamma), \quad \gamma \in \Gamma. \]
These commute with $H_{\Gamma,b}$, but satisfy the pseudo-Weyl relations
\[ T_\gamma^{A_b} T_{\gamma'}^{A_b} = \eu^{\frac{\iu}{c} B \cdot (\gamma \wedge \gamma')} \, T_{\gamma'}^{A_b} T_\gamma^{A_b}, \quad \gamma, \gamma' \in \Gamma. \]
If we assume that 
\begin{equation} \label{eqn:Commensurable}
\tfrac{1}{c} B \cdot (\gamma \wedge \gamma') \in 2 \pi \Q \quad \text{for all} \quad \gamma, \gamma' \in \Gamma
\end{equation} 
then the magnetic translations provide a true unitary representation on $L^2(\R^d)$ at the price of choosing a smaller Bravais lattice, \ie of choosing larger periods. For example, in 2 dimensions it suffices to ask that $B \cdot (a_1 \wedge a_2) = 2 \pi c \, p/q \in 2 \pi c \, \Q$. Physically, this condition means that the magnetic flux per unit cell $\Phi_B$ is a rational multiple of the fundamental flux unit $\Phi_* =  hc/e$, which equals $2\pi c$ in Hartree units. Under this condition, one obtains a unitary representation of $\Gamma_q \simeq \Z^2$ by setting
$$
T \colon \Gamma_q \to \mathcal{U}(L^2(\R^2)), \quad  n_1 \, a_1 + n_2 \, q \, a_2 \mapsto (T^{A_b}_{a_1})^{n_1} (T^{A_b}_{a_2})^{q n_2},
$$
where $\Gamma_q := \set{\gamma \in \Gamma :  \gamma = n_1 \, a_1 + n_2 \, q \, a_2, \: (n_1, n_2) \in \Z^2}$ may be regarded as a sublattice of $\Gamma$. Notice that $\Gamma_q$, and hence the dual torus  $\R^d/\Gamma_q$, depends on the value of the magnetic flux per unit cell. 

In the following, we will denote by $A \colon \R^d \to \R^d$ a magnetic potential which is in the form $A = A_\Gamma + A_b$, where $A_\Gamma$ is periodic and $A_b$ is linear (\ie it generates a uniform magnetic field satisfying the commensurability condition \eqref{eqn:Commensurable}). In analogy with the $2$-dimensional case, we denote by $\Gamma_b$ the sublattice of $\Gamma$ which is unitarily represented on $L^2(\R^d)$ by the (magnetic) translations $T^b_\gamma$, $\gamma \in \Gamma_b$, associated to the linear part of the magnetic potential. Finally, set $H_{\Gamma, b} = \half (- \iu \nabla_x - A(x))^2 + V_\Gamma(x)$ for the magnetic Schr\"{o}dinger Hamiltonian.

\subsection{Magnetic Bloch-Floquet transform}

In order to simplify the analysis of periodic operators, one looks for a convenient representation which (partially) diagonalizes simultaneously both the Hamiltonian and the lattice (magnetic) translations. We describe this general approach here, and go back to Hamiltonians of the form $H_{\Gamma, b}$ later.

To begin with, introduce the reciprocal lattice $\Gamma^*_b$, consisting of the vectors $k \in \R^d$ such that  $k \cdot \gamma \in 2 \pi \Z$ for every $\gamma \in \Gamma_b$. Choose a basis $\set{b_1, \ldots, b_d}$ such that  $\Gamma^*_b = \Span_{\Z}\set{b_1, \ldots, b_d}$ and consider the corresponding centered unit cell
$$
\B_b := \set{k = \sum_{j=1}^{d} k_j b_j \in \R^d: - \frac{1}{2} \le k_j \le \frac{1}{2}, \: j \in \set{1, \ldots, d} }. 
$$
The \emph{magnetic} \emph{Bloch-Floquet transform} is defined%
\footnote{The normalization here differs from the one used in \cite{PanatiPisante} but agrees with the one used in \cite{PST2003}, which is also the most common convention among solid-state and computational physicists. The latter is more convenient when a numerical grid in $k$-space is considered, which becomes finer and finer in the thermodynamic limit.} 
on suitable functions $w \in  C_0(\R^d) \subset L^2(\R^d)$ as  
\begin{equation} \label{Zak transform}
( \UZ \, w)(k,y):=  \sum_{\gamma\in\Gamma_b} \eu^{-\iu k \cdot (y -\gamma)} \, (T^b_\gamma \,w)(y), \qquad y \in \R^d, \, k \in\R^{d}.
\end{equation}

From \eqref{Zak transform}, one immediately reads the (pseudo-)periodicity properties
\begin{equation}\label{Zak properties}
\begin{aligned}
T^b_\gamma \big( \UZ \, w\big) (k, y) &= \big( \UZ\, w\big)(k,y) && \mbox{for all } \gamma \in\Gamma_b\,,\\
\big( \UZ \, w\big) (k + \lambda, y) &= \eu^{-\iu \la \cdot y}\,\big( \UZ \, w \big) (k,y) && \mbox{for all }\lambda\in\Gamma_b^*\,.
\end{aligned}
\end{equation}

Following \cite{PST2003}, we reinterpret \eqref{Zak properties} in order to emphasize the role of covariance with respect to the action of the relevant symmetry group. Define the Hilbert space
$$ 
\Hf := \set{ \psi \in L^2\sub{loc}(\R^d) : T^b_\gamma \psi = \psi, \text{ for all } \gamma \in \Gamma_b} \simeq L^2(Y_b), 
$$
with scalar product given by
\[ \scal{\psi_1}{\psi_2}_{\Hf} := \int_{Y_b} \overline{\psi_1(y)} \, \psi_2(y) \, \di y, \]
where $Y_b$ is a unit cell for the lattice $\Gamma_b$. Setting
\[ \big(\tau(\lambda)\psi \big)(y) := \eu^{- \iu \lambda \cdot \,y} \psi(y), \qquad \text{for } \psi \in \Hf, \]
one obtains a unitary representation $\tau \,\colon\, \Gamma_b^* \to \U(\Hf)$ of the group of translations by vectors of the dual lattice. One can then argue that $\UZ$ establishes a unitary transformation $\UZ : L^2(\R^d) \to \Hi_\tau^b$, where $\Hi_\tau^b$ is the Hilbert space
$$
\Hi_\tau^b :=\Big\{ \phi \in L^2_{\rm loc}(\R^d, \Hf):\,\, \phi(k + \lambda) = \tau(\lambda)\,\phi(k) \; \forall \lambda \in \Gamma^{*}, \mbox{ for a.e. } k \in \R^d \Big\}
$$ 
equipped with the inner product 
\begin{equation*} 
\inner{\phi_1}{\phi_2}_{\Hi_{\tau}^b} =  \frac{1}{|\B_b|} \int_{\B_b} \inner{\phi_1(k)}{\phi_2(k)}_{\Hf} \, \di k.
\end{equation*}
Clearly, functions in $\Hi_\tau^b$ are determined by the values they attain on the fundamental unit cell $\B_b$. Moreover, the inverse transformation $\UZ^{-1} : \Hi_\tau^b \to L^2(\R^d)$ is explicitely given by 
\[ \left( \UZ^{-1} \phi \right)(x) =  \frac{1}{|\B_b|} \int_{\B_b} \di k \,\eu^{ \iu k \cdot x} \phi(k, x). \]

\subsection{Fiber Hamiltonians and their spectral properties}

Upon the identification of $\Hi_\tau^b$ with the direct integral
\[ \Hi_\tau^b \simeq \int_{\B_b}^{\oplus} \di k \: \Hf, \]
we can reach the proposed partial diagonalization of the magnetic Schr\"{o}dinger Hamiltonian. Indeed, $H_{\Gamma, b}$ becomes a fibered operator in the Bloch-Floquet representation, \ie
\[ \UZ \, H_{\Gamma, b} \, \UZ^{-1} = \int_{\B_b}^\oplus \di k \,H(k), \]
where
\begin{equation} \label{eqn:H(k)}
H(k) = \half \big( -\iu \nabla_y - A(y) + k\big)^2 + V_\Gamma(y).
\end{equation}
We require that the magnetic and scalar potentials satisfy the following
\begin{assumption} \label{assum:potentials}
The magnetic potential $A \colon \R^d \to \R^d$ and the scalar potential $V_\Gamma \colon \R^d \to \R$ are such that the family of operators $H(\kappa)$, defined as in \eqref{eqn:H(k)} for $\kappa \in \C^d$, is an \emph{entire analytic family in the sense of Kato with compact resolvent} \cite{Kato, RS4}, \ie
\begin{enumerate}[label=(\roman*)]
\item the domain $\mathcal{D}(H(\kappa)) \subset \Hf$ is independent of $\kappa \in \C^d$, and
\item the set $R := \set{(\kappa, \lambda) \in \C^d \times \C : \lambda \in \rho(H(\kappa))}$ is open and the resolvent map $R \ni (\kappa, \lambda) \mapsto (H(\kappa) - \lambda \Id)^{-1} \in \mathcal{B}(\Hf)$ is analytic on $R$, with values in the algebra of compact operators on $\Hf$.
\end{enumerate}
The common domain is denoted hereafter by $\Df \subset \Hf$. 
\endrmk
\end{assumption}

\begin{rmk} \label{rmk:Concrete_Assumption}
Possible conditions on the magnetic and scalar potentials that guarantee the validity of  Assumption \ref{assum:potentials} in physical dimensions $2 \le d \le 3$ are the following. If $A = A_\Gamma$ is $\Gamma$-periodic, with fundamental unit cell $Y$, then it is sufficient to assume either:
\begin{enumerate}[label=(\Alph*), ref=(\Alph*)]
 \item \label{item:typeA} $A \in L^\infty(Y;\R^2)$ when $d=2$ or $A \in L^4(Y;\R^3)$ when $d=3$, and $\mathrm{div}\, A, V_\Gamma \in L^2\sub{loc}(\R^d)$ when $2 \le d \le 3$;
 \item \label{item:typeB} $A \in L^r(Y;\R^2)$ with $r>2$ and $V_\Gamma \in L^p(Y)$ with $p>1$ when $d=2$, or $A \in L^3(Y;\R^3)$ and $V_\Gamma \in L^{3/2}(Y)$ when $d=3$.
\end{enumerate}
Indeed, under hypothesis \ref{item:typeA} (respectively \ref{item:typeB}) the operators $A \cdot \nabla$, $|A|^2$, $\mathrm{div}\, A$ and $V_\Gamma$ are all infinitesimally bounded%
\footnote{The only slightly non-trivial statement among the above is the fact that the operator $A \cdot \nabla$ is infinitesimally bounded with respect to $-\Delta$ on $\Hf$ when $d=3$ and $A \in L^4(Y;\R^3)$. The proof goes as follows. First of all we have trivially that if $\varphi$ is in an appropriate dense subspace of $\Hf$ given by smooth functions
\begin{equation} \label{eqn:AdotNabla}
\norm{A \cdot \nabla \varphi} \le \sum_{j=1}^{3} \norm{A}_{L^4} \, \norm{\partial_j \varphi}_{L^4}
\end{equation}
where $\partial_j \equiv \partial / \partial y_j$. Now by Sobolev embedding
\[ \norm{\partial_j \varphi}_{L^4}^4 \le \norm{\partial_j \varphi}_{L^2} \, \norm{\partial_j \varphi}_{L^6}^3 \le C' \norm{\partial_j \varphi}_{L^2} \, \norm{\nabla \partial_j \varphi}_{L^2}^3  \le C \norm{\nabla \varphi}_{L^2} \, \norm{-\Delta \varphi}_{L^2}^3 \]
for some positive constants $C', C > 0$. We infer then that for any positive $\eps > 0$ 
\[ \norm{\partial_j \varphi}_{L^4} \le C \norm{\nabla \varphi}_{L^2}^{1/4} \, \norm{-\Delta \varphi}_{L^2}^{3/4} \le C \left[\eps \, \norm{-\Delta \varphi}_{L^2} + \frac{1}{4\eps} \, \norm{\nabla \varphi}_{L^2} \right]. \]
Since $\nabla$ is infinitesimally bounded with respect to $-\Delta$, for any positive $a > 0$ there exists $b(a) > 0$ such that
\[ \norm{\partial_j \varphi}_{L^4} \le C \left[ \left( \eps + \frac{a}{4\eps}  \right) \norm{-\Delta \varphi}_{L^2} + \frac{b(a)}{4\eps} \norm{\varphi}_{L^2} \right]. \]
Plugging the above inequality in \eqref{eqn:AdotNabla}, by the arbitrariness of $\eps$ and $a$ we deduce the required result.} %
(respectively form-bounded) with respect to $-\Delta$, and the family of operators $H(\kappa)$ is an analytic family of type A (respectively of type B) on suitable ($\kappa$-independent) domains in $\Hf$ (see  \cite{BirmanSuslina} for a proof of the statements regaring assumption \ref{item:typeB}). Both these conditions imply that $H(\kappa)$ is an analytic family in the sense of Kato with compact resolvent \cite[\S~XII.2]{RS4}.

If instead $A = A_b$ is the magnetic potential for a uniform magnetic field, then whenever $V_\Gamma$ is infinitesimally form-bounded with respect to $-\Delta$ in $\Hf$, the diamagnetic inequality \cite[Thm.~7.21]{LiebLoss} implies that $V_{\Gamma}$ is infinitesimally form-bounded  with respect to the magnetic Laplacian $-\Delta_A := (-\iu \nabla_y - A(y))^2$ \cite[Prop.~1]{Kato72}. The domain of the magnetic Laplacian is contained in the one of the magnetic momentum, namely we have an inclusion of magnetic Sobolev spaces $H^2_A \subset H^1_A$, where
\begin{gather*}
H^1_A(\Omega) := \set{\psi \in H^1\sub{loc}(\Omega) : \psi \in L^2(\Omega), \: (-\iu \nabla_x - A_\Gamma(x)) \psi \in L^2(\Omega;\R^d) }, \\
H^2_A(\Omega) := \set{\psi \in H^1_A(\Omega) : (-\iu \nabla_x - A_\Gamma(x))^2 \psi \in L^2(\Omega) }, \quad \Omega \subseteq \R^d.
\end{gather*}
As a consequence, the operator $-\iu \nabla_y - A(y)$ is also infinitesimally form-bounded with respect to $-\Delta_A$; it follows from \cite[Chap.~XII, Problem~11]{RS4} that $-\iu \nabla_y - A(y)$ is infinitesimally form-bounded with respect to $-\Delta_A + V_\Gamma$. Consequently, in view of \cite[pg.~20]{RS4} the family of operators $L(\kappa)= -\Delta_A + 2 \kappa \cdot (-\iu \nabla_y - A(y)) + V_\Gamma$ is analytic of type B with compact resolvent, and hence $H(\kappa) = L(\kappa) + \kappa^2 \Id$ satisfies the conditions required in Assumption \ref{assum:potentials}.
\end{rmk}

Under Assumption \ref{assum:potentials}, the fiber operator $H(k)$, $k \in \R^d$, acts on a \emph{$k$-independent} domain $\Df \subset \Hf$, where it defines a self-adjoint operator. Moreover, the compactness of the resolvent implies that the spectrum of $H(k)$ is pure point: we label its eigenvalues as $E_0(k) \le E_1(k) \le \cdots \le E_n(k) \le E_{n+1}(k) \le \cdots$, counting multiplicities. The functions $\R^d \ni k \mapsto E_n(k) \in \R$ are called (\emph{magnetic}) \emph{Bloch bands}: these functions are $\Gamma^*_b$-periodic in view of the property of \emph{$\tau$-covariance} of the fiber Hamiltonian $H(k)$, namely
\[ H(k + \lambda) = \tau(\lambda) \, H(k) \, \tau(\lambda)^{-1}, \qquad \lambda \in \Gamma^*_b. \]

A solution $u_n(k)$ to the eigenvalue problem
\[ H(k) u_n(k) = E_n(k) u_n(k), \qquad u_n(k) \in \Hf, \qquad \norm{u_n(k)}_{\Hf} = 1, \]
constitutes the (periodic part of the) \emph{$n$-th magnetic Bloch function}, in the physics terminology. Assuming that, for fixed $n \in \N$, the eigenvalue $E_n(k)$ is non-degenerate for all $k \in \R^d$, the function $u_n: y \mapsto u_n(k,y)$ is determined up to the choice of a $k$-dependent phase, called the \emph{Bloch gauge}.

\subsection{(Composite) Wannier functions and localization}
\label{Sec:CWFs}

One can read properties of localization of the particle moving in the crystal from the Bloch functions, by going back to the position representation. To do so, one considers the rate of decay at infinity of the \crucial{Wannier function}  $w_n$ corresponding to the Bloch function $u_n \in \Hi_{\tau}^b$, defined as the preimage, via magnetic Bloch-Floquet transform, of the Bloch function, \ie
\begin{equation} \label{Wannier}
w_n(x) := \left( \UZ^{-1} u_n \right)(x) = \frac{1}{|\B_b|} \int_{\B_b} \di k \,\eu^{ \iu k \cdot x} u_n(k, x).
\end{equation}
One easily checks that localization of the Wannier function $w = w_n$ and smoothness of the associated Bloch function $u = u_n$ are related in the following way:
\begin{equation} \label{Zak equivalence}
\langle x \rangle^s w \in L^2(\R^d), \ s \in \N \Longleftrightarrow u \in \Hi_\tau^b \cap H^s_{\rm loc}(\R^d, \Hf),
\end{equation}
where we used the Japanese bracket notation $\langle x \rangle = (1 + |x^2|)^{1/2}$. A generalization of this relation to fractional $s \ge 0$, needed in the following, is proved in Appendix \ref{app:Stein}. Moreover, one can link analyticity of the Bloch function with exponential localization of the Wannier functions, in the sense that for $\alpha > 0$
\begin{equation} \label{Exp loc}
\eu^{\beta |x|} \, w \in L^2(\R^d), \: \beta \in [0,\alpha) \Longleftrightarrow u \in \Hi_\tau^b \cap C^\omega(\Omega_\alpha, \Hf),
\end{equation}
where $\Omega_\alpha := \{ \kappa \in \C^d : |\mathrm{Im} \, \kappa| < \alpha \}$.

\medskip

The non-degeneracy of a particular energy band is not generic in real solids, where usually Bloch bands intersect each other. It then becomes necessary to set up a multi-band theory and adapt the above statements accordingly. \newline
Select a family of $m$ physically relevant Bloch bands; a customary choice in the treatment of insulators and semiconductors is given \eg by the set of all bands below the Fermi energy. We denote this family by $\sigma_*(k)  = \set{E_{i}(k): n \leq i \leq n+m-1}$, $k \in \B_b$. The crucial hypothesis is that these bands satisfy a \emph{gap condition}, stating that they are well isolated from the rest of the spectrum of the fibre Hamiltonian:
\begin{equation}\label{Gap condition}
\inf_{k \in \B_b} \mathrm{dist}\Big( \sigma_*(k), \sigma(H(k)) \setminus \sigma_*(k) \Big) > 0.
\end{equation}
The relevant object to consider under this condition is then the \emph{spectral projector} $P_*(k)$ on the set $\sigma_*(k)$, which in the physics notation  reads 
$$
P_*(k) = \sum_{n \in \mathcal{I}_*} \ket{u_n(k)}\bra{u_n(k)}, 
$$
where the sum runs over all the bands in the relevant family, \ie over the set $ \mathcal{I}_* = \set{n, n + 1, \ldots, n+m-1}$. An alternative definition for $P_*(k)$ is given via the Riesz integral
\[ P_*(k) = \frac{\iu}{2 \pi} \oint_{\mathcal{C}} \di z \, (H(k)-z\Id_{\Hf})^{-1}, \]
where $\mathcal{C}$ is a positively-orientied contour in the complex energy plane, fully contained in the resolvent set of $H(k)$ and enclosing the relevant portion $\sigma_*(k)$ of its spectrum; in view of the gap condition, $\mathcal{C}$ can be chosen to be locally constant in $k$. As proved \eg in \cite[Prop.\ 2.1]{PanatiPisante}, elaborating on a longstanding tradition of related results \cite{RS4, Ne91}, the projector $P_*(k)$ satisfies the properties listed in the following Proposition.

\begin{prop} \label{Prop P properties}
Let $P_*(k) \in \mathcal{B}(\Hf)$ be the spectral projector of $H(k)$ corresponding to the set $\sigma_*(k) \subset \R$. Assume that $\sigma_{*}$ satisfies the gap condition \eqref{Gap condition}. Then the family $\set{P_*(k)}_{k \in \R^d}$ has the following properties:
\begin{enumerate}[label=$(\mathrm{p}_{\arabic*})$, ref=$(\mathrm{p}_{\arabic*})$]
\item the map $k \mapsto P_*(k)$ is analytic from $\R^d$ to $\mathcal{B}(\Hf)$ (equipped with the operator norm);
\item the map $k \mapsto P_*(k)$ is $\tau$-covariant, \ie
\[ P_*(k + \lambda) = \tau(\lambda) \, P_*(k) \, \tau(\lambda)^{-1}  \qquad \forall k \in \R^d, \quad \forall \lambda \in \Gamma^{*}_b. \]
\end{enumerate}
\end{prop}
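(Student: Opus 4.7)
The plan is to base both claims on the Riesz integral representation
\[
P_*(k) = \frac{\iu}{2\pi}\oint_{\mathcal{C}} (H(k)-z\Id_{\Hf})^{-1}\,\di z
\]
already introduced in the excerpt, and to transfer properties from the resolvent to $P_*$ via Assumption~\ref{assum:potentials} and the $\tau$-covariance of $H(k)$, using the gap condition \eqref{Gap condition} to ensure a good choice of contour.

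For claim (p1), the main point is that the gap condition makes the contour choice stable under small perturbations of $k$. I would fix $k_0\in\R^d$ and pick a positively-oriented contour $\mathcal{C}_{k_0}\subset\rho(H(k_0))$ enclosing $\sigma_*(k_0)$, at distance bounded below by some $\delta>0$ from the rest of $\sigma(H(k_0))$. By upper semicontinuity of the spectrum of a Kato-analytic family with compact resolvent (Assumption~\ref{assum:potentials}), there is an open complex neighbourhood $U$ of $k_0$ on which $\mathcal{C}_{k_0}\subset\rho(H(\kappa))$ still separates $\sigma_*(\kappa)$ from the rest of $\sigma(H(\kappa))$. On $U$, the integrand $(H(\kappa)-z\Id_{\Hf})^{-1}$ is jointly analytic in $(\kappa,z)$ by Assumption~\ref{assum:potentials}, so the integral along the \emph{fixed} compact contour $\mathcal{C}_{k_0}$ defines an analytic map $\kappa\mapsto P_*(\kappa)$ on $U$. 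Since $k_0$ is arbitrary, real-analyticity of $P_*$ holds on all of $\R^d$ (and in fact $P_*$ extends holomorphically to a complex strip about $\R^d$, which is useful elsewhere in the paper).

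For claim (p2), I would use that the potentials appearing in \eqref{eqn:H(k)} are $\Gamma_b$-periodic together with the defining formula for $\tau(\lambda)$, which give the operatorial identity $H(k+\lambda)=\tau(\lambda)\,H(k)\,\tau(\lambda)^{-1}$ for every $\lambda\in\Gamma^*_b$ on the common domain $\Df$. Combined with unitarity of $\tau(\lambda)$, this yields
\[
(H(k+\lambda)-z\Id_{\Hf})^{-1} = \tau(\lambda)\,(H(k)-z\Id_{\Hf})^{-1}\,\tau(\lambda)^{-1}
\]
on the common resolvent set; in particular $\sigma_*(k+\lambda)=\sigma_*(k)$, so the same contour $\mathcal{C}$ is admissible in the Riesz representation of both projectors. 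Inserting the identity above into the contour integral and factoring out the $k$-independent unitaries $\tau(\lambda),\tau(\lambda)^{-1}$ gives $P_*(k+\lambda)=\tau(\lambda)\,P_*(k)\,\tau(\lambda)^{-1}$.

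The only mildly delicate step is the local uniformity of the gap, which underlies both the possibility of choosing a locally constant contour and the joint analyticity of the resolvent on a neighbourhood of $\set{(\kappa,z):\kappa\in U,\,z\in\mathcal{C}_{k_0}}$. This is however a standard consequence of the Kato theory for analytic families with compact resolvent \cite{Kato, RS4}, so I expect the main obstacle to be bookkeeping rather than any genuinely new analytic input; related statements and further references can be found in \cite{PanatiPisante, Ne91}.
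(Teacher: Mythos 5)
Your argument is correct, and it is essentially the standard one: the paper does not actually prove Proposition~\ref{Prop P properties} in-text but defers it to the literature (``\emph{As proved e.g.\ in} \cite[Prop.~2.1]{PanatiPisante}, \emph{elaborating on a longstanding tradition of related results} \cite{RS4, Ne91}''), and the Riesz-integral argument you give --- joint analyticity of the resolvent from Assumption~\ref{assum:potentials} combined with a locally constant contour guaranteed by the gap condition for (p1), and conjugation of the resolvent by the $k$-independent unitary $\tau(\lambda)$ for (p2) --- is precisely the one used in those references.
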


Following \cite{Bl, Cl1}, in this multi-band setting one trades the notion of Bloch functions with that of \emph{quasi-Bloch functions}, which are eigenfunctions of the spectral projector. Equivalently, quasi-Bloch functions are defined as those $\phi \in \Hi_{\tau}^b$ such that
\[ P_*(k) \phi(k) = \phi(k), \qquad \norm{\phi(k)}_{\Hf}=1, \qquad \text{ for a.e. }  k \in \B_b. \]
A \crucial{Bloch frame} is, by definition,  a collection of quasi-Bloch functions $\Phi = (\phi_1, \ldots, \phi_m)$, constituting an orthonormal basis of $\Ran P_*(k)$ at a.e.\ $k \in \B_b$ (compare Definition~\ref{dfn:Bloch}). In this context, a non-abelian Bloch gauge appears, since whenever $\Phi$ is a Bloch frame, then one obtains another Bloch frame $\widetilde \Phi$ by setting
$$
\widetilde \phi_a (k) = \sum_{b =1}^m \phi_b(k) \, U_{ba} (k)  \qquad \text{ for some unitary matrix } U(k).  
$$

Correspondigly, also the notion of Wannier function needs to be relaxed. After \cite{Cl2}, the conventional terminology has become that of the following

\begin{dfn}[Composite Wannier functions] \label{Def:CompositeWannierBasis}
The \crucial{composite Wannier functions}  $(w_1, \ldots, w_m) \in L^2(\R^d)^m$  associated to a Bloch frame $(\phi_1, \ldots, \phi_m) \in (\Hi_{\tau}^b)^m$ are defined as 
\[ w_a(x) := \left( \UZ^{-1} \phi_a \right)(x) = \frac{1}{|\B_b|} \int_{\B_b} \di k \,\eu^{ \iu k \cdot x} \phi_a(k, x). \]
\end{dfn}

An orthonormal basis of $ \UZ^{-1}  \Ran P_*$ is readily obtained by considering the magnetic-translated functions
$$
w_{a, \gamma} (x) := T_\gamma^b\, w_{a}(x).
$$
The set $\set{w_{a, \gamma} }_{1 \leq a \leq m, \: \gamma \in \Gamma} \subset \UZ^{-1} \Ran P_*$ is indeed an orthonormal basis, in view of the orthogonality of the trigonometric polynomials. We refer to this basis as a \crucial{composite Wannier basis}. The choice of such a basis is not unique because of the Bloch gauge freedom we discussed above; correspondingly, some of its properties (\eg localization) will in general depend on the choice of a Bloch gauge.  

\medskip

As stated in the Introduction, the existence of a composite Wannier basis consisting of well-localized Wannier functions, or equivalently, in view of \eqref{Zak equivalence}, of a Bloch frame depending smoothly on $k$, is a crucial issue in solid-state and other branches of physics. It was early realized \cite{Kohn59, Cl1, Ne91} that there may be in general a \emph{topological obstruction} to the regularity of the map $k \mapsto \phi_a(k)$ (a local issue) which is an element of $\Hi_\tau^b$, and hence satisfies some pseudo-periodicity property, namely $\tau$-equivariance (a global issue). As was already mentioned, in physical dimension $d \le 3$ this topological obstruction is encoded in the first Chern numbers \eqref{c1}  \cite{Panati2007, BrouderPanati2007, Monaco16}.

Whenever the Chern numbers vanish, as for example in the case of systems satisfying time-reversal symmetry, it is possible to find a Bloch frame depending \emph{analytically} on $k$  \cite{Panati2007,MoPa}; the corresponding composite Wannier functions will then be exponentially localized (compare \eqref{Exp loc}). On the other hand, if any of the Chern numbers is non-zero, then there cannot exists even a \emph{continuous} Bloch frame. This is the generic case in presence of a magnetic field, which breaks time-reversal symmetry. One should however notice that for small magnetic field a composite Wannier basis consisting of exponentially localized CWFs may still exist \cite{CoHeNe2015}, in view of the stability of the resolvent set and of the resolvent operator which enter in the Riesz integral computing the spectral projector $P_*(k)$. 

The general results presented in Section~\ref{sec:results} yield the optimal $L^2$-decay at infinity of magnetic Wannier functions also in the Chern non-trivial case. We summarize these consequences in the following statement, which is also the main result of the paper. 

\begin{thm}[Application to magnetic Schr\"{o}dinger operators] \label{Thm:DecayWannierBasis}
Assume $d \leq 3$. Consider a magnetic periodic Schr\"{o}dinger operator on $L^2(\R^d)$ in the form 
$$
H_{\Gamma, b} = \half \left( - \iu \nabla + A \right)^2 + V_{\Gamma},
$$ with $V_{\Gamma}$  and $A = A_\Gamma + A_b$ as in Assumption \ref{assum:potentials}. \newline    
Let  $\mathcal{P}_* =\set{P_*(k)}_{k \in \R^d}$  be the family of spectral projectors corresponding to a set of $m$ Bloch bands satisfying the gap condition \eqref{Gap condition}. Then one can construct an orthonormal basis $\set{w_{a, \gamma}}_{1 \leq a \leq  m, \: \gamma \in \Gamma}$ of $\UZ^{-1} \Ran P_{*}$  consisting  of composite Wannier functions, such that each function $w_{a, \gamma}$ satisfies
\begin{equation} \label{treshold}
\int_{\R^d}  \langle x \rangle^{2s} |w_{a, \gamma}(x)|^2 \di x < + \infty  \quad \text{for every } s < 1.
\end{equation}

Moreover, the following statements are equivalent:
\begin{enumerate}[label={\rm (\alph*)},ref={\rm (\alph*)}]
 \item \label{item:(a)} \crucial{Finite second moment}: there exist composite Wannier functions $\set{w_{a, \gamma}}$ such that 
\begin{equation} \label{overtreshold}
\int_{\R^d}  \langle x \rangle^{2} |w_{a, \gamma}(x)|^2 \di x < + \infty  
\end{equation}
for all $a \in \set{1, \ldots, m}$ and $\gamma \in \Gamma$;
 \item \crucial{Exponential localization}: there exist composite Wannier functions $\set{w_{a, \gamma}}$ and $\alpha > 0$ such that 
\[
\int_{\R^d}  \eu^{2 \beta |x|} |w_{a, \gamma}(x)|^2 \di x < + \infty  
\]
for all $a \in \set{1, \ldots, m}$, $\gamma \in \Gamma$ and $\beta \in [0, \alpha)$;
 \item \crucial{Trivial topology}: the family $\mathcal{P}_*$  is Chern trivial, in the sense of Definition~\ref{Def:Chern}.
\end{enumerate}
In case \ref{item:(a)} holds, then there exist a sequence $\set{w^{(n)}}$ of systems of \emph{exponentially localized} CWFs such that $w^{(n)}_{a,\gamma} \to w_{a,\gamma}$ in $L^2(\R^d, \langle x \rangle^2 \di x)$ as $n \to \infty$, for all $a \in \set{1,\ldots, m}$ and uniformly in $\gamma \in \Gamma$.
\end{thm}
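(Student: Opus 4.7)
The plan is to reduce Theorem~\ref{Thm:DecayWannierBasis} to an application of the general Theorems~\ref{Thm:SobolevBlochFrames} and \ref{Thm:NoSobolevBlochFrames} by means of the dictionary between Bloch frames (on the momentum side) and composite Wannier functions (on the position side) provided by the magnetic Bloch--Floquet transform $\UZ$. The crucial bridge is the equivalence \eqref{Zak equivalence}, together with its extension to fractional Sobolev exponents announced in Appendix~\ref{app:Stein}, which in the form we need states: for $s \ge 0$, a $\tau$-equivariant section $\phi \in \Hi_\tau^b$ belongs to $H^s\sub{loc}(\R^d,\Hf)$ if and only if its Bloch--Floquet anti-transform $w = \UZ^{-1}\phi$ satisfies $\langle x \rangle^s w \in L^2(\R^d)$. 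Analogously, analytic extendibility of $\phi$ to a strip $\Omega_\alpha$ corresponds to the exponential localization \eqref{Exp loc} of $w$.

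First, I would check that the family $\mathcal{P}_* = \{P_*(k)\}_{k \in \R^d}$ satisfies Assumption~\ref{Ass:projectors tau}: analyticity \ref{item:smooth_tau} and $\tau$-covariance \ref{item:tau} are exactly the content of Proposition~\ref{Prop P properties}, which in turn uses Assumption~\ref{assum:potentials} via the Riesz integral representation. The rank $m$ is finite by construction. This makes the general results directly applicable.

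For the first claim \eqref{treshold}, I would invoke Theorem~\ref{Thm:SobolevBlochFrames} to obtain a global, $\tau$-equivariant Bloch frame $\Phi = (\phi_1,\ldots,\phi_m)$ which is $H^s\sub{loc}$-regular for every $s<1$. Defining $w_a := \UZ^{-1}\phi_a$ and $w_{a,\gamma} := T_\gamma^b w_a$, the fractional Bloch--Floquet dictionary of Appendix~\ref{app:Stein} converts the $H^s$-regularity of each $\phi_a$ into the weighted $L^2$ estimate \eqref{treshold} for each $w_a$, and hence for each $w_{a,\gamma}$ by unitarity of $T_\gamma^b$. Orthonormality of the basis $\{w_{a,\gamma}\}$ follows from the fact that $\UZ$ is unitary and that $\Phi(k)$ is an orthonormal frame of $\Ran P_*(k)$ at a.e.\ $k$.

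For the equivalence \ref{item:(a)}$\Leftrightarrow$(b)$\Leftrightarrow$(c), I would argue cyclically. The implication (b)$\Rightarrow$\ref{item:(a)} is trivial. For \ref{item:(a)}$\Rightarrow$(c): if a CWF basis satisfying \eqref{overtreshold} exists, the integer case $s=1$ of \eqref{Zak equivalence} produces a global $\tau$-equivariant Bloch frame in $H^1\sub{loc}(\R^d,\Hf^m)$; Theorem~\ref{Thm:NoSobolevBlochFrames}(i) then forces all Chern numbers $c_1(P_*)_{ij}$ to vanish, which is precisely the Chern triviality of $\mathcal{P}_*$ in dimension $d\le 3$. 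For (c)$\Rightarrow$(b), I would use the classical result \cite{Panati2007,BrouderPanati2007,MoPa} that, when $d\le 3$, Chern triviality of a family of projectors satisfying Assumption~\ref{Ass:projectors tau} is equivalent to the existence of a global, $\tau$-equivariant and \emph{analytic} Bloch frame $\widetilde\Phi$; the characterization \eqref{Exp loc} then converts its analyticity in a complex strip into the exponential decay of the associated CWFs $\widetilde w_{a,\gamma} = T_\gamma^b \UZ^{-1}\widetilde\phi_a$.

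Finally, for the approximation statement in case \ref{item:(a)}, I would apply Theorem~\ref{Thm:NoSobolevBlochFrames}(ii) to the $H^1$-regular Bloch frame $\Phi$ associated to the given CWFs, obtaining a sequence of \emph{analytic} $\tau$-equivariant Bloch frames $\Psi^{(n)} \to \Phi$ in $H^1\sub{loc}(\R^d,\Hf^m)$. By the analyticity of each $\Psi^{(n)}$, the corresponding CWFs $w^{(n)}_{a,\gamma}$ are exponentially localized via \eqref{Exp loc}; by the $s=1$ case of the Bloch--Floquet dictionary, convergence in $H^1\sub{loc}$ of Bloch frames translates to convergence in $L^2(\R^d,\langle x\rangle^2 \di x)$ of the corresponding CWFs, uniformly in $\gamma\in\Gamma$ because $T_\gamma^b$ is an isometry also on this weighted space (up to a harmless $\gamma$-independent comparison of the weights $\langle x\rangle$ and $\langle x-\gamma\rangle$, which would be handled by a standard translation argument).

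The only genuinely non-routine point is the extension of the Bloch--Floquet dictionary \eqref{Zak equivalence} to fractional Sobolev exponents $s\in(0,1)$, which is the technical heart of the first claim \eqref{treshold}; this is precisely what Appendix~\ref{app:Stein} is designed to supply, building on Stein-type interpolation for weighted $L^2$ spaces versus fractional Sobolev spaces. Once this dictionary is in place, the rest of the argument is a direct bookkeeping translation of Theorems~\ref{Thm:SobolevBlochFrames}--\ref{Thm:NoSobolevBlochFrames} from momentum space to position space.
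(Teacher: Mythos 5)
Your proposal is correct and follows essentially the same route as the paper's proof: reduce to the general momentum-space results via Proposition~\ref{Prop P properties}, apply Theorem~\ref{Thm:SobolevBlochFrames} together with the fractional Bloch--Floquet dictionary of Proposition~\ref{prop:Stein} for the $s<1$ bound, and use \eqref{Zak equivalence}, Theorem~\ref{Thm:NoSobolevBlochFrames}, and \eqref{Exp loc} for the equivalence (a)$\Leftrightarrow$(b)$\Leftrightarrow$(c) and the approximation statement. The only cosmetic difference is that you organize the equivalence as an explicit three-step cycle, while the paper presents it more tersely; the mathematical content is the same.
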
  

\begin{proof}
In view of Proposition \ref{Prop P properties},  the family $\mathcal{P}_* =\set{P_*(k)}_{k \in \R^d}$ satisfies Assumption \ref{Ass:projectors tau}. Thus, by Theorem \ref{Thm:SobolevBlochFrames}, there exists a global $\tau$-equivariant Bloch frame $\Phi = (\phi_1,\ldots,\phi_m)$ which is $H^s$-regular for all $s<1$. In view of Proposition~\ref{prop:Stein}, the Wannier functions $w_{\gamma,a}$ associated to $\phi_a$ satisfy \eqref{treshold}.

On the other hand, if Wannier functions for $\mathcal{P}_*$ satisfying \eqref{overtreshold} exist, then by \eqref{Zak equivalence} the associated Bloch frame is $H^1$-regular. Theorem~\ref{Thm:NoSobolevBlochFrames} implies that, under this assumption, $\mathcal{P}_*$ is Chern trivial, and hence admits a global, $\tau$-equivariant Bloch frame made of analytic functions \cite{Panati2007,BrouderPanati2007}. The CWFs corresponding to the latter frame are then exponentially localized, compare \eqref{Exp loc}. Furthermore, item (ii) in Theorem~\ref{Thm:NoSobolevBlochFrames} provides an approximation of the $H^1$-regular Bloch frame by analytic frames: due to the fact that the magnetic Bloch-Floquet transform is an isometry between $L^2(\R^d, \langle x \rangle^2 \di x)$ and $\Hi_\tau^b \cap H^1\sub{loc}(\R^d, \Hi)$ (compare again \eqref{Zak equivalence} and Proposition~\ref{prop:Stein}), we deduce also the desired approximation result of the given composite Wannier functions satisfying \eqref{overtreshold} by means of exponentially localized ones.
\end{proof}

A direct consequence of the above result is that in the Chern non-trivial case, which is the generic case for systems with broken TR symmetry, the optimal decay for composite Wannier functions is the one dictated by \eqref{treshold}. This concludes the proof of the localization dichotomy sketched in the Introduction.


\section{Reduction of the problem}
\label{Sec:Reduction}

In this Section, we come back to the general setting described in Section~\ref{sec:results}. We show here how to reduce $\tau$-covariance \ref{item:tau} and $\tau$-equivariance \ref{item:F2} to mere periodicity, and how to incorporate the topology of an analytic, periodic family of projectors on a possibly infinite-dimensional Hilbert space $\Hi$ in one acting on a finite-dimensional subspace of $\Hi$.

\subsection{From $\tau$-covariance to periodicity}
\label{sec:tautoper}

To simplify the formulation of the proofs of our main results, we observe first of all that $\tau$-covariant families of projectors are actually unitarily equivalent to periodic ones. The following result appeared in \cite[Sec.~2.1]{CoHeNe2015}, to which we refer for the details of the proof.

\begin{prop}[\cite{CoHeNe2015}] \label{tau to periodic}
Let $\mathcal{P} = \set{P(k)}_{k \in \R^d}$ be a family of projectors satisfying Assumption \ref{Ass:projectors tau}. Then there exists an analytic family of unitary operators $\set{V(k)}_{k \in \R^d} \subset \U(\Hi)$ such that the family of projectors $\widetilde{\mathcal{P}}$ defined by
\begin{equation} \label{intertwine tau}
\widetilde{P}(k) := V(k) \, P(k) \, V(k)^{-1}
\end{equation}
is analytic and \emph{periodic}, namely $P(k + \lambda) = P(k)$ for all $k \in \R^d$ and $\lambda \in \Lambda$.

In particular, a global $\tau$-equivariant Bloch frame for $\mathcal{P}$ exists if and only if there exists a Bloch frame $\widetilde{\Phi} = (\widetilde{\phi}_1, \ldots, \widetilde{\phi}_m)$ for $\widetilde{\mathcal{P}}$ such that $\widetilde{\phi}_a(k+\lambda) = \widetilde{\phi}_a(k)$ for all $a \in \set{1, \ldots, m}$, $k \in \R^d$ and $\lambda \in \Lambda$.
\end{prop}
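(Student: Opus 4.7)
My plan is to construct an analytic family of unitaries $\{V(k)\}_{k\in\R^d}$ satisfying the twisted periodicity
\[ V(k+\lambda) = V(k)\,\tau_\lambda^{-1} \qquad \forall\, k \in \R^d,\; \lambda \in \Lambda, \]
so that conjugation by $V(k)$ absorbs the $\tau$-covariance of $\mathcal{P}$. Once such a $V$ is available, a short computation combining this twisted periodicity with $\tau$-covariance~\ref{item:tau} gives
\[ \widetilde P(k+\lambda) = V(k)\,\tau_\lambda^{-1}\, \tau_\lambda P(k) \tau_\lambda^{-1}\, \tau_\lambda V(k)^{-1} = V(k) P(k) V(k)^{-1} = \widetilde P(k), \]
so the family $\widetilde{\mathcal{P}}$ defined by~\eqref{intertwine tau} is periodic, and its analyticity is inherited from those of $V$ and $P$.

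To build $V(k)$ I would exploit that $\Lambda \simeq \Z^d$ is free abelian, so $\tau$ is determined by the $d$ mutually commuting unitaries $T_j := \tau_{e_j}$, where $\{e_1,\ldots,e_d\}$ is a basis of $\Lambda$. By the joint Borel functional calculus for commuting unitaries (equivalently, by the SNAG theorem extending $\tau$ to a strongly continuous representation of $\R^d$), one selects commuting self-adjoint generators $A_1,\ldots,A_d$ on $\Hi$ with $T_j = \exp(-2\pi\iu A_j)$, and defines
\[ V(k) := \exp\!\Bigl(2\pi\iu \sum_{j=1}^d x_j A_j \Bigr), \qquad k = \sum_{j=1}^d x_j\, e_j. \]
A direct computation using the commutativity of the $A_j$'s and the identity $\exp(2\pi\iu n_j A_j) = T_j^{-n_j}$ for $\lambda = \sum_j n_j e_j$ yields precisely the twisted periodicity displayed above.

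For the Bloch frame correspondence, given a global $\tau$-equivariant frame $\Phi = (\phi_1,\ldots,\phi_m)$ for $\mathcal{P}$ I would set $\widetilde\phi_a(k) := V(k)\phi_a(k)$. Then $\widetilde P(k)\widetilde\phi_a(k) = V(k)P(k)\phi_a(k) = \widetilde\phi_a(k)$, so $\widetilde\Phi$ is a Bloch frame for $\widetilde{\mathcal{P}}$, and its periodicity
\[ \widetilde\phi_a(k+\lambda) = V(k)\tau_\lambda^{-1}\,\tau_\lambda\phi_a(k) = V(k)\phi_a(k) = \widetilde\phi_a(k) \]
is immediate. The converse correspondence $\Phi = V^{-1}\widetilde\Phi$ is identical, and the regularity class of the frame is preserved in either direction since $V$ acts smoothly in $k$.

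The delicate point is establishing operator-norm analyticity of $k\mapsto V(k)$ (and hence of $\widetilde P$) when the generators $A_j$ are unbounded, since strong continuity of the induced one-parameter unitary groups does not automatically upgrade to norm analyticity. This is the step that requires the careful analysis carried out in~\cite{CoHeNe2015}, where one works directly with the conjugated projectors and exploits the analyticity assumption~\ref{item:smooth_tau} on $P$ to transfer analyticity to $\widetilde P$. In the concrete magnetic Schr\"odinger setting of Section~\ref{sec:MBF}, this construction reduces to the transparent intertwiner $(V(k)\psi)(y) = \eu^{\iu k \cdot y}\psi(y)$, recovering precisely the Bloch-Floquet change of gauge that relates $\tau$-equivariant and periodic Bloch frames.
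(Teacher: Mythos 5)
Your construction follows essentially the same route as the paper's: select commuting self-adjoint logarithms of the generators $\tau_{e_j}$, exponentiate the linear combination $\sum_j k_j M_j$ to obtain $V(k)$ with the twisted periodicity $V(k+\lambda) = V(k)\,\tau_\lambda^{-1}$, and conjugate. Your verification of the twisted periodicity, of the identity $\widetilde P(k+\lambda) = \widetilde P(k)$, and of the Bloch-frame correspondence $\widetilde\phi_a(k) = V(k)\phi_a(k)$ is correct.

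The one place where your proposal departs from the paper, and where it leaves a genuine gap, is the treatment of analyticity. You flag as a ``delicate point'' the norm-analyticity of $k\mapsto V(k)$ when the generators $A_j$ are unbounded, and defer this to~\cite{CoHeNe2015}. But there is no need to allow unbounded generators: since each $\tau_j := \tau(e_j)$ is unitary, its spectrum lies on the unit circle, and the spectral theorem (Borel functional calculus applied with the principal branch of $-\iu\log$) produces a \emph{bounded} self-adjoint $M_j$ with $\sigma(M_j)\subset(-\pi,\pi]$ and $\tau_j=\eu^{\iu M_j}$; commutativity of the $M_j$'s is inherited from that of the $\tau_j$'s via the joint spectral decomposition. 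With $\norm{M_j}\le\pi$, the map $k\mapsto V(k)=\eu^{-\iu(k_1 M_1+\cdots+k_d M_d)}$ is given by a power series converging in $\mathcal{B}(\Hi)$ for all $k\in\C^d$, hence is real-analytic in operator norm; analyticity of $\widetilde P(k)=V(k)P(k)V(k)^{-1}$ then follows from assumption~\ref{item:smooth_tau} on $P$ with no further work. Your detour through the SNAG theorem and strongly continuous one-parameter groups is what introduces the possibility of unboundedness in the first place; the paper's short proof hinges on noticing that a bounded choice of generators is always available, which makes the ``delicate point'' disappear.
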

\begin{proof}
Let $\tau_j := \tau(e_j) \in \U(\Hi)$, $j \in \set{1, \ldots, d}$, be the unitary operators associated via $\tau$ to the vectors in the basis $\set{e_1, \ldots, e_d}$ spanning $\Lambda$. Then by spectral calculus there exist self-adjoint operators $M_j = M_j^*$ on $\Hi$ with spectrum in $(- \pi , \pi]$ such that $\tau_j = \eu^{\iu \, M_j}$, $j \in \set{1, \ldots, d}$. Moreover, since the $\tau_j$'s commute among each other, the $M_j$'s can be chosen to also commute. Define then
\[ V(k) := \eu^{- \iu (k_1 \, M_1 + \cdots + k_d \, M_d)}. \]
One can then immediately verify that the family of projectors defined by \eqref{intertwine tau} is indeed periodic, in the sense specified by the statement. Notice that $M_j$ is by construction a bounded operator, hence the map $k \mapsto V(k)$ is  real-analytic. 

If $\Phi = (\phi_1, \ldots, \phi_m)$ is a $\tau$-equivariant Bloch frame for $\mathcal{P}$, then $\widetilde{\phi}_a(k) := V(k) \phi_a(k)$, $a \in \set{1,\ldots, m}$, defines a periodic Bloch frame for $\widetilde{\mathcal{P}}$, which is analytic whenever $\Phi$ is. The converse statement also holds.
\end{proof}

\begin{rmk} \label{rmk:ConcreteU}
In applications to magnetic Schr\"{o}dinger operators (compare Section~\ref{sec:MBF}), the unitary operator $V(k)$ in the above statement can be taken to be the multiplication operator times the phase $\eu^{-\iu k \cdot \{ y \}}$, where $\{y\} \in Y_b$ denotes the ``fractional part'' $y \bmod \Gamma_b$. Since $k$ is determined up to $\Gamma_b^*$, clearly $\eu^{-\iu k \cdot \{ y \}} = \eu^{-\iu k \cdot y}$. However, under this prescription the generator of $V(k)$ is given by multiplication times $\{y\}$, which is indeed a \emph{bounded} operator on $\Hf$.
\end{rmk}

In view of the above Proposition, we can modify the assumptions and properties of families of projectors and associated Bloch frames as follows.

\begin{assumption} \label{Ass:projectors}
We consider a family of orthogonal projectors $\set{P(k)}_{k \in \R^d} \subset \BH$ satisfying the following assumptions:
\begin{enumerate}[label=$(\widetilde{\mathrm{P}}_{\arabic*})$,ref=$(\widetilde{\mathrm{P}}_{\arabic*})$]
 \item \label{item:smooth} \crucial{analiticity}: the map $\R^d \ni k \mapsto P(k) \in \BH$ is real-analytic;
 \item \label{item:periodic} \crucial{periodicity}: the map $k \mapsto P(k)$ is periodic, \ie
\[ P(k+\lambda) = P(k) \quad \text{for all } k \in \R^d, \, \lambda \in \Lambda. \]
\end{enumerate} 
\end{assumption}

\begin{dfn}
Let $\mathcal{P} = \set{P(k)}_{k \in \R^d}$ be as in Assumption \ref{Ass:projectors}. A Bloch frame $(\phi_1, \ldots, \phi_m)$ for $\mathcal{P}$ is called
\begin{enumerate}[label=$(\widetilde{\mathrm{F}}_{\arabic*})$,ref=$(\widetilde{\mathrm{F}}_{\arabic*})$]
\setcounter{enumi}{1}
 \item \label{item:F2tilde} \crucial{periodic} if 
\begin{equation*} \label{per}
\phi_a(k + \lambda) = \phi_a(k) \quad \text{for all } k \in \R^d, \: \lambda \in \Lambda, \: a \in \set{1, \ldots, m}. \qedhere
\end{equation*}
\end{enumerate}
\end{dfn}

In what follows, we will then restrict our attention to periodic rather than $\tau$-covariant or $\tau$-equivariant objects.

\subsection{Reduction to a finite-dimensional Hilbert space}
\label{Sec:Finite dimension}

The following result allows to reduce the Bloch bundle $\mathcal{E} \subset \T^d \times \Hi$ associated to a family of projectors as in Assumption~\ref{Ass:projectors} to an isomorphic subbundle $\widetilde{\mathcal{E}} \subset \T^d \times V$, where $V$ is a finite-dimensional subspace of $\Hi$. A somewhat similar finite-dimensional reduction, with a different proof, appears in \cite[Lemma 2.1]{CoHeNe2015}.

\begin{lemma} \label{dimreduction}
Let $\mathcal{P} =\set{P(k)}_{k \in \T^d} \subset \BH$ be a family of orthogonal projectors satisfying Assumption \ref{Ass:projectors}, with finite rank $m \in \N^{\times}$. 

Let $\set{\mathfrak{e}_n}_{n \in \N}$ be an orthonormal basis for $\Hi$. Set $V_n := \Span_{\C}\set{\mathfrak{e}_1, \ldots, \mathfrak{e}_n} \subset \Hi$ and let $E_n$ be the orthogonal projection on the space $V_n$.  

Then, for $n$ sufficiently large, we have that $E_n \colon \Ran P(k) \to \Hi$ is injective for every $k \in \mathbb{T}^d$. As a consequence, the family   $\{ \widehat{P}_n (k) \}_{k \in \T^d} \subset \mathcal{B}(\Hi)$, defined by  
$$
\widehat{P}_n (k):=E_n  P(k),
$$ 
is a smooth family of finite-rank operators with $\dim \Ran \widehat{P}_n(k)\equiv m$, and the collection of the ranges $\{ \Ran \widehat{P}_n (k) \}_{k \in \T^d} \subset V_n$ gives a bundle $\widetilde{\mathcal{E}}_n $ associated to a family of orthogonal projectors   $\{ \widetilde{P}_n(k) \} \subset \mathcal{B}(V_n)$ satisfying Assumption \ref{Ass:projectors}. The Hermitian bundle $\widetilde{\mathcal{E}}_n $ is then isomorphic to the given Bloch bundle $\mathcal{E}$.
\end{lemma}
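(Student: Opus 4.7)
The strategy is a uniform norm approximation $\widehat{P}_n \to P$, from which injectivity of $E_n$ on $\Ran P(k)$ follows, combined with a Riesz-projection construction that yields an analytic family of orthogonal projectors $\widetilde{P}_n(k)$ onto $\Ran \widehat{P}_n(k)$; the bundle isomorphism is then immediate. The first step is therefore to prove $\sup_{k \in \T^d} \|P(k) - E_n P(k)\| \to 0$. The map $k \mapsto P(k)$ is continuous from the compact torus $\T^d$ into the space of finite-rank (hence compact) operators on $\Hi$, so its image $\mathcal{K} \subset \BH$ is norm-compact. For each fixed compact $T$, strong convergence $E_n \to \Id$ gives $\|(\Id - E_n) T\| \to 0$; combined with $\|\Id - E_n\| \leq 1$ and an $\epsilon$-net argument on $\mathcal{K}$, this upgrades to the desired uniform convergence on $\T^d$.

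Once $n$ is large enough that this supremum is $<1$, the map $E_n \colon \Ran P(k) \to \Hi$ is injective at every $k$: for $v \in \Ran P(k)$ one has $v = P(k) v$, so $E_n v = 0$ would give $\|v\| = \|(P(k) - E_n P(k)) v\| \leq \|P(k) - E_n P(k)\| \cdot \|v\| < \|v\|$, forcing $v = 0$. Hence $\Ran \widehat{P}_n(k) = E_n(\Ran P(k)) \subset V_n$ has dimension exactly $m$, and $k \mapsto \widehat{P}_n(k) = E_n P(k)$ is real-analytic and periodic since $P(k)$ is.

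To promote $\widehat{P}_n(k)$ to an honest orthogonal projector onto the same range, I would set $A_n(k) := \widehat{P}_n(k)\,\widehat{P}_n(k)^* = E_n P(k) E_n$, a self-adjoint non-negative operator whose range coincides with $\Ran \widehat{P}_n(k)$. Since $\|A_n(k) - P(k)\| \to 0$ uniformly in $k$, for large $n$ the spectrum of $A_n(k)$ concentrates in disjoint neighborhoods of $0$ and $1$ uniformly in $k$, with exactly $m$ eigenvalues clustered near $1$. A $k$-independent contour $\mathcal{C}$ enclosing only the latter then yields
\[
\widetilde{P}_n(k) := \frac{\iu}{2\pi} \oint_{\mathcal{C}} \bigl(A_n(k) - z \Id\bigr)^{-1} \di z,
\]
which is the orthogonal projector onto $\Ran \widehat{P}_n(k)$; analyticity and periodicity in $k$ pass through from $A_n(k)$ via the resolvent.

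Finally, the fiberwise-linear assignment $(k, v) \mapsto (k, E_n v)$ on $v \in \Ran P(k)$ is a smooth bundle map $\mathcal{E} \to \widetilde{\mathcal{E}}_n$ which, by the injectivity established above, is a linear isomorphism on every fiber between bundles of equal rank $m$, and hence a vector bundle isomorphism; a fiberwise polar decomposition then promotes it to a unitary (Hermitian) isomorphism. I expect the main technical hurdle to lie in the third step: the operator $\widehat{P}_n(k) = E_n P(k)$ is itself not a projector, and extracting an analytic family of \emph{orthogonal} projectors with the prescribed ranges requires the Riesz formula applied to the symmetrized operator $E_n P(k) E_n$, for which uniformity of the spectral gap near $1$ over the compact torus $\T^d$ is essential.
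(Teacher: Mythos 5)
Your proof is correct and follows the same overall blueprint as the paper's (uniform convergence $E_n P(k) \to P(k)$ on $\T^d$, injectivity of $E_n$ on each fiber $\Ran P(k)$, and reduction to $V_n$), but differs at two points worth highlighting. For the uniform convergence you use norm-compactness of the image $\set{P(k): k \in \T^d}$ of the continuous map $k \mapsto P(k)$, combined with the fact that $\norm{(\Id - E_n)T} \to 0$ for each fixed compact $T$, upgraded over the compact set of projectors by an $\epsilon$-net; the paper instead establishes compactness of the unit-sphere set $\mathcal{K} := \bigcup_k \set{\varphi : \norm{\varphi}=1,\ P(k)\varphi = \varphi}$ and invokes equicontinuity of the $E_n$'s. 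Both are standard and buy the same thing. More substantially, to produce the orthogonal projectors $\widetilde{P}_n(k)$ on $V_n$ you apply a Riesz contour integral to the self-adjoint non-negative operator $A_n(k) = E_n P(k) E_n$, and this is the careful version of a step the paper treats a bit loosely: the paper sets $\widetilde{P}(k) := E_n P(k) E_n \big|_{V_n}$ and calls this a family of orthogonal projectors, which is not literally true since $E_n P(k) E_n$ is self-adjoint and non-negative but not idempotent unless $\Ran P(k) \subset V_n$. The intended orthogonal projector onto $\Ran \widehat{P}_n(k)$ must be extracted by a procedure exactly like your Riesz integral: the uniform bound $\norm{A_n(k) - P(k)} < \epsilon$ confines $\sigma(A_n(k))$ to $[0,\epsilon) \cup (1-\epsilon,1+\epsilon)$, and since $\Ran A_n(k) = \Ran \widehat{P}_n(k) \,(= \Ran T T^* = \Ran T$ for the finite-rank $T = \widehat{P}_n(k))$ has dimension exactly $m$, all $m$ nonzero eigenvalues of $A_n(k)$ must lie near $1$, so a $k$-independent contour about $1$ yields the desired analytic periodic family of orthogonal projectors with the prescribed ranges. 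You correctly flagged this as the main technical hurdle; your version supplies the details that the paper leaves implicit.
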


\begin{proof}
First we show that the set 
\[ \mathcal{K} := \bigcup_{k \in \mathbb{T}^d} \left\{ \varphi \in \Hi : \|\varphi\|=1 \, , \, P(k) \varphi=\varphi \right\} \subset \Hi \]
is compact. Indeed, let $\left\{ \varphi^{(n)} \right\} \subset \mathcal{K}$ and let $\{ k_n\} \subset \mathbb{T}^d$ be such that $P(k_n) \varphi^{(n)} = \varphi^{(n)}$. Up to subsequences $k_n \to \bar{k}$, hence  as $n \to \infty$ one has
$$
\norm{ \varphi^{(n)} - P(\bar{k}) \varphi^{(n)} }  \leq  \norm{ P(k_n) -  P(\bar{k})  }   \norm{  \varphi^{(n)} }  \to 0,
$$
because $k \mapsto P(k)$ is continuous in the operator norm.   Since $\left\{ P(\bar{k}) \varphi^{(n)} \right\} \subset \Ran P(\bar{k})$ is bounded and the projectors have finite rank, up to subsequences $P(\bar{k}) \varphi^{(n)} \to \bar{\varphi}$ as $n\to \infty$ for some $\bar{\varphi} \in \Ran  P(\bar{k})$. Clearly, $\varphi^{(n)} \to \bar{\varphi}$ as $n \to \infty$, 
hence $\| \bar{\varphi} \|=1$. Thus, $\bar{\varphi} \in \mathcal{K}$ and $\mathcal{K}$ is compact as claimed.

Since the maps $E_n : \Hi \to \Hi$ are equicontinuous (actually $1$-Lipschitz,  since $\norm{E_n \phi} \leq \norm{\phi}$ for all $\phi \in \Hi$ and $n \in \N$), $\mathcal{K}$ is compact and $\norm{E_n \phi - \phi} \longrightarrow 0$ as $n \to \infty$ for every $\phi \in \Hi$, one easily obtains uniform convergence, namely 
\begin{equation} \label{Unif conv}
\lim_{n \to \infty} \sup_{\varphi \in \mathcal{K}} \norm{E_n \varphi - \varphi} = 0.
\end{equation}

From  \eqref{Unif conv}, for every $\epsi \in (0,1)$ and $n$ sufficiently large, we have that $\| E_n \varphi \| > 1-  \eps$ for any $\varphi \in \mathcal{K}$. Thus  the projection $E_n$ is injective on $\mathcal{K}$ for $n$ sufficiently large.  
Indeed, since $\norm{(\Id - E_n) \varphi} < \epsi$ for all $\varphi \in \mathcal{K}$ and $n$ large enough by \eqref{Unif conv}, one concludes that 
$$
1 = \norm{ \varphi}  \leq \norm{E_n \varphi}  + \norm{(\Id - E_n) \varphi}  < \norm{E_n \varphi} + \epsi.
$$
Clearly,  the family $\{ \widehat{P}_n (k) \} \subset \mathcal{B}(\Hi)$, defined by $\widehat{P}_n(k) = E_n  P(k)$, is a smooth family of finite-rank operators, with constant rank $m$. Therefore, it defines a rank-$m$ vector subbundle of the trivial Hilbert bundle $\T^d \times \Hi$, denoted by $\mathcal{E}^\prime$,  which is isomorphic to the Bloch bundle $\mathcal{E}$ in view of the injectivity proved above, the projection $E_n$ yielding by construction a bundle isomorphism.
\footnote{Indeed,  the map $A := (\Id \times E_n)$ on $\T^d \times \Hi$ is linear and invertible on the fibers (as a consequence of injectivity of $E_n$), and constant in $k \in \T^d$; hence it defines a bundle isomorphism  from $\mathcal{E}$ to $\mathcal{E}^\prime$. (Smoothness of the inverse map follows easily by using trivializing charts.)} 

\medskip

The next goal is to prove that that the family $\{ \widehat{P}_n (k) \}$ can be restricted to $V_n$, in the sense that for $n$ large enough
\begin{equation} \label{V_n restriction}
\big(E_n P(k) \big)(V_n) =  \big(E_n  P(k) \big)(\Hi)   \qquad \qquad  \mbox{for all } k \in \T^d. 
\end{equation} 

One notices that, in view of \eqref{Unif conv}, the set $\set{ E_n P(k_*) E_n \psi_a}_{1 \leq a \leq m}$ is a linear frame of  $\Ran E_n P(k_*)$ whenever  $\set{ \psi_a}_{1 \leq a \leq m}$ is a linear frame of $\Ran P(k_*)$.  Indeed, one has
\begin{align*}
\norm{E_n P(k_*) E_n \psi_a - \psi_a} & \leq \norm{ E_n P(k_*) E_n \psi_a  - P(k_*) E_n \psi_a } \\
 & + \norm{  P(k_*) E_n \psi_a  - \psi_a }  < 2 \epsi. 
\end{align*}
Hence, the Gram matrix corresponding to $\set{ E_n P(k_*) E_n \psi_a}_{1 \leq a \leq m}$ is close to the identity matrix $\id_m$,  proving that the former is a linear basis of $\Ran E_n P(k_*)$.

In view of \eqref{V_n restriction},  we can restrict the family $\{ \widehat{P}_n (k) \}$ to $V_n$  by setting
\[ \widetilde{P}(k) := E_n P(k) E_n \big|_{V_n}.\]
This procedure yields a smooth family of orthogonal projectors $\{ \widetilde{P}_n (k) \} \subset \mathcal{B}(V_n)$,  with associated bundle  $\widetilde{\mathcal{E}}_n \subset \T^d \times V_n$. By construction $\widetilde{\mathcal{E}}_n = {\mathcal{E}}^\prime$, hence $\widetilde{\mathcal{E}}_n$  is isomorphic to the bundle ${\mathcal{E}}$.  This concludes the proof. 
\end{proof}


\section{\texorpdfstring{Proof of Theorem \ref{Thm:SobolevBlochFrames}}{Proof of Main Theorem 1}} \label{sec:Proofs}

In this Section, we prove the periodic analogue of Theorem~\ref{Thm:SobolevBlochFrames}, namely the following statement. The two theorems are equivalent in view of Proposition~\ref{tau to periodic}.

\begin{thm} \label{Thm:PeriodicSobolevBlochFrames}
Assume $d \leq 3$. Let $\mathcal{P} =\set{P(k)}_{k \in \R^d}$ be a family of orthogonal projectors satisfying Assumption \ref{Ass:projectors},  with finite rank $m \in \N^{\times}$. Then there exists a global periodic Bloch frame for $\mathcal{P}$ which is $H^s$-regular for all $s < 1$.
\end{thm}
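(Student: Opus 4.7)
The plan is to construct an explicit $\tau$-equivariant (equivalently, by Proposition~\ref{tau to periodic}, periodic) Bloch frame via a CW-cell-by-cell parallel transport argument on the torus $\T^d = \R^d/\Lambda$, with all topological obstructions concentrated on a set of codimension $\ge 2$, and then to verify $H^s$-regularity for $s<1$ by controlling the gradient blow-up near the singular set. As a preliminary simplification I would invoke Lemma~\ref{dimreduction} to replace $\Hi$ by a finite-dimensional subspace $V_n$, so that the whole argument takes place in finite dimensions (the bundle $\widetilde{\mathcal{E}}_n$ being isomorphic to $\mathcal{E}$); this avoids technicalities with compactness and spectral calculus on $\Hi$ and makes the Kato--Nagy parallel transport manifestly well-defined and analytic on curves avoiding the singular locus.

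Next, I would identify $\T^d$ with the quotient of $\B$ under the usual boundary identifications and treat $\B$ as a CW-complex with a single vertex, $d$ edges, $\binom{d}{2}$ two-faces, etc. Picking an arbitrary orthonormal basis of $\Ran P(0)$ at the vertex, I would extend analytically along each edge by Kato's parallel transport $\dot U = [\dot P,P]U$, $U(0)=\Id$, which preserves $\Ran P(\gamma(t))$ and is real-analytic on analytic curves. At each 1-cell, the two endpoints are identified in $\T^d$ and the transported frame differs from the original by a unitary $M\in U(m)$; connectedness of $U(m)$ lets me precompose by a smooth path $U(m)\ni W(t)$ from $\Id$ to $M^{-1}$, producing a continuous periodic Bloch frame on the $1$-skeleton. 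On each $2$-face one now has a continuous frame on the boundary circle; after trivializing $\mathcal{E}$ over the contractible face, this boundary data is a loop in $U(m)$ with a well-defined winding class in $\pi_1(U(m))\cong\Z$ (which in fact equals the associated Chern number $c_1(P)_{ij}$). Whether or not this class vanishes, one can always extend the frame to the interior of the face with \emph{at most one point singularity} $k_0$ by interpolating radially and absorbing the winding in a factor of the form $W(\theta)$ depending only on the angular coordinate $\theta$ around $k_0$; this step is the standard reduction of the obstruction to a ``vortex'' of codimension $2$. For $d=3$, one proceeds from the 2-skeleton to the 3-cell by a further radial interpolation in the remaining direction, which turns each $2$-face singularity into a segment (or line, after identifications) inside the 3-cell, again of codimension $2$. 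Analyticity of $P$ lets one take the local extensions to be smooth (in fact analytic) off the singular set.

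I would then quantify the behaviour of the constructed frame $\Phi=(\phi_1,\dots,\phi_m)$ near the singular set $S\subset\T^d$ (a finite union of points when $d=2$, of closed curves when $d=3$). By construction, on a tubular neighbourhood of $S$ the frame has the model form $\phi_a(k) = V(k)\,\chi_a(\theta(k))$ with $V$ smooth and $\chi_a$ smooth in the transverse angular variable $\theta$, so that
\[
|\nabla\phi_a(k)|\ \le\ \frac{C}{\mathrm{dist}(k,S)}\qquad\text{for }k\text{ near }S,
\]
while $\phi_a$ itself is bounded. Since $S$ has codimension $2$ in $\R^d$, the model singular function $k\mapsto k/|k|$ (or its obvious analog around a line in $\R^3$) belongs to $H^s_{\mathrm{loc}}$ for every $s<1$ and to no $H^1_{\mathrm{loc}}$: using either the Gagliardo seminorm
\[
[\phi_a]_{H^s}^2\ =\ \int\!\!\int\frac{|\phi_a(k)-\phi_a(k')|^2}{|k-k'|^{d+2s}}\,\di k\,\di k'
\]
on a neighbourhood of $S$, or complex interpolation between $L^2$ and the estimate $|\nabla\phi_a|\le C/\mathrm{dist}(\cdot,S)$ (which gives a sharp-loss control), one checks that $\phi_a\in H^s_{\mathrm{loc}}(\R^d,V_n)$ for every $s<1$. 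Away from $S$ the frame is real-analytic, so no extra work is needed.

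The main obstacle I expect is the middle step: constructing the extension across each $2$-face (and the $3$-cell when $d=3$) so that (i) the resulting frame is globally continuous and periodic, (ii) the singular set has codimension $2$, and (iii) the precise bound $|\nabla\phi_a|\le C/\mathrm{dist}(\cdot,S)$ holds. Item (iii) is what makes the final Sobolev estimate sharp, and it forces a careful choice of the radial/transverse interpolation (a naive homotopy between boundary data would give frames only in $H^s$ for some $s<1/2$ or worse); working with Kato transport on a one-parameter family of radii and absorbing the winding in a purely angular unitary factor seems to be the cleanest way to enforce the $1/r$-growth of the gradient, as opposed to an unnecessarily singular one. Everything else -- reduction to finite dimension, parallel transport along edges, matching at the vertex, and passing from $H^s$-regularity on the 2-faces to the 3-cell in $d=3$ -- is a controlled assembly once this model singularity is in place.
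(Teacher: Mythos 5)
Your proposal follows essentially the same strategy as the paper: build a periodic frame cell-by-cell on the skeleta of the unit cell $\B$ using Kato parallel transport, concentrate the topological obstruction on a codimension-$2$ singular set (a point in $d=2$, lines in $d=3$), control the gradient blow-up by $C/\mathrm{dist}(\cdot,S)$, and deduce $H^s$-regularity for every $s<1$. The paper realizes the vortex by radially parallel-transporting the boundary frame inward to the origin (so the $1/r$ factor arises from the angular derivative $\tfrac{1}{r}\partial_\varphi$ in polar/spherical coordinates), whereas you absorb the winding into a purely angular unitary $W(\theta)$ on a fixed trivialization — an equivalent device giving the same $1/r$ bound; likewise your $H^s$-check via the Gagliardo seminorm (or interpolation) is a valid alternative to the paper's $W^{1,p}\hookrightarrow H^s$ embedding via Hausdorff--Young. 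One caution on your preliminary step: the finite-dimensional reduction of Lemma~\ref{dimreduction} is not needed here (the parallel transport of Lemma~\ref{lemma tau equiv varphi} is perfectly well-defined on the possibly infinite-dimensional $\Hi$, and the final $H^s$ estimate is carried out componentwise in an orthonormal basis of $\Hi$), and if you do invoke it you must remember that the fiber map $E_n\colon\Ran P(k)\to V_n$ is injective but \emph{not} isometric, so an orthonormal frame constructed for $\widetilde P_n(k)$ pulls back only to a linear frame for $P(k)$ and requires a Gram--Schmidt/L\"owdin step (which preserves $H^s$-regularity) before it becomes a genuine Bloch frame.
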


The proof is constructive and is detailed in the following subsections.

\subsection{Construction on the 1-skeleton} \label{sec:1-skeleton}

As a preparatory step, which will be used in the following, we recall how to construct a global analytic periodic Bloch frame subordinated to a $1$-dimensional family of projectors as in Assumption~\ref{Ass:projectors}.

To this end, we will need the notion of \emph{parallel transport associated to the Berry connection}. Its main properties are summarized in the following Lemma (cf.\ \cite[Lemma~4.1]{FreundTeufel} and \cite[Lemma~2.9]{CoHeNe2015}).

\begin{lemma}[Properties of parallel transport]\label{lemma tau equiv varphi} 
Let $\set{P(k)}_{k \in \R^d}$ be as in Assumption~\ref{Ass:projectors}. On the trivial bundle $ \R^d\times \Hi$ the Berry connection 
\begin{equation} \label{eqn:BerryConn}
\nabla^{\mathrm{B}}_k:=P (k) \,\circ \, \nabla_k \, \circ \, P (k)  + P ^\perp(k) \, \circ \, \nabla_k \, \circ \, P ^\perp (k),\quad P ^\perp(k):= \Id_\Hi - P(k)
\end{equation}
is a metric connection.

\noindent For arbitrary $x, y\in \R^{d}$ let $t^{\mathrm{B}} (x,y)$ be the parallel transport with respect to the Berry connection along the straight line from $y$ to $x$. Namely,  $t^{\mathrm{B}} (x,y)$ is defined as the operator $t_{x,y}(1) \in  \mathcal{B}(\Hi)$ where $s \mapsto t_{x,y}(s)$ is the solution to the operator-valued differential equation
\begin{equation} \label{ODE_par_transp} 
\frac{\di}{\di s} t_{x,y}(s) = - \underbrace{\left[ \frac{\di}{\di s}P(x(s)), P(x(s)) \right]}_{=:A(s;x,y)} \, t_{x,y}(s), \quad x(s) := y + s (x-y),
\end{equation}
satisfying $t_{x,y}(0) = \Id_\Hi$.

\noindent Then $t^{\mathrm{B}} (x,y)\in \mathcal{B}(\Hi)$ is unitary, depends smoothly jointly on $x$ and $y$, satisfies
\begin{equation}\label{berryconP}
t^{\mathrm{B}} (x,y) = P(x)t^{\mathrm{B}} (x,y)P(y) + P^\perp(x)t^{\mathrm{B}} (x,y)P^\perp(y)\,,
\end{equation}
and is periodic, \ie
\begin{equation}\label{tau-equivarianz tB}
t^{\mathrm{B}}(x-\lambda,y-\lambda) = t^{\mathrm{B}}(x,y) \quad \text{for all }\lambda \in \Lambda.
\end{equation}
Moreover, if $x$, $y$ and $z$ are aligned then the group property
\begin{equation} \label{group prop}
t^{\mathrm{B}} (x,y) \, t^{\mathrm{B}} (y,z) = t^{\mathrm{B}} (x,z)
\end{equation}
holds.
\end{lemma}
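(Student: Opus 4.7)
Plan: The proof reduces to standard Banach-space ODE theory combined with a single algebraic identity, and I would organize it in three blocks.

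First, I would establish existence, joint smoothness, and unitarity simultaneously. The generator $A(s;x,y) = [\partial_s P(x(s)), P(x(s))]$ is bounded on $\Hi$ and depends analytically on $(s,x,y)$, thanks to Assumption \ref{item:smooth} and the affine nature of $x(s) = y + s(x-y)$. Picard--Lindel\"of in the Banach space $\mathcal{B}(\Hi)$ then produces a unique real-analytic solution $t_{x,y}(s)$ of \eqref{ODE_par_transp} for every $s \in \R$, which gives the joint smoothness claim for $t^B(x,y) = t_{x,y}(1)$. Unitarity follows from $A^* = [\dot P, P]^* = [P, \dot P] = -A$ (since $P$, and hence $\dot P$, is self-adjoint), via the standard calculation $\frac{d}{ds}(t^* t) = -t^*(A^* + A) t = 0$; invertibility of $t_{x,y}(s)$, obtained by integrating the bounded-generator ODE backwards in $s$, upgrades $t^*t = \Id$ to $t_{x,y}(s)^{-1} = t_{x,y}(s)^*$.

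Second, and most substantively, I would prove the block-diagonal structure \eqref{berryconP}. The engine is the algebraic identity $[A, P] = \pm \dot P$ (the sign depending on the chosen orientation convention in \eqref{ODE_par_transp}), which I would derive by differentiating $P^2 = P$ to get $\dot P = P \dot P + \dot P P$ and then $P \dot P P = 0$ (with the symmetric statement $P^\perp \dot P P^\perp = 0$), and plugging these into the double commutator $[[\dot P, P], P]$. Armed with this identity, the operator $R(s) := P(x(s))\, t_{x,y}(s) - t_{x,y}(s)\, P(y)$ is found by a short calculation to satisfy a homogeneous linear ODE $\dot R = \mp A R$ with $R(0) = 0$, so uniqueness forces $R \equiv 0$, i.e.\ the intertwining $P(x)\, t^B(x,y) = t^B(x,y)\, P(y)$ at $s = 1$. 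This intertwining immediately forces both $P^\perp(x)\, t^B(x,y)\, P(y) = 0$ and $P(x)\, t^B(x,y)\, P^\perp(y) = 0$; expanding $t^B(x,y)$ against $\Id = P(x) + P^\perp(x)$ on the left and $\Id = P(y) + P^\perp(y)$ on the right then yields \eqref{berryconP}.

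Third, I would deduce the periodicity \eqref{tau-equivarianz tB} and the group property \eqref{group prop}. Periodicity is immediate: Assumption \ref{item:periodic} makes both $P$ and $\dot P$ invariant under $k \mapsto k - \lambda$, so $t_{x-\lambda, y-\lambda}(s)$ and $t_{x,y}(s)$ solve the same ODE with the same initial datum $\Id$ and hence coincide by uniqueness. For the group property, when $y = z + \sigma(x-z)$ with $\sigma \in (0,1)$, the segment from $z$ to $x$ parametrized by $x(s) = z + s(x-z)$ splits at $s = \sigma$ into two pieces which, after affine rescaling of $s$ (under which \eqref{ODE_par_transp} is manifestly invariant, because the commutator term and $ds$ rescale consistently), coincide with the parametrizations of the segments from $z$ to $y$ and from $y$ to $x$ respectively; the concatenation property of ODE flows then produces $t^B(x,z) = t^B(x,y)\, t^B(y,z)$. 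Collinear configurations with $y$ outside $[z,x]$ reduce to this case by relabelling the endpoints of the straight segment that carries parallel transport.

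The only non-routine ingredient is the commutator identity $[A, P] = \pm \dot P$; once this is derived from $P \dot P P = 0$, every other assertion is a standard consequence of uniqueness for linear Banach-space ODEs combined with the evident invariances of the generator under lattice translation and under affine reparametrization of the straight segment.
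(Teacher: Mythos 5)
Your proof takes a genuinely different route from the paper's: the paper simply refers the reader to \cite{FreundTeufel} and \cite{CoHeNe2015} for all of the listed properties, and only addresses the smooth dependence of $t^{\mathrm{B}}(x,y)$ on $(x,y)$, which it handles the same way you do (smooth parameter dependence for linear ODEs with smooth bounded generator). Your self-contained argument -- bounded analytic generator and Picard--Lindel\"of for existence and smoothness, skew-adjointness $A^* = -A$ for unitarity, translation-invariance of the generator for periodicity \eqref{tau-equivarianz tB}, and affine-reparametrization invariance of \eqref{ODE_par_transp} plus the concatenation property of flows for the group law \eqref{group prop} -- is a sound and useful expansion of what the paper leaves to the references.

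The one place that needs sharpening is the intertwining step, where your $\pm$/$\mp$ hedging hides a real issue rather than resolving it. First, $[A,P] = \dot P$ is \emph{not} subject to a sign ambiguity: with $A = [\dot P, P]$ and the usual convention $[X,Y] = XY - YX$, differentiating $P^2 = P$ gives $\dot P = P\dot P + \dot P P$ and $P\dot P P = 0$, and substituting into $[[\dot P, P],P]$ yields $[A,P] = \dot P$, independently of anything in \eqref{ODE_par_transp}. What does carry a sign is the ODE itself. If you actually carry out the "short calculation" with $\dot t_{x,y} = -A\,t_{x,y}$ as printed in \eqref{ODE_par_transp}, then for $R(s) := P(x(s))\,t_{x,y}(s) - t_{x,y}(s)\,P(y)$ one finds $\dot R = 2\dot P\,t_{x,y} - A R$, which is \emph{not} homogeneous, and $R \equiv 0$ does not follow. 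The homogeneous equation $\dot R = A R$ -- hence the intertwining $P(x(s))\,t_{x,y}(s) = t_{x,y}(s)\,P(y)$ and \eqref{berryconP} -- holds only for $\dot t_{x,y} = +A\,t_{x,y}$; this is also the sign forced by the connection \eqref{eqn:BerryConn}, whose parallel-transport equation is $\dot\psi = [\dot P, P]\psi$, and it can be double-checked on a rotating rank-one projector on $\C^2$, where the $+$ sign rotates the transported vector toward $\Ran P(x(s))$ and the $-$ sign rotates it away. So the sign in the paper's \eqref{ODE_par_transp} is inconsistent with both \eqref{eqn:BerryConn} and \eqref{berryconP} (presumably a typo, or an opposite commutator convention in the cited sources). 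Your write-up should commit to $\dot t_{x,y} = +A\,t_{x,y}$ and display the verification that $\dot R = A R$, rather than leave the sign undetermined; as it stands, the argument does not close for the sign that appears in \eqref{ODE_par_transp}.
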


\begin{proof}
All the properties listed in the statement are well known (see \eg \cite[Lemma~4.1]{FreundTeufel} and \cite[Lemma~2.9]{CoHeNe2015} for a proof). We discuss here only the smooth dependence of $t^{\rm B}(x,y)$ on its entries, the only claim not explicitly formulated in the references. This follows from the fact that the solution $t_{x,y}(s)$ to the linear non-autonomous first order equation \eqref{ODE_par_transp}, where the map $(x,y) \mapsto A(s;x,y)$ is smooth, depends smoothly on the parameters $x,y$.
\end{proof}

Returning to the case $d=1$, consider in particular the unitary $t^{\mathrm{B}} (1,0) \in \U(\Hi)$. By the spectral theorem, it is possible to write it as $t^{\mathrm{B}} (1,0) = \eu^{\iu M}$, with $M = M^*$ a self-adjoint operator on $\Hi$ whose spectrum is contained in $(-\pi, \pi]$. Moreover, since $t^{\mathrm{B}} (1,0)$ commutes with $P(0) = P(1)$ in view of \eqref{berryconP}, so does $M$ by functional calculus. Pick now any orthonormal basis $\Phi(0) \in \Ran P(0)$, and define%
\footnote{The action on a frame of a unitary operator on $\Hi$ is defined component-wise.}%
\[ \Phi(k) := t^{\mathrm{B}} (k,0) \, \eu^{-\iu \, k \, M} \, \Phi(0), \quad k \in \R. \]
Then $\Phi(k)$ depends smoothly on $k$ because so does $t^{\mathrm{B}} (k,0)$, and moreover, in view of the periodicity \eqref{tau-equivarianz tB} and of the group property \eqref{group prop}, we have
\begin{align*}
\Phi(k+1) & = t^{\mathrm{B}} (k+1,0) \, \eu^{-\iu \, (k+1) \, M} \, \Phi(0) = t^{\mathrm{B}} (k+1,1) \, t^{\mathrm{B}} (1,0) \, \eu^{- \iu M} \, \eu^{-\iu \, k \, M} \, \Phi(0) = \\
& = t^{\mathrm{B}} (k,0) \, \eu^{-\iu \, k \, M} \, \Phi(0) = \Phi(k),
\end{align*}
so that $\Phi(k)$ is also periodic.

\begin{rmk}[Proof of Theorem \ref{Thm:SobolevBlochFrames} when $d=1$] \label{rmk:3.4d=1}
Notice that, in particular, the above construction of a global smooth periodic Bloch frame $\Phi$ proves Theorem \ref{Thm:SobolevBlochFrames} when $d=1$. Indeed, a smooth frame lies {\it a fortiori} in $H^s$ for all $s < 1$.
\end{rmk}

\begin{figure}[ht]
\begin{minipage}[c]{0.5\textwidth}
\vspace{0pt}
\centerline{%
\begin{tikzpicture}[>=stealth',scale=0.8]
\shadedraw [outer color=lightgray, inner color=white] (-4,-4) rectangle (4,4);
\draw [->] (-4,-4) -- (0,-4) node [anchor=north] {$E_1$};
\draw [->] (4,-4) -- (4,0) node [anchor=west] {$E_2$};
\draw [->] (4,4) -- (0,4) node [anchor=south] {$E_3$};
\draw [->] (-4,4) -- (-4,0) node [anchor=east] {$E_4$};
\filldraw [black] (-4,-4) circle (3pt) node [anchor=north east] {$v_1$}
					(4,-4) circle (3pt) node [anchor=north west] {$v_2$}
					(4,4) circle (3pt) node [anchor=south west] {$v_3$}
					(-4,4) circle (3pt) node [anchor=south east] {$v_4$};
\draw (0,0) node {$\B$};
\end{tikzpicture}
} 
\end{minipage}
\hfill
\begin{minipage}[c]{0.49\textwidth}
\vspace{0pt}
\caption{The unit cell $\B$, its vertices and its edges. We use adapted coordinates $(k_1,k_2)$ such that $k = k_1 e_1 + k_2 e_2$, with $\Lambda = \Span_\Z\set{e_1,e_2}$.}
\label{fig:BZ}
\end{minipage}
\end{figure}
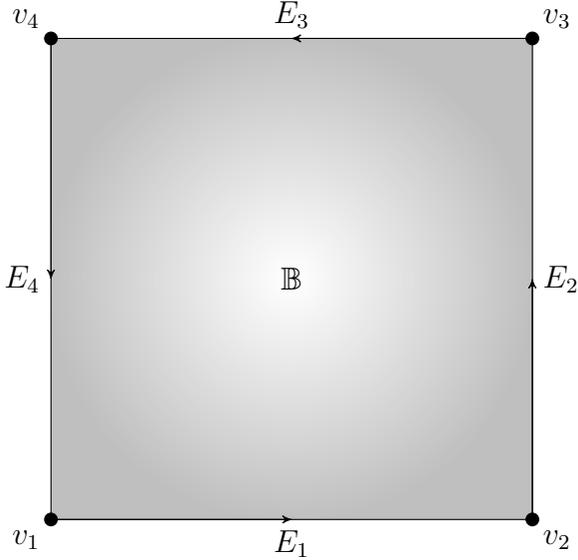

Next we consider the $2$-dimensional setting. We use the unit cell $\B$ defined in \eqref{unit cell} as set of representatives for points in the periodicity torus $\T^d = \R^d / \Lambda$.  We define the \emph{vertices} of the unit cell $\B$ to be the points
\begin{equation} \label{vertices}
v_1 = \left( - \frac{1}{2}, - \frac{1}{2} \right), \quad v_2 = \left( \frac{1}{2}, - \frac{1}{2} \right), \quad v_3 = \left( \frac{1}{2}, \frac{1}{2} \right), \quad v_4 = \left( - \frac{1}{2}, \frac{1}{2} \right).
\end{equation}
These points differ by one another by a translation $k \mapsto k + \lambda$, $\lambda \in \Lambda$, and hence are all identified with the same point in the Brillouin $2$-torus $\T^2 := \R^2 / \Lambda$. We also introduce the oriented \emph{edges} $E_i$, joining two consecutive vertices (see Figure~\ref{fig:BZ}).

A periodic Bloch frame is uniquely specified by the values it attains on $\B$, according to the following Proposition, whose proof follows by direct inspection.

\begin{prop} \label{global}
Let $\mathcal{P} = \set{P(k)}_{k \in \R^2}$ be a family of orthogonal projectors satisfying Assumption \ref{Ass:projectors}. Assume that there exists a global continuous periodic Bloch frame $\Phi \colon \R^2 \to \Hi^m$ for $\mathcal{P}$. Then $\Phi$ satisfies the \emph{vertex conditions}
\begin{equation} \label{VertexCondition} \tag{$\mathrm{V}$}
\Phi(v_1) = \Phi(v_2) = \Phi(v_3) = \Phi(v_4)
\end{equation}
and the \emph{edge symmetries}
\begin{equation} \label{EdgeSymmetries} \tag{$\mathrm{E}$}
\Phi(k + e_2) = \Phi(k) \quad \text{for } k \in E_1, \qquad \Phi(k + e_1) = \Phi(k) \quad \text{for } k \in E_4.
\end{equation}

Conversely, let $\Phi\sub{uc} \colon \B \to \Hi^m$ be a continuous Bloch frame for $\mathcal{P}$, defined on the unit cell $\B$ and satisfying the vertex conditions \eqref{VertexCondition} and the edge symmetries \eqref{EdgeSymmetries}. Then there exists a \emph{global continuous periodic} Bloch frame $\Phi$ whose restriction to $\B$ coincides with $\Phi\sub{uc}$.
\end{prop}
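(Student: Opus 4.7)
The proof of Proposition~\ref{global} is essentially a direct verification splitting into a forward direction and a gluing construction.

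For the forward direction, I would simply exploit the periodicity of $\Phi$ together with the fact that the vertices in \eqref{vertices} are all pairwise congruent modulo $\Lambda$: one has $v_2 = v_1 + e_1$, $v_4 = v_1 + e_2$, and $v_3 = v_1 + e_1 + e_2$, so $\tau$-periodicity (here pure periodicity, since we are in the reduced framework) immediately forces $\Phi(v_1) = \Phi(v_2) = \Phi(v_3) = \Phi(v_4)$, giving \eqref{VertexCondition}. For the edge symmetries, I would observe that if $k \in E_1$ (the bottom edge $k_2 = -\tfrac{1}{2}$), then $k + e_2 \in E_3$, and periodicity yields $\Phi(k + e_2) = \Phi(k)$; analogously for $k \in E_4$ with translation by $e_1$ landing on $E_2$. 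This gives \eqref{EdgeSymmetries}.

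For the converse, the plan is to extend $\Phi\sub{uc}$ by translation. Every $k \in \R^2$ can be written as $k = k_0 + \lambda$ with $k_0 \in \B$ and $\lambda \in \Lambda$, and the representative $k_0$ is unique whenever $k$ lies in the interior of a translated cell $\B + \lambda$. I would thus define $\Phi(k) := \Phi\sub{uc}(k_0)$; the key point is well-definedness on the overlaps $(\B + \lambda) \cap (\B + \lambda')$, where a boundary point of $\B$ admits several representatives. The conditions \eqref{VertexCondition} and \eqref{EdgeSymmetries} are precisely what is needed: the edge conditions ensure agreement on the one-dimensional strata (opposite edges of $\B$ glue consistently under translation by $e_1$ or $e_2$), and the vertex condition handles the zero-dimensional stratum where four translates of $\B$ meet.

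Once well-definedness is established, continuity on $\R^2$ follows by a routine gluing argument: continuity on the interior of each $\B + \lambda$ is inherited from continuity of $\Phi\sub{uc}$ on $\B$, and continuity across the shared boundaries is exactly the content of the compatibility conditions just verified. Periodicity of $\Phi$ by construction is automatic. Finally, to check that $\Phi$ is indeed a Bloch frame for $\mathcal{P}$, I would use the periodicity \ref{item:periodic} of the family $\set{P(k)}$: for $k = k_0 + \lambda$, the tuple $\Phi(k) = \Phi\sub{uc}(k_0)$ is an orthonormal basis of $\Ran P(k_0) = \Ran P(k_0 + \lambda) = \Ran P(k)$.

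The only genuinely delicate point is the verification of consistency on the corners and along boundary edges; everything else is bookkeeping. Since \eqref{VertexCondition} and \eqref{EdgeSymmetries} were designed for exactly this purpose, I do not foresee a significant obstacle, and the proof should essentially amount to a short diagram chase on the boundary $\partial \B$.
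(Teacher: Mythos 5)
Your proof is correct and takes the same route the paper implicitly has in mind (the paper labels this as a proposition ``whose proof follows by direct inspection'' and gives no details). You correctly reduce the forward direction to periodicity together with the $\Lambda$-congruences $v_1 \equiv v_2 \equiv v_3 \equiv v_4$, $E_1 \equiv E_3$ and $E_2 \equiv E_4$, and you correctly identify the edge and vertex conditions as exactly the compatibility data needed for the translation-extension of $\Phi\sub{uc}$ to be well defined and hence continuous across the seams of the closed, locally finite cell decomposition $\set{\B + \lambda}_{\lambda \in \Lambda}$.
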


In view of Proposition \ref{global}, we are allowed to restrict our attention only to the fundamental unit cell. We thus want to construct a Bloch frame $\widehat{\Phi}$ defined on $\partial \B$ and which satisfies the vertex conditions \eqref{VertexCondition} and edge symmetries \eqref{EdgeSymmetries} there, and then -- if possible -- to extend it to the inside of $\B$, where no further condition apart from regularity should be enforced. 

The frame $\widehat{\Phi}$ is readily constructed by means of the $1$-dimensional procedure described above. Indeed, consider the family of projectors $\set{P_1(k_2) := P(-1/2,k_2)}_{k_2 \in \R}$. This is a $1$-dimensional family of projectors satisfying Assumption \ref{Ass:projectors}, and hence it admits a global smooth periodic frame $\Phi^{(1)}(k_2)$. The value of this frame at the point $v_1$ can be fixed to be a certain frame $\Phi(v_1)$ in $\Ran P(v_1)$. The same argument applies to the family $\set{P_2(k_1) := P(k_1,-1/2)}_{k_1 \in \R}$, and we call $\Phi^{(2)}(k_1)$ a global smooth periodic Bloch frame for it. We require further that $\Phi^{(2)}(k_1)$ also coincides with $\Phi(v_1)$ at $k_1 = -1/2$.

Define then $\widehat{\Phi}(k)$ by
\[ \widehat{\Phi}(k_1, k_2) = \begin{cases}
{\Phi}^{(2)}(k_1) & \text{if } (k_1, k_2) \in E_1 \cup E_3, \\
{\Phi}^{(1)}(k_2) & \text{if } (k_1, k_2) \in E_2 \cup E_4. \\
\end{cases} \]
By construction, $\widehat{\Phi}$ satisfies \eqref{EdgeSymmetries}; the periodicity of $\Phi^{(1)}$ and $\Phi^{(2)}$ and the fact that they coincide on $v_1$ guarantee that $\widehat{\Phi}$ also satisfies \eqref{VertexCondition}, that is, that it joins continuously at the vertices of the fundamental unit cell.

When $d=3$, a similar procedure can be performed to obtain a continuous Bloch frame on the $1$-skeleton of the $3$-dimensional unit cell. Indeed, the above construction gives a frame on the boundary of any of the faces, say the one $\set{k_3 = -1/2} \cap \B$. A frame on the boundary of the face $\set{k_2 = -1/2} \cap \B$ can then be constructed similarly, by matching the frame on the edge at the intersection of the two faces. Analogously, one obtains a frame on the boundary of the face $\set{k_1 = -1/2} \cap \B$. Finally, the extension to the whole $1$-skeleton is obtained by enforcing periodicity.

\subsection{Extension to the interior} \label{sec:Parallel}

We now use the parallel transport of the Berry connection (cf.\ Lemma \ref{lemma tau equiv varphi}) to extend the continuous frame $\widehat{\Phi}$ we constructed above on the $1$-skeleton to a smooth and periodic frame on $\B\setminus \{0\}$ for $d=2$ and on $\B\setminus\left( \{k_1=k_2 =0 \} \cup\{k_1=k_3 =0 \}\cup\{k_2=k_3 =0 \} \right)$ for $d=3$. Moreover, we obtain precise bounds on the derivatives of the frame. As a consequence we will conclude that for $s<1$ a periodic $H^s$-regular frame always exists.

Assume for the moment that $d=2$. In a first step we extend the continuous frame $\widehat{\Phi}$ on $\partial \B$ to a frame on $\B \setminus B_{r_0}(0)$ using the parallel transport $t^{\rm B}$ along the rays $k/|k|= \mathrm{const}$, where we can use any   $0<r_0< \frac12$. Since $t^{\rm B}(x,y)$ is a periodic unitary map from $\Ran P(y)$ to $\Ran P(x)$ that depends continuously  on $x$ and $y$,  this precedure yields a continuous periodic Bloch frame on $\B \setminus B_{r_0}(0)$. By applying the general local smoothing argument below, which is valid in any dimension, we can turn it into a periodic smooth Bloch frame $\Phi_{r_0}$ defined on $\B \setminus B_{r_0}(0)$.

\begin{lemma}[Local smoothing] \label{local smooth}
Let $\Phi$ be a continuous Bloch frame defined on an open region $U \subset \R^d$ such that for some point $k_0 \in U$ we have $\norm{P(k) - P(k_0)} < 1$ for all $k \in U$. Let also $S \subset R \subset U$, with $S$ open and $R$ compact. Then there exists a Bloch frame $\Phi'$ which is continuous on $U$, \emph{smooth} on $S$, and coincides with $\Phi$ in $U \setminus R$.
\end{lemma}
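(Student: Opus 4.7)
The strategy is to reduce the problem to the smoothing of a $\U(m)$-valued function, which can be handled by mollification followed by the polar retraction onto $\U(m)\subset \mathrm{GL}(m,\C)$.

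\emph{Step 1: smooth local reference frame via Kato--Nagy.} The hypothesis $\norm{P(k)-P(k_0)}<1$ on $U$ allows one to define, for every $k\in U$, the Kato--Nagy intertwiner
\[
W(k):=\bigl[\Id_\Hi-(P(k)-P(k_0))^2\bigr]^{-1/2}\bigl(P(k)P(k_0)+P^\perp(k)P^\perp(k_0)\bigr),
\]
which is a unitary in $\U(\Hi)$ depending smoothly (in fact analytically) on $k\in U$, and satisfies $W(k)P(k_0)W(k)^{-1}=P(k)$. Fixing any orthonormal basis $\psi^0=(\psi^0_1,\ldots,\psi^0_m)$ of $\Ran P(k_0)$, the components $\Psi(k):=W(k)\psi^0$ provide a smooth Bloch frame for $\mathcal{P}$ on the whole of $U$. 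Since both $\Psi(k)$ and $\Phi(k)$ are orthonormal bases of $\Ran P(k)$, there exists a unique continuous map $U\ni k\mapsto U(k)\in\U(m)$ such that $\Phi(k)=\Psi(k)\cdot U(k)$, and the problem reduces to constructing $V:U\to\U(m)$ which is continuous on $U$, smooth on $S$, and coincides with $U$ on $U\setminus R$: indeed $\Phi'(k):=\Psi(k)\cdot V(k)$ will then have the required properties.

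\emph{Step 2: smoothing the unitary gauge.} Choose an open set $S'$ with $S\subset\overline{S}\subset S'\subset\subset R$ and a smooth cutoff $\chi\in C_c^\infty(R)$ with $\chi\equiv 1$ on a neighborhood of $\overline{S}$ and $\mathrm{supp}\,\chi\subset S'$. Let $\rho_\eps$ be a standard mollifier and set $U_\eps:=U*\rho_\eps$, which is defined and smooth on a neighborhood of $\overline{S'}$ once $\eps$ is small enough. Consider the (a priori only $\mathrm{GL}(m,\C)$-valued) interpolation
\[
M_\eps(k):=(1-\chi(k))\,U(k)+\chi(k)\,U_\eps(k),
\]
which is continuous on $U$, smooth on $S$ (where $M_\eps=U_\eps$), and identically equal to $U$ on $U\setminus S'$. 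By uniform continuity of $U$ on $\overline{S'}$, for $\eps$ small enough $M_\eps(k)$ lies in the tubular neighborhood of $\U(m)$ on which the polar retraction $M\mapsto M\,(M^*M)^{-1/2}$ is smooth. Define therefore
\[
V(k):=M_\eps(k)\,\bigl(M_\eps(k)^*M_\eps(k)\bigr)^{-1/2};
\]
since polar retraction fixes unitary matrices, $V$ coincides with $U$ wherever $\chi=0$, in particular on $U\setminus R$; it is smooth on $S$ because $M_\eps$ is; and it is continuous on the whole of $U$. Setting $\Phi'(k):=\Psi(k)\cdot V(k)$ concludes the construction.

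\emph{Main difficulty.} The nontrivial issue is that $\U(m)$ is not a linear space, so one cannot simply cut off the mollified frame against the original one. The polar retraction resolves this, but only provided the interpolated matrix $M_\eps$ stays in the tubular neighborhood of $\U(m)$ where the retraction is smooth; this is secured by taking $\eps$ small enough, exploiting the uniform continuity of $U$ on the compact set $\overline{S'}$. All other ingredients (Kato--Nagy, existence of the compact nested set $S\subset S'\subset\subset R$, smoothness of polar decomposition near $\U(m)$) are standard.
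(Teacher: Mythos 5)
Your proof is correct and follows essentially the same route as the paper's: both use the Kato--Nagy unitary to transfer the frame into a fixed finite-dimensional picture (you factor $\Phi = \Psi\cdot U$ with $U$ unitary; the paper conjugates $\Phi$ to $\Phi^W = W\Phi$ taking values in $\U(\C^m)$), then mollify against a cutoff, and finally re-orthonormalize. Your polar retraction $M\mapsto M(M^*M)^{-1/2}$ is the same operation as the paper's Gram-matrix correction $\phi'_b = \sum_a \phi''_a(G^{-1/2})_{ab}$, so the two arguments differ only in presentation.
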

\begin{proof}
The estimate $\norm{P(k) - P(k_0)} < 1$ which is valid in $U$ allows to define the Kato-Nagy unitary \cite[Sec. I.6.8]{Kato}
\begin{equation} \label{KatoNagy}
W(k;k_0) := \left( \Id - (P(k_0) - P(k))^2 \right)^{-1/2} \left( P(k_0) \, P(k) + (\Id - P(k_0))(\Id - P(k))\right)
\end{equation}
which satisfies
\[ P(k_0) = W(k;k_0) \, P(k) \, W(k;k_0)^{-1}. \]
Setting $\Phi^W(k) := W(k;k_0) \, \Phi(k)$ then defines a family of orthonomal frames in the \emph{fixed} vector space $\Ran P(k_0) \simeq \C^m$, and can be thus seen as a map $\Phi^W \colon U \to \U(\C^m)$ with values in the unitary group. 

Choose now a smooth function $\chi$ on $U$ which is identically equal to $1$ in $S$ and is supported in $R$. Write
\[ \Phi^W(k) = \chi(k) \, \Phi^W(k) + (1-\chi(k)) \Phi^W(k) =: \Phi^W_S(k) + \Phi^W_{U \setminus S}(k). \]
Let also $\rho$ be a smooth function with compact support in $R$ and with unit mass, and define $\rho_\eps(k) := \eps^{-d} \rho(k / \eps)$. By convolution $\Phi^W_{S,\eps}:= \Phi^W_S * \rho_\eps$ is smooth on $S$ and compactly supported on $R$, and moreover it converges to $\Phi^W_S$ uniformly when $\eps \to 0$. Notice that $\Phi^W_{S,\eps}$ takes values in $M_m(\C) \simeq (\C^m)^m$, since the convolution does not respect the non-linear structure of $\U(\C^m)$.

Define now $\Phi''(k) = W(k;k_0)^{-1}(\Phi^W_{S,\eps}(k) + \Phi^W_{U \setminus S}(k))$. This family satisfies the required regularity conditions, but it may fail to be a frame. However, if $\eps$ is small enough the Gram matrix
\[ G(k)_{ab} = \scal{\phi''_a(k)}{\phi''_b(k)} \]
will satisfy $\norm{G(k) - \mathbb{I}} \le 1/2$ uniformly in $k$, with $G(k) \equiv \mathbb{I}$ outside $R$ (since there $\Phi''(k)$ coincides with the original frame $\Phi(k)$). The family $\Phi'(k) = (\phi'_1(k), \ldots, \phi'_m(k))$ defined by
\[ \phi'_b(k) := \sum_{a=1}^{m} \phi''_a(k) (G(k)^{-1/2})_{ab} \]
enjoys then all the required properties.
\end{proof}
 
By covering $\B \setminus B_{r_0}(0)$ with finitely many regions $U$ as in the above Lemma, overlapping on the respective subsets $U \setminus R$, we obtain as stated above a smooth Bloch frame $\Phi_{r_0}$.  The  extension of this frame to a smooth Bloch frame $\Phi_0 = \set{\phi_1(k), \ldots, \phi_m(k)}$ on $\B\setminus \{0\}$ is done more explicitly in the following, in order  to obtain   precise bounds on the derivatives of all $\phi_a(k)$, $a \in \set{1, \ldots, m}$.

We use polar coordinates $k = r\omega$, where $r\in(0,\infty)$ and $\omega=\omega(\varphi)  \in \R^2$ with $|\omega|=1$ and $\varphi\in\R$. Set for $0<r<r_0$
\[
\phi_a(r\omega) := t^{\rm B}(r\omega,r_0\omega) \,\phi_a(r_0\omega), \quad a \in \set{1, \ldots, m}.
\]
Then the extended frame $\widetilde \Phi_0 = (\phi_1(k),\ldots, \phi_m(k))$ is continuous and periodic outside $k=0$.  Moreover, it is smooth when restricted to $|k|\geq r_0$. We now show that $\widetilde \Phi_0$  is also  smooth for $0<|k|<r_0$ and, more importantly, provide an explicit bound on its first order derivatives.

Since
\[
\nabla_k \phi_a(k) = \partial_r \phi_a(r\omega)\, \omega + \tfrac{1}{r} \,\partial_\varphi \phi_a(r\omega) \, \omega^\perp\,,
\]
we need to control the derivatives of $\phi_a(r\omega) $ with respect to $r$ and $\varphi$. As
\[
\partial_r \phi_a(r\omega) =\left( \partial_r t^{\rm B}(r\omega,r_0\omega) \right)\,\phi_a(r_0\omega)
\]
and
\[
\partial_\varphi \phi_a(r\omega)  =\left( \partial_\varphi  t^{\rm B}(r\omega,r_0\omega) \right)\,\phi_a(r_0\omega) + t(r\omega,r_0\omega) \,\partial_\varphi \phi_a(r_0\omega)
\]
it suffices to show that the derivatives of the parallel transport $t^{\rm B}(r\omega,r_0\omega)$ are uniformly bounded, in order to conclude that the derivatives of $\phi_a(r\omega)$ diverge at most like  $\frac{1}{|k|}$, \ie that there exists $C<\infty$ such that 
\begin{equation} \label{BoundNabla}
\norm{\nabla_k \phi_a(k)} \le \frac{C}{|k|}\,, \quad \mbox{for all }k\in \B\setminus \{0\}  \,.
\end{equation}

Since we will need the following Lemma also for the case $d=3$, we formulate it already here accordingly. The above statement for $d=2$ follows by restricting to the equator of $S^2$.
\begin{lemma}
The map 
\[
[0,r_0] \times S^2 \to  \mathcal{B}(\Hi)\,, \quad (r,\omega)\mapsto  t^{\rm B}(r\omega ,r_0\omega )
\]
is continuously differentiable with bounded derivatives.
\end{lemma}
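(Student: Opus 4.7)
The plan is to recast $T(r,\omega):=t^{\rm B}(r\omega,r_0\omega)$ as the value at $s=1$ of the solution of a linear ODE with smoothly parametrized coefficients, and then invoke the standard theorem on differentiable dependence of ODE solutions on parameters. Setting $x(s;r,\omega):=\bigl((1-s)r_0+sr\bigr)\,\omega$, the defining equation \eqref{ODE_par_transp} becomes
$$
\frac{\partial U}{\partial s}(s;r,\omega) \;=\; -(r-r_0)\,\bigl[\,\omega\cdot\nabla P(x(s;r,\omega)),\,P(x(s;r,\omega))\,\bigr]\,U(s;r,\omega),
$$
with $U(0;r,\omega)=\Id_\Hi$, so that $T(r,\omega)=U(1;r,\omega)$.

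The key observation is that for every $(s,r,\omega)\in[0,1]\times[0,r_0]\times S^2$ the base point $x(s;r,\omega)$ stays inside the compact ball $\overline{B_{r_0}(0)}\subset\R^3$. By real-analyticity of $P$, the generator
$$
A(s;r,\omega) \;:=\; -(r-r_0)\,\bigl[\,\omega\cdot\nabla P(x(s;r,\omega)),\,P(x(s;r,\omega))\,\bigr]\in\BH
$$
and all its derivatives in $r$ and $\omega$ are therefore uniformly bounded in operator norm on $[0,1]\times[0,r_0]\times S^2$, and depend smoothly on $(s,r,\omega)$. The standard smooth-dependence results for linear ODEs then yield joint $C^1$-regularity of $U(s;r,\omega)$, and in particular of $T(r,\omega)=U(1;r,\omega)$, on $[0,r_0]\times S^2$.

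For the uniform bounds on $\partial_r T$ and $\partial_\omega T$ I would use the Duhamel representation: for instance,
$$
\partial_r U(1;r,\omega) \;=\; \int_0^1 U(1;r,\omega)\,U(s;r,\omega)^{-1}\,\partial_r A(s;r,\omega)\,U(s;r,\omega)\,\di s,
$$
and analogously for $\partial_\omega U$. Since $P=P^*$ implies that $[\omega\cdot\nabla P,P]$ is anti-self-adjoint, $U(s;r,\omega)$ is unitary and hence of operator norm $1$; combined with the uniform bounds on $\|\partial_r A\|$ and $\|\partial_\omega A\|$, this produces the claimed bounded derivatives.

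The only potentially delicate point is the behavior at $r=0$, where the straight-line paths $s\mapsto x(s;r,\omega)$ all terminate at the common origin $0\in\R^3$. From the ODE standpoint, however, this geometric degeneracy manifests only through the factor $(r-r_0)$ in $A$ taking the value $-r_0$ at $r=0$; the ODE itself remains smooth in $(s,r,\omega)$ and no singularity actually arises. The $1/|k|$ blow-up recorded in \eqref{BoundNabla} is produced purely by converting back from $(r,\omega)$ to the Cartesian coordinates $k=r\omega$ (through the factor $1/r$ accompanying $\partial_\varphi$), and is entirely consistent with the uniform bound supplied by this lemma.
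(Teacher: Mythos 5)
Your proof is correct and in essence rests on the same mechanism the paper uses: parallel transport solves a linear ODE whose generator depends smoothly on the external parameters, so the solution is jointly smooth in those parameters. The paper executes this more economically by pointing to the joint smoothness of $(x,y)\mapsto t^{\rm B}(x,y)$ already recorded in Lemma~\ref{lemma tau equiv varphi} and then composing with the coordinate map $(r,\omega)\mapsto(r\omega,r_0\omega)$, which is smooth with bounded derivatives on the compact domain $[0,r_0]\times S^2$; the chain rule then gives $C^1$-regularity, and compactness of the domain gives the uniform bounds. You instead substitute that coordinate map into the ODE \eqref{ODE_par_transp} \emph{before} invoking smooth dependence, and then supply explicit Duhamel formulae for $\partial_r U$ and $\partial_\omega U$, together with the observation that anti-self-adjointness of $[\omega\cdot\nabla P,P]$ makes $U$ unitary. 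This buys you quantitative control on the derivative bounds that the paper leaves implicit (it just appeals to compactness), and your explicit discussion of the apparent degeneracy at $r=0$ is a helpful clarification; both proofs land on the same conclusion, and yours would be the one to use if one wanted to trace constants.
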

\begin{proof}
This follows directly from Lemma~\ref{lemma tau equiv varphi}, since the above is the composition between the smooth map $(x,y) \mapsto t^{\rm B}(x,y)$ with the change-of-coordinates map 
\[ (r,\omega) \mapsto \begin{cases} y(r,\omega) = r_0 \omega, \\ x(r,\omega) = r \omega, \end{cases} \]
which is smooth with bounded derivatives.
\end{proof}

Finally we can turn $\widetilde \Phi_0$ into a smooth frame $\Phi_0$ outside of $k=0$ by applying the general smoothing argument (Lemma \ref{local smooth}) to (a suitably chosen finite cover of) the seam at $|k|=r_0$. Thereby the bound \eqref{BoundNabla} remains valid, possibly for another constant $C$. 

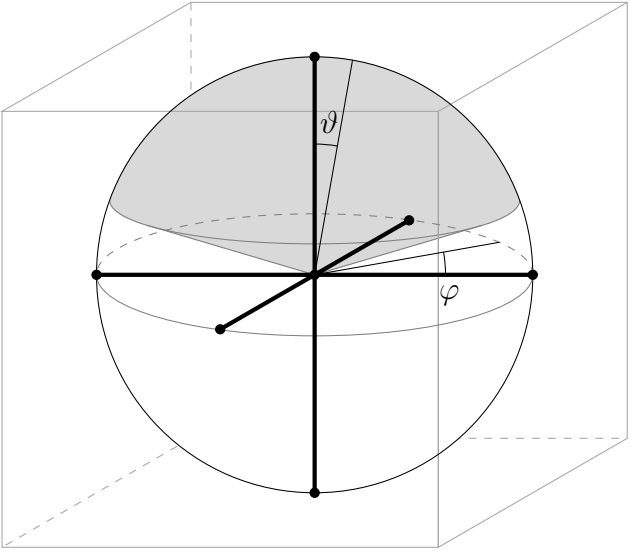
\begin{figure}[ht]
\begin{minipage}[c]{0.5\textwidth}
\vspace{0pt}
\centering
\begin{tikzpicture}[scale=.58]
\draw [gray!70, dashed] (-2.165,-1.25) ++(-5,5) ++(4.33,2.5) -- ++(0,-10) -- ++(10,0) ++(-10,0) -- ++(-4.33,-2.5);
\fill [white] (0,0) circle (5);
\filldraw [gray!30] (-4.698,1.71) arc (180:360:4.698 and 1);
\filldraw [rotate=20, gray!30] (0,0) -- (5,0) arc (0:140:5) -- cycle;
\filldraw [rotate=16.5, gray!30] (0,0) -- (4,0) arc (0:147:4) -- cycle;
\draw [gray] (-4.698,1.71) arc (180:360:4.698 and 1);
\draw [gray, rotate=16.5] (0,0) -- (4,0);
\draw [gray, rotate=-16.5] (0,0) -- (-4,0);
\draw [gray!70] (-2.165,-1.25) ++(-5,5) -- ++(0,-10) -- ++(10,0) -- ++(0,10) -- ++(-10,0) -- ++(4.33,2.5) -- ++(10,0) -- ++(0,-10) -- ++(-4.33,-2.5) -- ++(0,10) -- ++(4.33,2.5);
\draw (0,0) circle (5);
\draw [gray] (-5,0) arc (180:360:5 and 1.4);
\draw [gray, dashed] (5,0) arc (0:180:5 and 1.4);
\draw [ultra thick] (-5,0) -- (5,0)
					  (0,-5) -- (0,5)
					  (-2.165,-1.25) -- (2.165,1.25);
\filldraw [black] (0,0) circle (3pt)
					(-5,0) circle (3pt)
					(5,0) circle (3pt)
					(0,-5) circle (3pt)
					(0,5) circle (3pt)
					(-2.165,-1.25) circle (3pt)
					(2.165,1.25) circle (3pt);
\draw [rotate=10] (0,0) -- (4.3,0);
\draw [rotate=80] (0,0) -- (5,0);
\draw (3,0) node [anchor=north] {$\:\varphi$} arc (0:10:3)
	  (0,3) node [anchor=south west] {$\!\vartheta$} arc (90:80:3);
\end{tikzpicture}
\end{minipage}
\hfill
\begin{minipage}[c]{0.47\textwidth}
\vspace{0pt}
\caption{The cone $K_3^+$ (the vertical axis corresponds to the $k_3$-direction).}
\label{fig:3d}
\end{minipage}
\end{figure}

For the analogous construction in $d=3$, we start again from a continuous periodic frame on the 1-skeleton of $\B$. In the first step we extend this frame as in the case $d=2$ to the three faces $\{k_i=-1/2\}$, $i\in\set{1,2,3}$, of the cube $\B$, and then to the opposing faces $\{k_i=1/2\}$ by periodicity. We thus obtain  a periodic Bloch frame on the surface of $\B$ which is smooth away from points of singularity at the center of each face.  Again we extend this frame to the interior of $\B$ by parallel transport along the radial direction, going first up to a radius $0<r_0<\frac12$. Along the directions $\omega\in S^2$ through the edges of the cube $\B$, this frame is merely continuous, but can again be smoothed  by the local smoothing procedure. We thus have a smooth periodic frame on $\B$ without the ball of radius $r_0$ and without the coordinate axes, \ie on
\[
\B\setminus\left( B_{r_0}(0) \cup \{k_1=k_2 =0 \} \cup\{k_1=k_3 =0 \}\cup\{k_2=k_3 =0 \} \right)\,.
\]
We extend this frame to $ \B\setminus\left(  \{k_1=k_2 =0 \} \cup\{k_1=k_3 =0 \}\cup\{k_2=k_3 =0 \} \right)$ by defining for $0<r<r_0$  and $\omega\in S^2\setminus \left(  \{\omega_1=\omega_2 =0 \} \cup\{\omega_1=\omega_3 =0 \}\cup\{\omega_2=\omega_3 =0 \} \right)$
\[
\phi_a(r\omega) := t^{\rm B}(r\omega,r_0\omega) \,\phi_a(r_0\omega), \quad a \in \set{1, \ldots, m}.
\]
Using \eg  spherical coordinates $(r,\theta,\varphi)$ relative to the $k_3$-axis $\{k_1=k_2=0\}$ on the cone $K_3^+:= \{ 0<r<r_0,\, 0<\theta< \pi/3,\, \varphi\in[0,2\pi]\}$ (compare Figure \ref{fig:3d}), we can bound the gradient of $\phi_a$ on $K_3^+$ by
\[
\nabla_k \phi_a(k) = \partial_r \phi_a(r\omega)\, \omega + \tfrac{1}{r} \,\partial_\theta \phi_a(r\omega) \, e_\theta + \frac{1}{r\sin\theta}\partial_\varphi\phi_a(r\omega)\,e_\varphi \,.
\]
As in the case $d=2$ the first term is bounded. From the construction on the faces it also follows that $\partial_\theta \phi_a(r_0\omega) $ remains bounded and thus the second term is bounded by a constant times $\frac1r$. Indeed, also 
\[
\| \partial_\varphi\phi_a(r_0\omega) \|= | \sin\theta \langle e_\varphi, \nabla\phi_a(r_0\omega) \rangle_{\R^3}|\leq \sin\theta  \frac{C}{\sin\theta } = C\,,
\]
and thus the third term is bounded by a constant times $\frac{1}{r\sin\theta}$. In summary, we have that on each cone $K_j^{\pm}$ around a coordinate half-axis%
\footnote{$K_j^{\pm}$ denotes the cone around the $j$-th positive (respectively negative) coordinate half-axis.} %
we have
\begin{equation} \label{BoundNabla3d}
\norm{\nabla_k \phi_a(k)} \le \frac{C}{|k| \, \sin\theta}\,, \quad \mbox{for all }k\in K_j^\pm \,.
\end{equation}
Note that the six cones $K_j^\pm$ cover $ \B\setminus\left(  \{k_1=k_2 =0 \} \cup\{k_1=k_3 =0 \}\cup\{k_2=k_3 =0 \} \right)$. Finally we can use the local smoothing procedure to smooth the frame at the seam $|k|= r_0$ around the directions hitting the edges of $\B$. Along the directions near the coordinate axes it is already smooth by construction. Hence the bounds \eqref{BoundNabla3d} remain valid for the smoothed Bloch frame.

\begin{lemma}
The periodic Bloch frame $\Phi_0$ constructed above for $d=2,3$ is $H^s$-regular for all $s < 1$.
\end{lemma}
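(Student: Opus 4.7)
My plan is to exploit the pointwise gradient bounds \eqref{BoundNabla} and \eqref{BoundNabla3d} established in the construction of $\Phi_0$, combining them with a fractional Sobolev embedding $W^{1,p}\hookrightarrow H^s$ in order to reach all exponents $s<1$ strictly below the critical regularity.

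First, since each $\phi_a(k)$ is a unit vector in $\Hi$ for a.e.\ $k$, the frame is bounded and therefore belongs to $L^2\sub{loc}(\R^d,\Hi^m)$; the entire content lies in controlling the first-order derivatives. My second step is to show that $\nabla_k\phi_a\in L^p(\B,\Hi)$ for every $1\le p<2$. In $d=2$ this follows at once from \eqref{BoundNabla}, reducing the claim to the classical integrability of $|k|^{-p}$ near the origin. In $d=3$ I would cover $\B$ (modulo a null set) by the six cones $K_j^\pm$ around the coordinate half-axes and work in spherical coordinates adapted to each axis, so that \eqref{BoundNabla3d} yields the factorised bound
\[
\int_{K_j^\pm}\norm{\nabla_k\phi_a(k)}^p\di k\le C^p\int_0^{r_0} r^{2-p}\di r\int_0^{\pi/3}(\sin\theta)^{1-p}\di\theta,
\]
which is finite exactly when $p<2$, both exponents being required to exceed $-1$.

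The third step converts $W^{1,p}$-regularity into $H^s$-regularity via the fractional Sobolev embedding $W^{1,p}(\B)\hookrightarrow H^s(\B)$ on the Lipschitz domain $\B$, valid whenever $1-d/p\ge s-d/2$, i.e.\ $p\ge 2d/(d+2-2s)$. For any prescribed $s<1$ this lower threshold is strictly smaller than $2$, hence compatible with the upper constraint $p<2$ coming from the previous step; choosing for instance $p=2d/(d+2-2s)$ then produces $\phi_a\in H^s(\B,\Hi)$, which by the periodicity \ref{item:F2tilde} of $\Phi_0$ promotes to $H^s\sub{loc}(\R^d,\Hi)$.

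The main difficulty lies in the almost-critical nature of the gradient bound: the estimate $\norm{\nabla\phi_a}\le C/|k|$ in $d=2$ just barely fails to place $\nabla\phi_a$ in $L^2$, which is no defect of the construction but a topological feature, consistent with the absence of a global continuous (and \emph{a fortiori} $H^1$) Bloch frame when the Bloch bundle is Chern non-trivial. In $d=3$ the most delicate point is that the singular set of $\nabla\phi_a$ is a union of lines rather than isolated points, and the cone decomposition together with the volume-element factor $\sin\theta$ is what makes the argument work: this factor precisely absorbs the angular singularity up to the same critical exponent $p=2$ found in two dimensions.
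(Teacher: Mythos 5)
Your proof is correct and follows the same two-step route as the paper: first deduce $\phi_a\in W^{1,p}$ for all $p<2$ from the pointwise gradient bounds \eqref{BoundNabla} and \eqref{BoundNabla3d} (your cone computation in $d=3$ is exactly the one the paper has in mind, and the critical exponents $2-p>-1$ and $1-p>-1$ are correctly identified), then upgrade $W^{1,p}$-regularity to $H^s$-regularity with $s<1$. The one place where you part ways with the paper is that you invoke the fractional Sobolev embedding $W^{1,p}\hookrightarrow H^s$ as a black box, with the threshold $1-d/p\ge s-d/2$. The paper's main text does not rely on that embedding theorem directly: it gives a self-contained elementary proof of the needed estimate via Fourier series (H\"older, then Hausdorff--Young, yielding $\norm{f}_{H^s}^2\le C_{s,p}\norm{f}_{W^{1,p}}^2$ for scalar $f$), and — importantly — it then extends the scalar estimate to the $\Hi$-valued function $\phi_a$ by applying it coordinate-by-coordinate in an orthonormal basis of $\Hi$ and re-summing with the reverse Minkowski inequality (using that $p/2<1$). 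Since $\Hi$ may be infinite-dimensional, simply citing ``the fractional Sobolev embedding on a Lipschitz domain'' glosses over this vector-valued step; the result is true (Hilbert spaces have the right Littlewood--Paley behaviour, which is also what the paper's footnote on Triebel--Lizorkin spaces implicitly uses), but the paper is careful to spell it out, and you should be aware that the scalar embedding does not automatically transfer to Banach-space targets. Apart from this level of detail, your argument and the paper's are essentially the same.
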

\begin{proof}
Since by construction $\Phi_0$ is periodic and differentiable away from $0 \in \B$ for $d=2$ or $(\{k_1=k_2=0\}\cup\{k_1=k_3=0\}\cup\{k_2=k_3=0\}) \cap \B$ for $d=3$, the above bounds \eqref{BoundNabla} (respectively \eqref{BoundNabla3d}) immediately imply  that $\phi_a$ is in $W^{1,p}(\T^d;\Hi)$ for all $p\in(1,2)$.

Denote by $\set{\mathfrak{e}_n}_{n \in \N}$ an orthonormal basis for the Hilbert space $\Hi$. For any fixed $n \in \N$, let $f(k) \equiv f_n(k) := \scal{\mathfrak{e}_n}{\phi_a(k)}$; in view of the considerations above, this function lies in $W^{1,p}(\T^d)$ for all $p < 2$. We show now that for any $s \in (0,1)$ there exists $p = p(s) \in (1, 2)$ such that $W^{1,p}(\T^d) \hookrightarrow H^s(\T^d)$%
\footnote{An alternative proof of this fact goes as follows. Denoting by $\set{F_{p,q}^s}$ the scale of Triebel-Lizorkin spaces (see \eg \cite{RunstSickel}) one has that $W^{1,p} = F_{p,p}^1 \subseteq F^1_{p,\infty}$ is continuously embedded in $F^s_{2,2} = W^{s,2} = H^s$ for $s = 1 - d (1/p - 1/2)$, in view of \cite[Theorem 2.2.3]{RunstSickel}.  Thus, up to a continuous embedding, $f$ is in $H^s$ for every $s < 1$, yielding the claim.}%
, with moreover $p(s) \nearrow 2$ as $s \nearrow 1$.

To this end, we argue as follows. For $p \in (1,2)$, denote by $p'$ the conjugated exponent, such that $1/p + 1/p' = 1$; then $p' \in (2, + \infty)$ and $p' \searrow 2$ as $p \nearrow 2$. Also, let $\set{\widehat{f}_\gamma}_{\gamma \in \Lambda^*}$ denote the Fourier coefficients of $f$. Then, for $s \in (0,1)$ by the H\"{o}lder inequality
\begin{equation} \label{eqn:Simple1}
\norm{f}_{H^s}^2 = \sum_{\gamma \in \Lambda^*} (1 + |\gamma|^2)^s \, \left| \widehat{f}_\gamma \right|^2 \le \norm{(1 + |\gamma|^2) \left|\widehat{f}_\gamma\right|^2}_{\ell^{p'/2}} \, \norm{(1 + |\gamma|^2)^{s-1}}_{\ell^{(p'/2)'}}.
\end{equation}
Now $(p'/2)' = p'/(p'-2)$ diverges as $p \nearrow 2$, hence for all $s \in (0,1)$ there exists $p \in (1,2)$ such that 
\[ C_{s,p} := \norm{(1 + |\gamma|^2)^{s-1}}_{\ell^{(p'/2)'}} = \left( \sum_{\gamma \in \Lambda^*} (1 + |\gamma|^2)^{\frac{(s-1) p'}{p'-2}} \right)^{\frac{p'-2}{p'}} < + \infty. \]
We can then deduce from \eqref{eqn:Simple1} that
\[ \norm{f}_{H^s}^2 \le C_{s,p} \left( \norm{\widehat{f}_\gamma}^2_{\ell^{p'}} + \norm{\gamma \widehat{f}_\gamma}_{\ell^{p'}}^2 \right)
\]
as trivially $\norm{|h_\gamma|^2}_{\ell^{q/2}} = \norm{h_\gamma}_{\ell^q}^2$. In view of the Hausdorff-Young inequality $\norm{\widehat{g}}_{\ell^{p'}} \le C_p \norm{g}_{L^p}$ and of the fact that $\iu \gamma \widehat{f}_\gamma = \widehat{\nabla f}_\gamma$, we conclude that
\[ \norm{f}_{H^s}^2 \le C_{s,p} \left( \norm{f}_{L^p}^2 + \norm{\nabla f}_{L^p}^2 \right) = C_{s,p} \left( \norm{|f|^2}_{L^{p/2}} + \norm{|\nabla f|^2}_{L^{p/2}} \right). \]

The above estimate yields the desired result that $\phi_a \in H^s(\T^d; \Hi)$. Indeed, we can apply it to each of its coordinates $f_n(k)$ in the basis $\set{\mathfrak{e}_n}_{n \in \N}$. Owing to the fact that $p/2<1$, the reverse Minkowski inequality then gives
\begin{align*}
\norm{\phi_a}_{H^s(\T^d;\Hi)}^2 & = \sum_{n\in \N} \norm{f_n}_{H^s}^2 \le C_{s,p} \sum_{n \in \N} \left( \norm{|f_n|^2}_{L^{p/2}} + \norm{|\nabla f_n|^2}_{L^{p/2}} \right) \\
& \le C_{s,p} \left( \norm{\sum_{n \in \N} |f_n|^2}_{L^{p/2}} + \norm{\sum_{n \in \N} |\nabla f_n|^2}_{L^{p/2}} \right) \\
& = C_{s,p} \left[ \left( \frac{1}{|\B|} \int_\B \left(\norm{\phi_a(k)}_{\Hi}^2\right)^{p/2} \, \di k\right)^{2/p} + \left( \frac{1}{|\B|} \int_\B \left(\norm{\nabla \phi_a(k)}_{\Hi}^2\right)^{p/2} \, \di k\right)^{2/p} \right] \\
& \le C \norm{\phi_a}_{W^{1,p}(\T^d;\Hi)}^2
\end{align*}
as wanted.
\end{proof}

This completes the proof of Theorem~\ref{Thm:PeriodicSobolevBlochFrames}.


\section{Smooth approximation by Bloch frames}
\label{sec:SmoothApprox}

In the previous Section, we have shown that any analytic and periodic family of projectors, be it Chern trivial or not, admits a Bloch frame of Sobolev regularity $H^s$ for all $s<1$. In this and the next Section, instead, we will be concerned with the threshold case $s=1$. In particular, here we will show that an analytic and periodic family of orthogonal projectors admitting a Sobolev frame in $H^1$ can be approximated, in the $H^1$-topology, by means of \emph{Chern-trivial} families of projectors, which moreover have the property that their ranges all lie in some fixed \emph{finite-dimensional} subspace of the Hilbert space $\Hi$. In the next Section, we will deduce Theorem~\ref{Thm:NoSobolevBlochFrames} from this result, which retains however independent interest.

For the statement of the next results, recall that the finite-dimensional subspace $V_n \subset \Hi$ was constructed in Lemma~\ref{dimreduction}.  Taking periodicity into account,  we consider all periodic objects as defined on the torus $\T^d := \R^d/ \Lambda$.

\begin{thm} \label{thm:SmoothApprox}
Assume $2 \le d \le 3$. 
\begin{enumerate}[label=(\arabic*), ref=(\arabic*)]
 \item \label{item:Galerkin} Let $\Phi \in H^1(\T^d; \Hi^m)$ be a periodic $m$-frame. Then there exist a sequence of periodic $m$-frames $\Xi^{(n)} \in H^1(\T^d; {V_n}^m) \subset H^1(\T^d, \Hi^m)$ such that
 \[ \norm{\Phi - \Xi^{(n)}}_{H^1(\T^d;\Hi^m)} \xrightarrow[n \to \infty]{} 0. \]
 \item \label{item:Gal_analytic} For fixed $n \in \N$, let $\Xi \in H^1(\T^d; {V_n}^m)$ be a periodic $m$-frame. Then there exist a sequence of \emph{analytic} periodic $m$-frames $\Xi^{(\ell)} \in C^\omega(\T^d; {V_n}^m)$ such that
 \[ \norm{\Xi - \Xi^{(\ell)}}_{H^1(\T^d;{V_n}^m)} \xrightarrow[\ell \to \infty]{} 0. \]
\end{enumerate}
\end{thm}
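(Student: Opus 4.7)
For part \ref{item:Galerkin}, the natural approach is to project each component $\phi_a(k)$ onto the finite-dimensional subspace $V_n$ and then orthonormalize. Set $\psi_a^{(n)}(k) := E_n \phi_a(k)$; since $E_n$ is a bounded linear operator independent of $k$, it commutes with $\nabla_k$, and the strong convergence $E_n \to \Id_\Hi$ combined with dominated convergence yields $\psi_a^{(n)} \to \phi_a$ in $H^1$. The tuples $(\psi_1^{(n)}, \ldots, \psi_m^{(n)})$ are not orthonormal, but the Gram matrix $G_n(k)_{ab} := \scal{E_n \phi_a(k)}{E_n \phi_b(k)}$ satisfies $\delta_{ab} - G_n(k)_{ab} = \scal{\phi_a(k)}{(\Id - E_n) \phi_b(k)}$. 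This tends to $0$ uniformly in $k$ because the image of $\Phi$ lies in the compact set $\mathcal{K} \subset \Hi$ introduced in the proof of Lemma~\ref{dimreduction}, on which $\Id - E_n \to 0$ uniformly by \eqref{Unif conv}. Thus for $n$ large $G_n(k)^{-1/2}$ is well-defined and uniformly close to $\mathbb{I}_m$, and one defines the orthonormal frame $\xi_a^{(n)} := \sum_b (G_n^{-1/2})_{ab}\, E_n \phi_b \in V_n$.

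To prove $\Xi^{(n)} \to \Phi$ in $H^1$, the crucial estimate is $\nabla G_n \to 0$ in $L^2(\T^d)$. Differentiating the orthonormality identity $\scal{\phi_a}{\phi_b} \equiv \delta_{ab}$ allows one to write
\[ \partial_j G_n(k)_{ab} = -\scal{\partial_j \phi_a(k)}{(\Id - E_n)\phi_b(k)} - \scal{(\Id - E_n)\phi_a(k)}{\partial_j \phi_b(k)}, \]
whose $L^2$-norm is controlled by $\sup_k \norm{(\Id - E_n)\phi_b(k)} \cdot \norm{\nabla \phi_a}_{L^2} \to 0$. Since the map $A \mapsto A^{-1/2}$ is real-analytic near $\mathbb{I}_m$, this yields $\nabla(G_n^{-1/2}) \to 0$ in $L^2$; combined with the pointwise bound $\norm{E_n \phi_b(k)}_\Hi \le 1$, one concludes $\Xi^{(n)} \to \Phi$ in $H^1$.

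For part \ref{item:Gal_analytic}, the problem is fully finite-dimensional: $\Xi$ is an $H^1$-map from $\T^d$ into the compact real-analytic submanifold $\mathrm{St} \subset V_n^m$ of orthonormal $m$-frames in $V_n$, diffeomorphic to $U(n)/U(n-m)$. Since $\mathrm{St}$ is $2(n-m)$-connected, after enlarging $n$ if necessary (which is permitted thanks to part \ref{item:Galerkin}) we may assume $\pi_k(\mathrm{St}) = 0$ for all $k \le 2$. The density results for Sobolev maps into target manifolds whose relevant homotopy groups vanish (Bethuel's theorem, detailed in Appendix~\ref{Sec:SobolevApprox}) then allow us to approximate $\Xi$ in $H^1$-norm by a smooth map $\Xi' \in C^\infty(\T^d; \mathrm{St})$.

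Finally, viewing $\Xi'$ as a smooth map into the ambient $V_n^m$, convolve componentwise with the periodic heat kernel (or truncate the Fourier series) to obtain analytic maps $\widetilde \Xi^{(\ell)} \in C^\omega(\T^d; V_n^m)$ converging to $\Xi'$ in the $C^1$-topology. Since $\mathrm{St}$ is a compact real-analytic submanifold, it admits an analytic tubular neighborhood with an analytic nearest-point retraction $r \colon U_\epsilon(\mathrm{St}) \to \mathrm{St}$; for $\ell$ large, $\widetilde\Xi^{(\ell)}(k) \in U_\epsilon$ uniformly in $k$, so $\Xi^{(\ell)} := r \circ \widetilde \Xi^{(\ell)}$ is an analytic map into $\mathrm{St}$ which converges to $\Xi'$ in $H^1$, and hence to $\Xi$. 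The main obstacle throughout is the density of smooth manifold-valued maps in $H^1(\T^d; \mathrm{St})$, which depends delicately on the topology of $\mathrm{St}$ and demands the non-trivial techniques of Appendix~\ref{Sec:SobolevApprox}; by contrast, the subsequent passage from smooth to analytic is a routine combination of componentwise heat-kernel smoothing with the analytic tubular retraction.
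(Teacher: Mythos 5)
Your proposal is correct and follows essentially the same two-step strategy as the paper: project onto $V_n$ and orthonormalize for part (1), then reduce to the finite-dimensional Stiefel manifold and invoke the Sobolev approximation machinery plus an analytic tubular retraction for part (2). Two small remarks. First, you orthonormalize via the Löwdin map $G_n^{-1/2}$ whereas the paper uses the explicit Gram--Schmidt determinant formula; both rely on the same uniform bound on $G_n-\mathbb{I}_m$ obtained from the compactness of $\mathcal{K}$, and your derivative identity $\partial_j (G_n)_{ab} = -\scal{\partial_j\phi_a}{(\Id-E_n)\phi_b}-\scal{(\Id-E_n)\phi_a}{\partial_j\phi_b}$ (valid because $\Phi\in H^1\cap L^\infty$ and $\scal{\phi_a}{\phi_b}\equiv\delta_{ab}$) gives a slightly more transparent route to the $H^1$-convergence than the paper's appeal to the algebra property of $H^1\cap L^\infty$. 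Second, your ``enlarge $n$ if necessary'' hedge is unnecessary: from the fibration $U(n-m)\to U(n)\to W_m(\C^n)$ and the long exact sequence, $\pi_2(W_m(\C^n))=0$ already for every $n\ge m$ (indeed $\pi_1(U(n-m))\to\pi_1(U(n))$ is injective, and $\pi_2(U(n))=0$), so the result holds for the fixed $n$ in the statement without needing to pass to a larger Galerkin space --- which is important, since part (2) is stated for a \emph{fixed} $n$, not one you are free to increase.
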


As a consequence of the above, pertaining periodic families of projectors, we have the following

\begin{thm} \label{Thm:NoSobolevFrames_ periodic}
Assume $d \leq 3$. Let $\mathcal{P} =\set{P(k)}_{k \in \T^d}$ be a family of orthogonal projectors satisfying Assumption \ref{Ass:projectors}, with finite rank $m \in \N^{\times}$. Suppose that there exists a global periodic Bloch frame $\Phi$ for $\mathcal{P}$ in $H^1(\T^d; \Hi^m)$. 

\noindent Then there exists a sequence of orthonormal $m$-frames $\Psi^{(n)} \in C^{\omega}(\T^d, {V_n}^m)$, such that $\Psi^{(n)} \to  \Phi$ in $H^1(\T^d, \Hi^m)$ as $n \to \infty$,  and such that the associated projectors  $Q^{(n)}$, defined by
\begin{equation} \label{Q projector}
Q^{(n)}(k) = \sum_{a =1}^m \ket{\psi_a^{(n)}(k)}  \bra{\psi_a^{(n)}(k)}, 
\end{equation}
are analytic and converge to $P$  in $H^1(\T^d, \mathcal{B}_2(\Hi))$, where $ \mathcal{B}_2(\Hi)$ denotes the Hilbert space of Hilbert-Schmidt operators acting on $\Hi$. 
\end{thm}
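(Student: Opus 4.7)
My plan is to combine the two parts of Theorem~\ref{thm:SmoothApprox} via a diagonal extraction, and then to promote convergence at the level of frames to convergence of the associated rank-$m$ projectors.

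First, I would apply Theorem~\ref{thm:SmoothApprox}\ref{item:Galerkin} to the given Bloch frame $\Phi \in H^1(\T^d;\Hi^m)$, producing a sequence of periodic $m$-frames $\Xi^{(n)} \in H^1(\T^d; V_n^m)$ with $\Xi^{(n)} \to \Phi$ in $H^1(\T^d;\Hi^m)$. Next, for each fixed $n$, I would invoke Theorem~\ref{thm:SmoothApprox}\ref{item:Gal_analytic} to approximate $\Xi^{(n)}$ by a sequence of \emph{analytic} periodic $m$-frames $\Xi^{(n,\ell)} \in C^\omega(\T^d; V_n^m)$ with $\Xi^{(n,\ell)} \to \Xi^{(n)}$ in $H^1(\T^d; V_n^m) \subseteq H^1(\T^d;\Hi^m)$ as $\ell \to \infty$. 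A standard diagonal extraction then yields $\Psi^{(n)} := \Xi^{(n,\ell(n))}$ with $\ell(n)$ large enough that $\|\Xi^{(n,\ell(n))} - \Xi^{(n)}\|_{H^1} < 1/n$; by the triangle inequality, $\Psi^{(n)} \in C^\omega(\T^d; V_n^m)$ and $\Psi^{(n)} \to \Phi$ in $H^1(\T^d; \Hi^m)$. The associated operators $Q^{(n)}$ defined by \eqref{Q projector} are consequently analytic in $k$ and of constant rank $m$.

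To promote this to the convergence $Q^{(n)} \to P$ in $H^1(\T^d; \mathcal{B}_2(\Hi))$, I would exploit the algebraic identity
\[
|\psi\rangle\langle\psi| - |\phi\rangle\langle\phi| = |\psi-\phi\rangle\langle\psi| + |\phi\rangle\langle\psi-\phi|.
\]
Combined with $\|\psi_a^{(n)}(k)\|_\Hi = \|\phi_a(k)\|_\Hi = 1$ pointwise, this yields $\|Q^{(n)}(k) - P(k)\|_{\mathcal{B}_2(\Hi)} \le 2 \sum_a \|\psi_a^{(n)}(k) - \phi_a(k)\|_{\Hi}$, which gives convergence in $L^2(\T^d; \mathcal{B}_2(\Hi))$. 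Differentiating the same identity and applying the Leibniz rule produces the pointwise bound
\[
\|\partial_j Q^{(n)}(k) - \partial_j P(k)\|_{\mathcal{B}_2} \le C \sum_{a=1}^m \bigl( \|\partial_j(\psi_a^{(n)} - \phi_a)(k)\|_\Hi + \|\partial_j\phi_a(k)\|_\Hi \, \|\psi_a^{(n)}(k) - \phi_a(k)\|_\Hi \bigr).
\]
The first contribution has $L^2(\T^d)$-norm tending to zero since $\Psi^{(n)} \to \Phi$ in $H^1$. For the second, the key observation is that $\|\psi_a^{(n)} - \phi_a\|_\Hi \le 2$ pointwise and that, along any subsequence extracted from the $L^2$-convergence so as to converge almost everywhere, Lebesgue's dominated convergence with dominating function $4\|\partial_j\phi_a\|^2 \in L^1(\T^d)$ gives
\[
\int_{\T^d} \|\partial_j\phi_a(k)\|_\Hi^2 \, \|\psi_a^{(n_k)}(k) - \phi_a(k)\|_\Hi^2 \, \di k \longrightarrow 0.
\]
The usual ``subsequence of any subsequence'' argument then upgrades this to convergence of the full sequence, completing the $H^1(\T^d;\mathcal{B}_2(\Hi))$ estimate.

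The main obstacle in this chain has already been carried by Theorem~\ref{thm:SmoothApprox} itself, where the restriction $d \leq 3$ enters and where one must simultaneously execute a finite-dimensional (Galerkin-type) truncation and an analytic approximation of an $H^1$ frame without spoiling the pointwise orthonormality constraint. Granted that result, the diagonal extraction is essentially cosmetic and the projector-level convergence reduces to the elementary rank-one identities above, together with a dominated-convergence argument; I do not anticipate any genuinely new difficulty at this stage.
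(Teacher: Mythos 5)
Your proof is correct and follows the paper's overall strategy: a diagonal extraction from Theorem~\ref{thm:SmoothApprox} parts~\ref{item:Galerkin} and~\ref{item:Gal_analytic} to produce $\Psi^{(n)} \in C^\omega(\T^d, V_n^m)$ converging to $\Phi$ in $H^1$, followed by a pointwise rank-one estimate for $Q^{(n)} - P$ that is then integrated. The paper works with the squared Hilbert--Schmidt estimate
\[
\norm{\ket{a}\bra{b} - \ket{e}\bra{f}}_{\rm HS}^2 \leq 2\norm{a}^2\norm{b-f}^2 + 2\norm{a-e}^2\norm{f}^2,
\]
while you use the equivalent first-order identity $\ket{\psi}\bra{\psi} - \ket{\phi}\bra{\phi} = \ket{\psi-\phi}\bra{\psi} + \ket{\phi}\bra{\psi-\phi}$; these are essentially the same observation. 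There is, however, a small gap in your stated derivative bound: applying Leibniz directly to your identity produces a term proportional to $\norm{\partial_j \psi_a^{(n)}}\,\norm{\psi_a^{(n)}-\phi_a}$, not to $\norm{\partial_j \phi_a}\,\norm{\psi_a^{(n)}-\phi_a}$. To reach the bound you wrote you need the extra absorbing step $\norm{\partial_j \psi_a^{(n)}} \leq \norm{\partial_j \phi_a} + \norm{\partial_j(\psi_a^{(n)}-\phi_a)}$ together with $\norm{\psi_a^{(n)}-\phi_a}\leq 2$; you should spell this out. Once that is done, your version of the remainder term has the fixed weight $\norm{\partial_j\phi_a}^2 \in L^1$, which allows a clean dominated-convergence argument along almost-everywhere convergent subsequences. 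This is arguably more direct than the paper's treatment, which retains the $n$-dependent weight $\norm{\partial_j\psi_a^{(n)}}^2$ and handles the resulting integral by pairing the strong $L^1$-convergence $\norm{\partial_j\psi_a^{(n)}}^2 \to \norm{\partial_j\phi_a}^2$ with the weak-$*$ $L^\infty$ convergence of $\norm{\psi_a^{(n)}-\phi_a}^2$ to zero. Both arguments are valid and buy the same result; yours is slightly more elementary at the cost of the explicit absorbing step.
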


Before proving Theorems~\ref{thm:SmoothApprox} and \ref{Thm:NoSobolevFrames_ periodic}, we make some comments regarding their statements. 

\begin{rmk}[Relation to the Galerkin method] \label{rmk:stronger}
The first item \ref{item:Galerkin} in Theorem~\ref{thm:SmoothApprox} is an approximation result, in the $H^1$-topology, of an $m$-frame in $\Hi$ by means of $m$-frames in the finite-dimensional subspace $V_n \subset \Hi$. As is evident from the proof of Lemma~\ref{dimreduction}, the choice of such subspace is tantamount to the choice of the truncation of a complete orthonormal system in $\Hi$ to a finite number of basis vectors: thus, $V_n$ could arise for example from the Galerkin method in a numerical scheme to construct the frame $\Phi$. The second point \ref{item:Gal_analytic} states instead that, inside this fixed Galerkin subspace $V_n$, all $H^1$-regular $m$-frames can be approximated by \emph{analytic} $m$-frames. Since the set of $m$-frames in $V_n$ is not a linear space, the approximation of a given Sobolev map by smooth maps is a non-trivial issue, which might be topologically obstructed. This issue is addressed in Appendix \ref{Sec:SobolevApprox}.
\end{rmk}

\begin{rmk}[Geometric reinterpretation] \label{rmk:geometric}
We may reinterpret Theorem~\ref{Thm:NoSobolevFrames_ periodic} in the following way. Consider the infinite dimensional Grassmann manifold $G_m(\Hi)$ of orthogonal projections onto $m$-planes in $\Hi$  and the  Stiefel manifold $W_{m}(\Hi)$ of orthonormal $m$-frames in $\Hi$. More precisely, 
\begin{align*}
G_m(\Hi) = & \set{P \in \BH :  P^2 = P = P^*,  \Tr P = m} \\
W_{m}(\Hi) = & \set{ J : \C^m \to \Hi \text{ linear isometry}},
\end{align*}
so that  $J = \sum_{a=1}^m \ket{\psi_a} \bra{\mathfrak{e}_a}$, where $\set{\psi_a} \subset \Hi$ is an $m$-frame and $\set{\mathfrak{e}_a}$ is the canonical basis of $\C^m$, and $J^*J = \id_m$. There is a natural map $\pi \colon W_m(\Hi) \to G_m(\Hi)$ sending each $m$-frame $ \Psi = \set{\psi_1, \ldots, \psi_m}$ into the orthogonal projection on its linear span, namely 
$$
\pi : J  \mapsto  JJ^* = \sum_{a=1}^{m}  \ket{\psi_a} \bra{\psi_a}.
$$
Notice that, at least formally, $W_m(\Hi)$ is a principal bundle over $G_m(\Hi)$ with projection $\pi$ and fiber $\U(\C^m)$.

The data $P$ and $\Phi$ appearing in the Theorem correspond to a commutative diagram     
\begin{equation} \label{sobolevcomm}
 \xymatrix{
 & W_m(\Hi) \ar[d]^{\pi} \\
\T^d \ar[r]^{{P}} \ar[ur]^{{\Phi}} & G_m(\Hi)
}  
\end{equation}
where we consider ${P} \in C^{\omega}(\mathbb{T}^d;\BH)$ and $\Phi \in H^1(\mathbb{T}^d;\Hi^m)$.

The Theorem concerns approximations of a given Sobolev frame $\Phi$ by analytic $m$-frames, which as already noticed is a non-trivial issue due to the fact that the target space $W_m(\Hi)$ is not a linear space. By exploiting the results in Appendix~\ref{Sec:SobolevApprox}, we first construct an approximating sequence $\Psi^{(n)}$ such that the corresponding projectors  $Q^{(n)} = \pi \circ \Psi^{(n)}$ approximate the original projectors $P$, so that the pairs $(\Psi^{(n)}, Q^{(n)})$ make the diagram \eqref{sobolevcomm} commutative. As a consequence, we will see in Theorem~\ref{Corollary} that the Bloch bundle associated to $P$ is trivial as \virg{limit} of trivial bundles (item \ref{item:triviality_sec7}) and a further approximation of $\Phi$ by analytic $m$-frames $\Phi^{(\ell)}$ is possible, under the additional constraint that $\pi \circ \Phi^{(\ell)} = P$, so that  the pairs $(\Phi^{(\ell)}, P)$ make the above diagram commutative (item \ref{item:approx_sec7}). 
\end{rmk}

We are now ready to prove Theorem~\ref{thm:SmoothApprox}.

\begin{proof}[Proof of Theorem~\ref{thm:SmoothApprox}]
Given the $m$-frame $\Phi$ as in the statement, and the projector $E_n$ on $V_n$ as in Lemma \ref{dimreduction}, in view of  \eqref{Unif conv} one has that for every $\eps \in (0,1/6]$ 
\begin{equation} \label{PV-1(k)}
\left| \scal{E_n \phi_a(k)}{E_n \phi_b(k)} - \delta_{a,b} \right| < 3 \eps \quad \text{for a.e. } k \in \T^d
\end{equation}
for $n$ sufficiently large.  In addition,  by product rules in Sobolev spaces we have $\scal{E_n \phi_a(\cdot)}{E_n \phi_b(\cdot)} \in H^1(\mathbb{T}^d) \cap L^\infty (\mathbb{T}^d)$ for each $1\leq a,b \leq m$. The previous point-wise bound \eqref{PV-1(k)} shows that the absolute value of each Gram determinant
$$ 
G^{(n)}_j(k) = G_j(E_n\phi_1(k), \ldots, E_n\phi_j(k)) : = \det \left( \left( \scal{E_n\phi_a (k)}{E_n \phi_b(k)} \right)_{1 \le a,b \le j} \right), \quad  
$$ 
for $j \in \set{1, \ldots, m}$, satisfies a uniform pointwise lower bound 
\[ |G^{(n)}_j(k)| > \frac{1}{2} \quad \text{a.e. on } \mathbb{T}^d \] 
for $n$ large enough. As a consequence, we can get a new orthonormal $m$-frame $\Xi^{(n)}$ via Gram-Schmidt orthonormalization, by using the well-known formula
\begin{equation} \label{Matricione}
\xi^{(n)}_a (k) =\frac{1}{\sqrt{G^{(n)}_{a-1}(k) G^{(n)}_a(k)}} \, \det 
\begin{pmatrix}
\scal{E_n \phi_1(k)}{E_n\phi_1(k)} & \ldots & \scal{E_n\phi_a(k)}{E_n\phi_1(k)} \\
\vdots  &  \ddots & \vdots \\
\scal{E_n\phi_1(k)}{E_n\phi_{a-1}(k)} & \ldots & \scal{E_n\phi_a(k)}{E_n \phi_{a-1}(k)} \\
E_n\phi_1(k)  & \ldots & E_n\phi_a(k) 
\end{pmatrix}
\end{equation} 
where $G_0:=1$. This procedure preserves the given Sobolev regularity according to the usual multiplication and composition rules in Sobolev spaces, \ie $\xi^{(n)}_a \in H^1(\mathbb{T}^d; V_n) \cap L^\infty (\mathbb{T}^d; V_n)$ for each $1\leq a \leq m$. 

Clearly,  $\scal{E_n\phi_i(\cdot)}{E_n\phi_j(\cdot)}$ and $G_j^{(n)}$ are bounded in $L^{\infty}(\T^d, \C)$, and $E_n \phi_j(\cdot)$ are bounded in  $L^{\infty}(\T^d, \Hi)$, uniformly in $n \in \N$. As $n \to \infty$, we have that $E_n \phi_j \to \phi_j$ in $H^1(\T^d, \Hi)$, and hence $\scal{E_n\phi_i(\cdot)}{E_n\phi_j(\cdot)} \to \delta_{i,j}$ and $G_j^{(n)} \to 1$ in $H^1(\T^d, \C)$.  By taking the limit $n \to \infty$ in \eqref{Matricione}, and using the continuity of the product in Sobolev spaces, one concludes that  $\Xi^{(n)}$ tends to $\Phi$ in $H^1(\T^d, \Hi^m)$, as claimed in \ref{item:Galerkin}. 

\medskip

As for \ref{item:Gal_analytic}, by using the orthonormal basis defining $V_n$, we identify $V_n$ with $\mathbb{C}^n$, as well as the induced Hermitian product on $V_n$ with the standard Hermitian  product on $\C^n$. Inside the complex vector space $M_n(\C)$ we may consider  the  Stiefel manifold $W_{m}(\C^n)$ of orthonormal $m$-frames in $\C^n$ (compare Remark~\ref{rmk:stronger}). More precisely, 
\begin{align*}
W_{m}(\C^n)  \simeq & \set{A \in M_n(\C) :  A^* A = \begin{pmatrix}
 \id_m  &  0   \\
 0  &    0  
\end{pmatrix}},
\end{align*}
so that  $A = [\psi_1, \ldots, \psi_m, 0, \ldots, 0]$, where the column vectors $\set{\psi_a} \subset \C^n$ are an $m$-frame. The  Stiefel manifold  is a smooth, compact and analytic  submanifold of $M_n(\C)$. Recall that homotopy groups of the Stiefel manifold can be computed, and in particular  $\pi_2(W_m(\C^n))=0$ when $n\geq m+2$ (see \cite[page 215, Equation 2]{Dubrovin}). In addition, the real scalar product on $M_n(\C)$ given by $\scal{A}{B}_{M_n(\C)} := \Re \Tr_{\C^n} (A^*B)$ induces the canonical Riemannian metric  on $W_{m}(\C^n)$.   

Recall that the Sobolev space of $W_m(\C^n)$-valued maps is defined from $H^1(\mathbb{T}^d;M_n(\C))$ through the obvious a.e.\ constraint. Since $\pi_2(W_m(\C^n))=0$ for $n \geq m+2$, according to Lemma \ref{smoothapproximation} there exists an approximating sequence $\{\Xi^{(\ell)}\}\subset C^{\omega}(\mathbb{T}^d; W_m(\C^n))$ such that ${\Xi}^{(\ell)} \to \Xi$ in $H^1(\T^d, W_m(\C^n))$  as $ \ell \to \infty$, \ie \ref{item:Gal_analytic} holds true. 
\end{proof}

Finally, we prove Theorem~\ref{Thm:NoSobolevFrames_ periodic}.

\begin{proof}[Proof of Theorem~\ref{Thm:NoSobolevFrames_ periodic}]
As observed in Remark \ref{rmk:dimension} and also in view of Lemma~\ref{dimreduction}, the result is trivial when $d=1$. The more interesting cases $d=2$ and $d=3$ follow directly from Theorem~\ref{thm:SmoothApprox}. Indeed, by a diagonal argument based on \ref{item:Galerkin} and \ref{item:Gal_analytic}, one concludes that there exists a sequence $\Psi^{(n)}$, with each $\Psi^{(n)}$ an $m$-frame in  $C^{\omega}(\T^d, {V_n}^m)$, that converges to  $\Phi$ in $H^1(\T^d, \Hi^m)$.

Now we prove that the projectors $Q^{(n)}$, as defined in \eqref{Q projector} in terms of $\Psi^{(n)}$, converge to $P$ in $H^{1}(\T^d, \mathcal{B}_2(\Hi))$ as $n$ tends to infinity. First, notice that for every $a,b, e,f \in \Hi$ the corresponding rank-one operators satisfy
\begin{equation} \label{HS norm}
\begin{aligned}
\norm{\ket{e}\bra{f}}_{\rm HS}^2 &= \norm{e}^2 \norm{f}^2 \\
\norm{\ket{a}\bra{b} - \ket{e}\bra{f}}_{\rm HS}^2 & \leq 2 \norm{a}^2 \norm{b -f}^2 + 2 \norm{a -e}^2 \norm{f}^2. 
\end{aligned}
\end{equation}
Since $Q^{(n)}$ is given by \eqref{Q projector}, it is real-analytic in $k$. Moreover,
$$
\partial_j Q^{(n)}(k) = \sum_{a =1}^m \ket{\partial_j \psi_a^{(n)}(k)}  \bra{\psi_a^{(n)}(k)} + \ket{\psi_a^{(n)}(k)}  \bra{\partial_j \psi_a^{(n)}(k)}, 
$$
so that by orthonormality and \eqref{HS norm} one concludes that
\begin{eqnarray*}
\norm{Q^{(n)}(k)}^2_{\rm HS} =  \sum_{a=1}^m \norm{\psi^{(n)}_a(k)}^2, \qquad \norm{\partial_j Q^{(n)}(k)}^2_{\rm HS} {\leq} 4 \sum_{a=1}^m \norm{\partial_j \psi^{(n)}_a(k)}^2,
\end{eqnarray*}
so that
$$
\norm{Q^{(n)}}_{H^{1}(\T^d, \mathcal{B}_2(\Hi))}^2 \leq 4 \norm{\Psi^{(n)}}_{H^{1}(\T^d, \Hi^m)}^2.
$$
By using the  inequality in \eqref{HS norm} and orthonormality, one notices that
\begin{align*}
\norm{Q^{(n)}(k) - P(k)}_{\rm HS}^2  & 
\leq m \sum_{a=1}^m \norm{\ket{\psi_a^{(n)}(k) }\bra{\psi_a^{(n)}(k) }  -  \ket{\phi_a(k) } \bra{\phi_a(k) }}_{\rm HS}^2 \\
& \leq 4m \sum_{a=1}^m \norm{ \psi_a^{(n)}(k) - \phi_a(k)}^2
\nonumber
\end{align*}
and that
\begin{align*}
\norm{\partial_j Q^{(n)}(k) - \partial_j P(k)}_{\rm HS}^2  & 
\leq 2 m \sum_{a=1}^m \norm{\ket{\partial_j  \psi_a^{(n)}(k)}\bra{\phi_a^{(n)}(k)}  - \ket{\partial_j  \phi_a(k)} \bra{\psi^{(n)}_a(k)}}_{\rm HS}^2 \\
& \leq 4m \sum_{a=1}^m  \norm{ \partial_j \psi_a^{(n)}(k)}^2 \norm{ \psi_a^{(n)}(k) - \phi_a(k)}^2  \\
& + 4m \sum_{a=1}^m  \norm{   \partial_j \psi_a^{(n)}(k) - \partial_j \phi_a(k)}^2.
\end{align*}

By integrating over $\T^d$ the previous inequalities, we easily get
\begin{equation} \label{Q-P}
\begin{aligned}
\norm{Q^{(n)} - P}_{H^{1}(\T^d,\mathcal{B}_2(\Hi))}^2  & \leq 4m \norm{\Psi^{(n)} - \Phi}_{H^{1}(\T^d,\Hi^m)}^2  \\
& + 4m \int_{\T^d} \sum_{j,a} \norm{\partial_j \psi^{(n)}_a(k)}^2 \norm{\psi_a^{(n)}(k) - \phi_a(k)}^2 \, \di k.
\end{aligned}
\end{equation}
Since $ \|  \partial_j \psi^{(n)}_a(\cdot)\|^2$ converges to $\norm{\partial_j \phi_a(\cdot)}^2$ in $L^1(\T^d)$ and $\|\psi_a^{(n)}(\cdot) - \phi_a(\cdot)\|^2$ goes to zero in the weak-$*$ topology of $L^{\infty}(\T^d)$, the integral on the second line of \eqref{Q-P} vanishes, so that as $n \to \infty$ the convergence of the projectors follows. 
\end{proof}


\section{\texorpdfstring{Proof of Theorem \ref{Thm:NoSobolevBlochFrames}}{Proof of Main Theorem 2}} \label{sec:Decay}

This Section is devoted to the proof of Theorem \ref{Thm:NoSobolevBlochFrames}.  Proposition~\ref{tau to periodic} reduces the problem to proving the periodic version of it, which we state below. Recall that $E_\ell$ is the orthogonal projection on the finite-dimensional subspace $V_\ell \subset \Hi$ introduced in Lemma \ref{dimreduction}.

\begin{thm}\label{Corollary}
Assume $d \leq 3$. Let $\mathcal{P} =\set{P(k)}_{k \in \T^d}$ be a family of orthogonal projectors satisfying Assumption \ref{Ass:projectors}, with finite rank $m \in \N^{\times}$. Whenever a global periodic Bloch frame $\Phi$ for $\mathcal{P}$ in $H^1(\T^d, \Hi^m)$ exists, we have:
\begin{enumerate}[label=(\roman*)]
 \item \label{item:triviality_sec7} \crucial{triviality of the Bloch bundle:}  for any choice of $i,j \in \set{1, \ldots, d}$ one has  
$$
c_1(P)_{ij} = 0
$$
where $c_1(P)_{ij}$ is defined in \eqref{c1};  hence, the Bloch bundle associated to $\mathcal{P}$ is trivial;  
 \item \label{item:approx_sec7} \crucial{approximation with analytic Bloch frames for $\mathcal{P}$:} there exists a sequence of global \emph{real-analytic} periodic Bloch frames $\set{\Phi^{(\ell)}}_{\ell \in \N}$ subordinated to $\mathcal{P}$, 
such that $\Phi^{(\ell)} {\to} \Phi $  in ${H^1}(\T^d, \Hi^m)$ as $\ell \to \infty$;
 \item \label{item:Galer_sec7} \crucial{approximation with analytic finite-dimensional frames:} there exists a sequence of global \emph{real-analytic} periodic  $m$-frames $\set{\Phi^{(\ell)}}_{\ell \in \N}$ such that   
\begin{equation} \label{EPE}
E_{\ell} P(k) E_{\ell} \, \Phi^{(\ell)}(k) = \Phi^{(\ell)}(k) \qquad  \mbox{for all } k \in \T^d, \ell \in \N, 
\end{equation}
and $\Phi^{(\ell)}$ converges to $\Phi $  in ${H^1}(\T^d, \Hi^m)$ as $\ell \to \infty$.
\end{enumerate} 
\end{thm}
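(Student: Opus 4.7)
The plan is to leverage the approximation machinery of Theorem \ref{Thm:NoSobolevFrames_ periodic}, already established in the previous section, which produces a sequence $\{\Psi^{(n)}\}$ of real-analytic periodic $m$-frames, each taking values in the finite-dimensional subspace $V_n \subset \Hi$, with $\Psi^{(n)} \to \Phi$ in $H^1(\T^d; \Hi^m)$, and whose associated projectors $Q^{(n)}(k) := \sum_{a=1}^m \ket{\psi^{(n)}_a(k)}\bra{\psi^{(n)}_a(k)}$ are analytic and converge to $P$ in $H^1(\T^d; \mathcal{B}_2(\Hi))$. Claim (iii) is then essentially a reformulation of this statement: the frames $\Psi^{(n)}$ span by construction the $m$-dimensional range of the truncated operator $E_\ell P(k) E_\ell$ inside $V_\ell$, which is the content of \eqref{EPE}.

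For (i), the key observation is that each $Q^{(n)}$ admits $\Psi^{(n)}$ as a global real-analytic periodic Bloch frame, hence its Bloch bundle is trivial and its first Chern numbers vanish: $c_1(Q^{(n)})_{ij} = 0$ for all $n$ and all $1 \leq i < j \leq d$. This follows from a Stokes-type computation, since on the closed manifold $\T^d$ the integral of $\di\,\Tr \mathcal{A}$ (where $\mathcal{A}$ is the Berry connection $1$-form in the given frame) vanishes, and coincides with the Chern-number integrand. What remains is to verify that $c_1(Q^{(n)})_{ij} \to c_1(P)_{ij}$. For this I would use the pointwise estimate
\[
\bigl| \Tr\bigl( Q \, [ \partial_i Q, \partial_j Q ] \bigr) \bigr| \leq 2 \, \norm{Q}_{\mathrm{op}} \, \norm{\partial_i Q}_{\mathcal{B}_2(\Hi)} \, \norm{\partial_j Q}_{\mathcal{B}_2(\Hi)}
\]
(a consequence of H\"older for Schatten classes, recalling $\norm{Q}_{\mathrm{op}} = 1$ for orthogonal projections), expand the difference $\Tr(Q^{(n)}[\partial_i Q^{(n)}, \partial_j Q^{(n)}]) - \Tr(P[\partial_i P, \partial_j P])$ into three summands, and bound each in $L^1(\T^d)$ via $\norm{Q^{(n)} - P}_{H^1(\T^d; \mathcal{B}_2(\Hi))}$ and uniformly controlled $L^2$-norms of derivatives, whence Cauchy--Schwarz yields the desired convergence. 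The triviality of the Bloch bundle in dimension $d \leq 3$ then follows from the vanishing of all first Chern numbers, as recalled in Section \ref{sec:results}.

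For (ii), the result of (i) guarantees the triviality of the Bloch bundle of $\mathcal{P}$, so by the results of \cite{Panati2007,BrouderPanati2007} there exists a global real-analytic periodic Bloch frame $\Phi_0$ for $\mathcal{P}$. Every other global periodic Bloch frame is uniquely of the form $\Phi = \Phi_0 \cdot U$ for a map $U \colon \T^d \to U(m)$, and $\Phi \in H^1(\T^d; \Hi^m)$ if and only if $U \in H^1(\T^d; U(m))$. Since $U(m)$ is a compact real-analytic Lie group with $\pi_2(U(m)) = 0$, the Sobolev approximation result Lemma \ref{smoothapproximation} in Appendix \ref{Sec:SobolevApprox} provides a sequence $U^{(\ell)} \in C^\omega(\T^d; U(m))$ approximating $U$ in $H^1$; setting $\Phi^{(\ell)} := \Phi_0 \cdot U^{(\ell)}$ yields the required analytic Bloch frames subordinated to $\mathcal{P}$.

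The main obstacle is the passage to the limit in the Chern integral in (i). Since the integrand is trilinear in $P$ and its first derivatives, a naive Sobolev scaling would require an $H^1 \hookrightarrow L^3$ embedding, which fails in $d = 3$. The saving grace is that only two of the three factors carry a derivative while the third is uniformly bounded in operator norm, reducing the passage to the limit to a Cauchy--Schwarz estimate in $L^2(\T^d; \mathcal{B}_2(\Hi))$ that the $H^1$-convergence of the projectors accommodates.
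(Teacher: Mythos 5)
Your treatments of the three items have quite different levels of rigor, so let me address them separately.

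For item (ii), your argument is the same as the paper's: express $\Phi = \Phi_0 \cdot U$ relative to an analytic reference Bloch frame $\Phi_0$ (which exists by triviality), and approximate $U \in H^1(\T^d; \U(\C^m))$ by analytic unitaries using Lemma~\ref{smoothapproximation} and $\pi_2(\U(\C^m)) = 0$. This is correct.

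For item (i), your route (three-term telescoping of the Chern integrand plus Schatten--H\"older bounds) is in the same spirit as the paper's, but the assertion that ``Cauchy--Schwarz yields the desired convergence'' does not close the argument. After telescoping, two of the three summands are indeed controlled by $\|\partial_i Q^{(n)} - \partial_i P\|_{L^2(\mathcal{B}_2)}$ times a uniformly bounded $L^2(\mathcal{B}_2)$-norm, but the remaining term is of the schematic form
\[
\int_{\T^d} \|Q^{(n)}(k) - P(k)\|_{\mathrm{op}} \, \|\partial_i Q^{(n)}(k)\|_{\mathcal{B}_2} \, \|\partial_j Q^{(n)}(k)\|_{\mathcal{B}_2} \, \di k ,
\]
and no H\"older pairing of $L^2 \cdot L^2 \cdot L^\infty$ makes this vanish, since $\|Q^{(n)} - P\|_{\mathrm{op}}$ is only bounded (not small) in $L^\infty$ and the product of the two derivative factors lives only in $L^1$. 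The paper resolves this with a generalized dominated convergence argument: a subsequence of $Q^{(n)}$ converges to $P$ a.e., the integrand is dominated by a fixed $L^1$ function (or by the convergent sequence $2\|\partial_i Q^{(n)}\|\|\partial_j Q^{(n)}\|$), and then one upgrades from subsequences. You would need to add that ingredient. You should also explicitly address $d=3$: the paper reduces to the $2$-dimensional argument on almost-every slice $\B_{ij}^{(z)}$.

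For item (iii) there is a genuine gap. You claim it is ``essentially a reformulation'' of Theorem~\ref{Thm:NoSobolevFrames_ periodic}, but the frames $\Psi^{(n)}$ produced there do \emph{not} satisfy the constraint \eqref{EPE}. Tracing the construction: in Theorem~\ref{thm:SmoothApprox} part (1), the Gram--Schmidt frames $\Xi^{(n)}$ indeed span $\Ran(E_n P(k) E_n|_{V_n})$, so they do satisfy \eqref{EPE}; but in part (2), the analytic approximants $\Xi^{(\ell)}$ are obtained via Lemma~\ref{smoothapproximation} as unconstrained maps into the Stiefel manifold $W_m(\C^n)$ of \emph{all} orthonormal $m$-frames in $V_n$, with no requirement that they span the range of the truncated projector. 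Consequently the diagonal sequence $\Psi^{(n)}$ need not satisfy \eqref{EPE}. To prove (iii) you must first invoke the triviality established in (i) (which, by Lemma~\ref{dimreduction}, implies each $\widetilde{\mathcal{E}}_n$ is trivial), then repeat the argument of (ii) \emph{for the finite-dimensional bundle} $\widetilde{\mathcal{E}}_n$: pick an analytic reference Bloch frame for $\widetilde{\mathcal{E}}_n$, write the Gram--Schmidt frame $\Xi^{(n)}$ as the reference frame times a map $\T^d \to \U(\C^m)$, approximate by analytic unitaries, and finish with a diagonal argument.
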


Item \ref{item:Galer_sec7} might be interesting in the comparison between mathematical results and  numerical simulations, in the spirit of Remark \ref{rmk:stronger}. Notice how the finite-dimensional approximating frames $\Phi^{(\ell)}$ are \emph{not} Bloch frames for the family of projectors $\mathcal{P}$: however, we will see in the proof that they are still frames for a bundle which is \emph{isomorphic} to the Bloch bundle associated to $\mathcal{P}$.

\subsection{Berry connection and Berry curvature}

Before proving the above result, we recall some basic facts on the \emph{Berry connection} and \emph{Berry curvature} forms associated to a family of orthogonal projectors as in Assumption \ref{Ass:projectors}.

The Berry connection was already introduced in \eqref{eqn:BerryConn}. Its restriction to the Bloch bundle associated to $\mathcal{P}=\set{P(k)}_{k \in \R^d}$ endows it with a connection, also named after Berry. Whenever a Bloch frame $\Phi$ subordinated to $\mathcal{P}$ is given, one can compute the matrix-valued connection $1$-form as
\[ A = (A_{ab})_{1 \le a, b \le m}, \quad A_{ab} := - \iu \sum_{j=1}^{d} \scal{\phi_a(k)}{\partial_ j \phi_b(k)} \di k_j. \]
The trace of the above expression is the so-called \emph{abelian Berry connection} \cite{Resta_book}
\begin{equation} \label{A_Berry}
\A := - \iu \sum_{j=1}^{d} \sum_{a=1}^{m} \scal{\phi_a(k)}{\partial_ j \phi_a(k)} \di k_j.
\end{equation}

A straightforward computation, using only the Leibnitz property for frames in $H^1(\T^d, \Hi^m) \cap L^{\infty}(\T^d, \Hi^m)$, yields the following result. 

\begin{lemma} Let $\Phi$ be a Bloch frame in $H^1(\T^d, \Hi^m)$ for a smooth family of orthogonal projectors $\mathcal{P}$. Consider the smooth $2$-form 
$$
\Omega =  - \iu \sum_{i < j} \Tr \Big( P(k) \left[ \partial_i P(k), \partial_j P(k) \right] \Big) \, \di k_i \wedge \di k_j.
$$
Then one has
$$
\Omega  =  \sum_{i < j}   \sum_{a=1}^m
2 \, \mathrm{Im} \inner{\partial_i \phi_a(k)}{\partial_j \phi_a(k)}  \di k_i \wedge \di k_j  = \di \A 
$$
where the equality holds true in the sense of $2$-forms with $L^1$-coefficients.
\end{lemma}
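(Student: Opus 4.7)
The proof amounts to two essentially independent computations: one that rewrites the trace expression for $\Omega$ in terms of the frame, and a second that identifies the resulting $2$-form with $\di\A$. The only non-algebraic point will be a density argument to justify the second computation when $\Phi$ is merely in $H^1$.

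For the first identity, I would work pointwise a.e.\ in $k$. The starting observation is that differentiating $P(k)\phi_a(k)=\phi_a(k)$ yields $(\partial_j P)\phi_a=(\Id-P)\partial_j\phi_a$, so $\partial_j P$ is off-diagonal with respect to the splitting $\Hi=\Ran P\oplus\Ran P^\perp$. Extending the frame to an orthonormal basis of $\Hi$, the trace collapses to a sum over $a\in\set{1,\ldots,m}$, and after using the self-adjointness of $\partial_i P$ together with the projection property of $\Id-P$ one obtains
\[
\Tr\bigl(P\,\partial_iP\,\partial_jP\bigr)=\sum_{a=1}^m \scal{\partial_i\phi_a}{\partial_j\phi_a}-\sum_{a=1}^m \scal{\partial_i\phi_a}{P\,\partial_j\phi_a}.
\]
Expanding $P=\sum_b\ket{\phi_b}\bra{\phi_b}$ in the last sum and using the antihermiticity $\scal{\phi_b}{\partial_\ell\phi_a}=-\scal{\partial_\ell\phi_b}{\phi_a}$ inherited from orthonormality, a short index relabeling shows that $\sum_a\scal{\partial_i\phi_a}{P\,\partial_j\phi_a}$ is symmetric under $i\leftrightarrow j$, so it drops out of the antisymmetric combination $\Tr\bigl(P[\partial_iP,\partial_jP]\bigr)$. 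What survives is
\[
\Tr\bigl(P[\partial_iP,\partial_jP]\bigr)=\sum_{a=1}^m\bigl(\scal{\partial_i\phi_a}{\partial_j\phi_a}-\scal{\partial_j\phi_a}{\partial_i\phi_a}\bigr)=2\iu\sum_{a=1}^m\mathrm{Im}\scal{\partial_i\phi_a}{\partial_j\phi_a},
\]
and multiplication by $-\iu$ reproduces the first equality; each summand on the right lies in $L^1(\T^d)$ by Cauchy--Schwarz.

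For the identity $\Omega=\di\A$, the normalization $\|\phi_a\|\equiv 1$ first gives $\mathrm{Re}\scal{\phi_a}{\partial_j\phi_a}=0$, so that the coefficients of $\A$ are purely imaginary. The content of the computation then reduces to the distributional identity
\[
\partial_i\scal{u}{\partial_ju}-\partial_j\scal{u}{\partial_iu}=\scal{\partial_iu}{\partial_ju}-\scal{\partial_ju}{\partial_iu}
\]
for $u\in H^1(\T^d,\Hi)\cap L^\infty$, which for smooth $u$ is just the Leibniz rule plus the symmetry of mixed partials. Taking $u=\phi_a$, summing over $a$ and rearranging brings $\di\A$ into the desired form $2\sum_{i<j}\sum_a\mathrm{Im}\scal{\partial_i\phi_a}{\partial_j\phi_a}\,\di k_i\wedge\di k_j$.

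The main (and essentially only) point requiring care is justifying this last identity without assuming the $\phi_a$ smooth, since its naive derivation involves $\partial_i\partial_j\phi_a$, which here is only a distribution. I would close this gap by a standard approximation argument: picking $u^{(n)}\in C^\infty(\T^d,\Hi)$ with $u^{(n)}\to\phi_a$ in $H^1$, Cauchy--Schwarz combined with the pointwise bound $\|\phi_a\|= 1$ shows that both $\scal{u^{(n)}}{\partial_j u^{(n)}}$ and $\scal{\partial_i u^{(n)}}{\partial_j u^{(n)}}$ converge in $L^1(\T^d)$ to their expected limits. The identity, trivial for the smooth approximants $u^{(n)}$, therefore passes to the limit in $\mathcal{D}'(\T^d)$, producing the stated equality of $2$-forms with $L^1$-coefficients.
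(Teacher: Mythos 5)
The paper gives no detailed proof for this lemma, only the hint that it follows from ``a straightforward computation, using only the Leibnitz property for frames in $H^1(\T^d, \Hi^m) \cap L^{\infty}(\T^d, \Hi^m)$.'' Your proof is correct and fills in exactly this computation: the algebraic rewriting of $\Tr(P[\partial_i P,\partial_j P])$ via $(\partial_jP)\phi_a=(\Id-P)\partial_j\phi_a$ and the symmetry of $\sum_a\scal{\partial_i\phi_a}{P\partial_j\phi_a}$ under $i\leftrightarrow j$, followed by the Leibniz identity $\partial_i\scal{\phi_a}{\partial_j\phi_a}-\partial_j\scal{\phi_a}{\partial_i\phi_a}=\scal{\partial_i\phi_a}{\partial_j\phi_a}-\scal{\partial_j\phi_a}{\partial_i\phi_a}$ justified by $H^1$-density, which is precisely the regularity point the paper's hint alludes to.
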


The smooth form $\Omega$ from the above Lemma is called the \emph{Berry curvature} associated to the family of projectors $\mathcal{P}$. When integrated over a $2$-torus $\B_{ij} \simeq \T^2$, it gives the Chern number $c_1(P)_{ij}$ (compare \eqref{c1}). The ``divergence structure'' $\Omega = \di \A$ will be useful in what follows.

\subsection{\texorpdfstring{Proof of Theorem~\ref{Corollary}}{Proof of Theorem 7.1}}

We are now ready to prove Theorem~\ref{Corollary}.

\begin{proof}[Proof of Theorem \ref{Corollary}]
The first step is to prove the triviality of the Bloch bundle, \ie item \ref{item:triviality_sec7}. Let $\Psi^{(n)}$ and $\set{Q^{(n)}(k)}_{k \in \T^d}$ be as in the statement of Theorem~\ref{Thm:NoSobolevFrames_ periodic}.

We start with the case $d=2$. The crucial step is to prove that
\begin{multline} \label{Chern limit}
c_1(P) = \int_{\T^2}  \Tr \Big( P(k) \left[ \partial_1 P(k), \partial_2 P (k) \right] \Big) \, \di k_1 \wedge \di k_2 \\ 
= \lim_{n \to \infty}  \int_{\T^2}  \Tr \Big( \underbrace{Q^{(n)}(k)}_{=:f_n(k)} \underbrace{\left[ \partial_1 Q^{(n)}(k), \partial_2 Q^{(n)}(k) \right]}_{=:g_n(k)} \Big) \, \di k_1 \wedge \di k_2. 
\end{multline}
To see that, one notices that $f_n$ is uniformly bounded by $1$ in $L^{\infty}(\T^d,\BH)$, and up to subsequences converges to  $f:= P$  in $\BH$ for a.e.\ $k$. Moreover, in view of the $H^{1}$-convergence of projectors proved in Theorem~\ref{Thm:NoSobolevFrames_ periodic}, $g_n$ converges to $g:= \left[ \partial_1 P, \partial_2 P \right]$ in  $L^1(\T^d, \mathcal{B}_1(\Hi))$, where $\mathcal{B}_1(\Hi)$ denotes the algebra of trace-class operators on $\Hi$. Thus
\begin{align} \label{fg-fg}
\left| \int_{\T^2}  \Tr \left( f g - f_n g_n \right)  \right| \leq \int_{\T^2} | \Tr \left( (f - f_n) g \right)|  +  \int_{\T^2} | \Tr \left( f_n (g - g_n) \right)| =: \mathrm{I} + \mathrm{II}.
\end{align}
Clearly, the term $\mathrm{II}$ satisfies
$$
\mathrm{II} = \int_{\T^2} | \Tr \left( f_n (g - g_n) \right)|  \leq \norm{f_n}_{L^{\infty}(\T^2, \BH)}  \norm{g_n - g}_{L^1(\T^2, \mathcal{B}_1(\Hi))} \underset{n \to \infty}{\longrightarrow} 0.
$$
As for the term $\mathrm{I}$, notice that $|\Tr((f- f_n) g)|$ is pointwise dominated by $2 \Tr(|g|) \in L^1(\T^2)$. Moreover, since $f_n$ tends to $f$ in $L^2(\T^2,\BH)$, every subsequence of  $|\Tr((f- f_n) g)|$ has a further subsequence which goes to zero almost everywhere on $\T^2$. By dominated convergence, we conclude that the term $\mathrm{I}$ vanishes as $n \to \infty$. In view of \eqref{fg-fg}, the claim in \eqref{Chern limit} follows. 

Finally, one introduces the Berry connection $\mathcal{A}^{(n)}$ associated to $\Psi^{(n)}$ as in \eqref{A_Berry}. Since the corresponding curvature is globally given by 
$$
{ \Omega^{(n)}} = \Tr  \Big( Q^{(n)}(k) \left[ \partial_1 Q^{(n)}(k), \partial_2 Q^{(n)}(k) \right] \Big) \di k_1 \wedge \di k_2 = \di \mathcal{A}^{(n)},
$$
by \eqref{Chern limit} and Stokes theorem one has
\begin{equation} \label{c1_Stokes}
c_1(P) = \lim_{n \to \infty} \int_{\T^2} \di \mathcal{A}^{(n)}  = 0. 
\end{equation}
Since the vanishing of the first Chern class is sufficient for the triviality of the corresponding Hermitian bundle for $d \leq 3$ \cite[Proposition 4]{Panati2007}, the proof of \ref{item:triviality_sec7} in the $2$-dimensional case is concluded. 

The $3$-dimensional case requires some minor modifications.  For all $1 \le i < j \le 3$ consider the $2$-cycle $\B_{ij}^{(z)}$, $z \in \R$, homologous to  $\B_{ij}$, defined by
$$
\B_{ij}^{(z)} := \set{k \in \B : k_l = z \text{ if } l \notin \set{i, j} }.
$$ 
Since ${\Psi}^{(n)} \to {\Phi}$ in $H^{1}(\T^3, \Hi^m)$ and ${Q}^{(n)} \to {P}$ in $H^{1}(\mathbb{T}^3; \mathcal{B}_2(\Hi))$, by slicing one concludes that for almost every $z \in \R$ we have ${\Psi}^{(n)} \to {\Phi}$ in $H^{1}(\B_{ij}^{(z)}, \Hi^m)$ and ${Q}^{(n)} \to {P}$ in $H^{1}(\B_{ij}^{(z)}; \mathcal{B}_2(\Hi))$. The previous $2$-dimensional argument yields
\begin{equation} \label{c1j}
c_1 (P)_{ij} = \lim_{n \to \infty} \frac{1}{2 \pi \iu} \int_{\B_{ij}^{(z)}} \Tr \Big( Q^{(n)}(k) \left[ \partial_i  Q^{(n)}(k), \partial_j Q^{(n)}(k) \right] \Big) \, \di k_i \wedge \di k_j =0.
\end{equation}
Since $d = 3$, this condition is necessary and sufficient for the triviality of 
$\mathcal{E}$, concluding the proof of \ref{item:triviality_sec7}.


We now prove \ref{item:approx_sec7}.  Since $\mathcal{E}$ is trivial, as a consequence of Stein's theorem there exists an analytic Bloch frame $\set{\chi_a} \subset C^{\omega}(\T^d, \Hi)$  (see \cite{Panati2007} and references therein). We rewrite $\Phi$ as  $\phi_a = \sum_{b} \, \chi_b U_{ba}$, where $U \in H^{1}(\T^d, \U(\C^m))$ is given by  $U(k)_{ab} = \inner{\chi_a(k)}{\phi_b}$. Notice that $\U(\C^m)$ is a compact, boundaryless, analytic submanifold of $M_m(\C)$ and that $\pi_2(\U(\C^m))$ = 0. In view of Lemma  \ref{smoothapproximation}, there exists an approximating sequence $U^{(\ell)} \in C^{\omega}(\T^d, \U(\C^m))$ such that $ U^{(\ell)} \to U$ in $H^1(\T^d, \U(\C^m))$. By setting  $\Phi^{(\ell)}_a = \sum_b \chi_b \, U_{ba}^{(\ell)}$, one obtains a real-analytic Bloch frame which converges by construction to $\Phi$ in $H^1$. 


Finally, we prove \ref{item:Galer_sec7}. By \ref{item:triviality_sec7} the Bloch bundle $\mathcal{E}$ is trivial. By Lemma \ref{dimreduction}, the approximating bundles $\widetilde{\mathcal{E}}_n$ are isomorphic to $\mathcal{E}$ and hence trivial, for $n$ sufficiently large. Notice that, if $\Phi$ is a Bloch frame for $\mathcal{E}$, the $\Xi^{(n)}$ appearing in Theorem 6.1.\ref{item:Galerkin}  are by construction Bloch frames for $\widetilde{\mathcal{E}}_n$, \ie they satisfy  \eqref{EPE}, and converge to $\Phi$ in $H^1(\T^d, \Hi^m)$. Fix $n \in \N$, the corresponding bundle $\widetilde{\mathcal{E}}_n$ and a frame $\Xi$ in $H^1(\T^d,V_n^m)$ for $\widetilde{\mathcal{E}}_n$. Arguing as in the proof of part \ref{item:approx_sec7}, the frame $\Xi$ can be approximated by a sequence $\Psi^{(n)}$, where each $\Psi^{(n)}$ is a real analytic frame for $\widetilde{\mathcal{E}}_n$, and the sequence converges to  $\Xi$ in $H^1(\T^d, V_n^m)$. By a diagonal argument in $H^1(\T^d, \Hi^m)$, one obtains the desired approximating sequence. This concludes the proof.  
\end{proof}


\subsection{Simpler argument for the triviality of the Bloch bundle}

A simpler argument%
\footnote{The starting idea leading to this simpler argument originated in a stimulating discussion with H.~Cornean, to whom we are gratefully indebted.} %
can be used to prove the triviality of the Bloch bundle, \ie to prove item \ref{item:triviality_sec7} in Theorem \ref{Corollary}. We emphasize, however, that this simpler argument  does not provide an approximation of the given Sobolev frame by a sequence of $m$-frames, as opposed to the construction in the proof of Theorem \ref{Thm:NoSobolevFrames_ periodic}, since the approximation procedure does not respect the non-linear structure of the Stiefel manifold. Hence, the approximating sequence has no geometric meaning.  

First, one notices that $C^{\infty}(\T^d, \Hi^m)$ is dense in $H^1(\T^d, \Hi^m)$. Indeed we may construct, for instance by convolution or by considering a finite truncation of the Fourier expansion, an approximating  sequence $\set{\Phi^{(\ell)}}_{\ell \in \N} \subset C^{\infty}(\T^d, \Hi^m)$ such that  $\Phi^{(\ell)} \to \Phi$ in $H^1$ as $\ell \to \infty$. Notice that, in general, $\Phi^{(\ell)}$ is not a Bloch frame for $\mathcal{P}$ and not even an $m$-frame.

When working with the approximating sequence $\Phi^{(\ell)}$, we may exploit the fact that the approximating 2-form $\Omega^{(\ell)}$, namely 
$$
\Omega^{(\ell)} = \sum_{i < j}   2 \, \mathrm{Im} \inner{\partial_i \phi^{(\ell)}(k)}{\partial_j \phi^{(\ell)}(k)}  \di k_i \wedge \di k_j,  
$$
has a \virg{divergence structure}, in the sense that $\Omega^{(\ell)} = \di \mathcal{A}^{(\ell)}$, where 
\[\mathcal{A}^{(\ell)} = \sum_{j=1}^d  - \iu \inner{\phi^{(\ell)}(k)}{\partial_j \phi^{(\ell)}(k)} \, \di k_j \]
approximates the Berry connection. In view of the above, we can use Stokes' theorem%
\footnote{Notice that,  without using the approximating sequence, the identity $\Omega = \di \mathcal{A}$ still holds true in distributional sense (\ie integrating against smooth test forms). However, in general Stokes' theorem does not apply to distributional forms. } %
to conclude the argument (compare \eqref{c1_Stokes} and \eqref{c1j}). This proves that the Bloch bundle is trivial, and concludes the simpler 
proof  of item \ref{item:triviality_sec7} in Theorem \ref{Corollary}. 


\appendix


\section{Regularity of Bloch functions and localization of Wannier functions}
\label{app:Stein}

In this Appendix we will generalize the relation \eqref{Zak equivalence}, linking the $L^2$-decay at infinity of Wannier functions to the Sobolev $H^s$-regularity of the corresponding Bloch functions, to obtain a similar implication valid for general $s \ge 0$. We will employ the notation of Section~\ref{sec:MBF}. 

\begin{prop} \label{prop:Stein}
For $s \ge 0$, denote by $H^s_\tau(\R^d;\Hf) := \Hi_\tau^b \cap H^s\sub{loc}(\R^d;\Hf)$ the space of $\tau$-equivariant, locally $H^s$-regular functions with values in $\Hf \simeq L^2(Y_b)$. Let $u \in H^s_\tau(\R^d;\Hf)$ and define $w := \UZ^{-1} u \in L^2(\R^d)$. Then $\langle x \rangle^s w \in L^2(\R^d)$. Conversely, if $w \in L^2(\R^d, \langle x \rangle^{2s} \, \di x)$ then $u := \UZ w$ is in $H^s_\tau(\R^d;\Hf)$.
\end{prop}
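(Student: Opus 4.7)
The plan is to reduce Proposition~\ref{prop:Stein} to the classical Fourier characterization of vector-valued Sobolev spaces on the torus, by factoring the magnetic Bloch--Floquet transform through a discrete vector-valued Fourier series.

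First I would introduce the unitary isomorphism $\mathcal{I} \colon L^2(\R^d) \to \ell^2(\Gamma_b; \Hf)$ defined by $(\mathcal{I} w)_\gamma := (T^b_\gamma w)|_{Y_b}$: this is an isometry because each magnetic translation $T^b_\gamma$ is unitary on $L^2(\R^d)$ and the translates $\set{\gamma + Y_b}_{\gamma \in \Gamma_b}$ tile $\R^d$ up to a null set. Rewriting \eqref{Zak transform} as
\begin{equation*}
(\UZ w)(k, y) = \eu^{-\iu k \cdot y} \sum_{\gamma \in \Gamma_b} \eu^{\iu k \cdot \gamma} \, (\mathcal{I} w)_\gamma(y) =: \eu^{-\iu k \cdot y} \, v(k, y),
\end{equation*}
exhibits $\UZ$ as the composition of $\mathcal{I}$, the discrete Fourier series $(w_\gamma) \mapsto v$, and multiplication by the smooth phase $\eu^{-\iu k \cdot y}$. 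Since $y$ ranges over the bounded cell $Y_b$, all $k$-derivatives of this phase, namely $(-\iu y)^\alpha \eu^{-\iu k \cdot y}$, are uniformly bounded as elements of $\mathcal{B}(\Hf)$; consequently multiplication by $\eu^{\mp \iu k \cdot y}$ is a bounded automorphism of $H^s\sub{loc}(\R^d; \Hf)$ for every $s \ge 0$. In particular $u = \UZ w$ lies in $H^s_\tau(\R^d; \Hf)$ if and only if $v$, which is $\Gamma^*_b$-periodic in $k$ by construction, is locally $H^s$-regular in $k$.

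Second I would match the weighted $L^2$-norm of $w$ with the $H^s(\T^d; \Hf)$-norm of $v$. On the one hand, since $\set{\eu^{\iu k \cdot \gamma}}_{\gamma \in \Gamma_b}$ is an orthonormal basis of $L^2(\B_b)$, the Fourier characterization of $\Hf$-valued Sobolev spaces on the torus yields
\begin{equation*}
\norm{v}^2_{H^s(\T^d; \Hf)} \asymp \sum_{\gamma \in \Gamma_b} (1 + |\gamma|^2)^s \, \norm{(\mathcal{I} w)_\gamma}^2_{\Hf}
\end{equation*}
for every $s \ge 0$. On the other hand, finiteness of $\mathrm{diam}(Y_b)$ gives the uniform equivalence $\langle y + \gamma \rangle^{2s} \asymp (1 + |\gamma|^2)^s$ for $y \in Y_b$, whence
\begin{equation*}
\int_{\R^d} \langle x \rangle^{2s} \, |w(x)|^2 \, \di x = \sum_{\gamma \in \Gamma_b} \int_{Y_b} \langle y + \gamma \rangle^{2s} \, |(\mathcal{I} w)_\gamma(y)|^2 \, \di y \asymp \sum_{\gamma \in \Gamma_b} (1 + |\gamma|^2)^s \, \norm{(\mathcal{I} w)_\gamma}^2_{\Hf}.
\end{equation*}
Chaining these two equivalences with the preservation of Sobolev regularity under the phase twist proves both implications in Proposition~\ref{prop:Stein}.

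The principal technical point to address is that the Fourier-series characterization of $H^s(\T^d; \Hf)$ continues to hold for non-integer $s \ge 0$. This can be taken as the definition of the $\Hf$-valued Bessel-potential Sobolev space on $\T^d$; the equivalence of this definition with the usual one (via local charts and integer orders) is a standard fact, to which Stein's complex interpolation theorem -- alluded to in the title of this appendix -- gives direct access by interpolating between the integer endpoints $s = 0$ (Plancherel) and $s = n$ (the classical equivalence \eqref{Zak equivalence}). Everything else in the argument is a routine verification.
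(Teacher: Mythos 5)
Your proof is correct, and it takes a genuinely different (and arguably more streamlined) route than the paper's, even though both share the same skeleton. The paper also factors the Bloch--Floquet transform through a tiling-type unitary $T = (\mathcal{F}\otimes\Id)\circ\mathcal{V}\circ\UZ$ (your $\mathcal{I}$ is this $T$ up to signs), but it then establishes the map $T:L^2(\R^d,\langle x\rangle^{2s}\di x)\to L^2(\Gamma_b\times Y_b,\langle\gamma\rangle^{2s}\di\gamma\otimes\di y)$ for $s\in(0,1)$ by Stein complex interpolation between the endpoints $s=0$ and $s=1$. You instead observe the pointwise, uniform-in-$y\in Y_b$ weight equivalence $\langle y-\gamma\rangle^{2s}\asymp\langle\gamma\rangle^{2s}$, which delivers that same statement directly for every $s\ge 0$ with no interpolation of operators at all. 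Both proofs then still rely on (i) the Fourier characterization of $H^s(\T^d_b;\Hf)$ for fractional $s$, and (ii) the fact that multiplication by the smooth bounded phase $e^{\pm\iu k\cdot\{y\}}$ preserves $H^s\sub{loc}$ in $k$; the paper addresses these in footnotes (the latter via a Gagliardo-seminorm computation for $s\in(0,1)$), while you treat them as standard facts. The net effect is that your argument moves the only nontrivial fractional-order step to the Fourier characterization of $H^s(\T^d;\Hf)$---which is essentially definitional---whereas the paper invokes Stein's theorem for the weighted-$L^2$ part and buries the fractional phase-multiplication estimate in a footnote. Your argument is tighter in this respect; the one place you gloss over slightly is the claim that bounded $k$-derivatives of the phase suffice to make it a bounded automorphism of $H^s\sub{loc}$ for noninteger $s$, which does require a short Gagliardo or interpolation argument along the lines of the paper's footnote, but this is a minor stylistic point, not a gap.
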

\begin{proof}
The statement is true for integer $s \in \N$ in view of the fact that the magnetic Bloch-Floquet transform $\UZ$ intertwines the position operator $X_j$ on $L^2(\R^d)$ and the derivative $\iu \partial/\partial k_j$ on $\Hi_\tau^b = L^2_\tau(\R^d;\Hf)$, and hence by integration by parts establishes a unitary isomorphism between $L^2(\R^d, \langle x \rangle^{2s} \, \di x)$ and $H^s_\tau(\R^d;\Hf)$ (compare \eqref{Zak equivalence}).

Furthermore, in view of the results of Section~\ref{Sec:Reduction} (in particular Proposition~\ref{tau to periodic}), we have a linear homeomorphism%
\footnote{The fractional Sobolev space $H^s_\tau(\R^d;\Hf)$ of $\tau$-equivariant functions can be topologized by means of the norm  $\norm{u}_{H^s_\tau}^2 := \norm{u}_{L^2_\tau}^2 + \norm{u}_{\dot{H}^s_\tau}^2$, where the \emph{Gagliardo seminorm} $\norm{u}_{\dot{H}^s_\tau}$ is defined as
\[ \norm{u}_{\dot{H}^s_\tau}^2 := \iint_{(L \B_b)^2} \di k \, \di k' \, \frac{\norm{u(k) - u(k')}_{\Hf}^2}{|k-k'|^{d+2s}}. \]
In the above, $L>1$ is a dilation factor for the unit cell $\B_b$, which allows to control singularities of the function $u$ for points $k,k'$ which are close on the torus but ``far'' in the unit cell (say, points which are close to opposite sides of the cell). It is common lore that the same type of norm defined on periodic functions in $H^s(\T^d_b;\Hf)$ is equivalent to the one obtained by their Fourier decomposition, namely the $\ell^2(\Gamma_b)$-norm of the $\Hf$-valued sequence $\gamma \mapsto \langle \gamma \rangle^s (\mathcal{F} u)_\gamma$.} %
between $H^s_\tau(\R^d;\Hf)$ and $H^s(\T^d_b;\Hf)$ for all $s \ge 0$. Indeed, the linear map $\mathcal{V} := \int^{\oplus}_{\B_b} V(k)$, with $V(k)$ the multiplication operator times the smooth phase $\eu^{-\iu k \cdot \{y\}}$ on $\Hf \simeq L^2(Y_b)$ (compare Remark~\ref{rmk:ConcreteU}), maps $\tau$-equivariant to periodic functions, preserving the Sobolev regularity in $k$ of the function on which it acts%
\footnote{Multiplication by a smooth and bounded function $v$ with bounded derivatives, as \eg $v(k):=\eu^{\pm \iu k \cdot \{y\}}$, preserves the finiteness of the Gagliardo seminorm, affecting possibly the boundary conditions. Indeed, if for example $s \in (0,1)$ we have
\begin{align*}
\norm{u v}_{\dot{H}^s}^2 \le & \iint_{(L \B_b)^2} \di k \, \di k' \, |v(k')|^2 \, \frac{\norm{u(k) - u(k')}_{\Hf}^2}{|k-k'|^{d+2s}} \\
& + \int_{L \B_b} \di k \norm{u(k)}_{\Hf}^2 \int_{L \B_b} \di k' \, \frac{|v(k) - v(k')|^2}{|k-k'|^2} \, \frac{1}{|k-k'|^{d+2s-2}}.
\end{align*}
The first summand on the right-hand side of the above can be estimated by a term proportional to $\norm{v}_{L^\infty}^2 \norm{u}_{\dot{H}^s_{\tau}}^2$. As for the second summand, we notice that if $s \in (0,1)$ the function defined as $k' \mapsto |k-k'|^{-(d+2s-2)}$ is integrable and its integral is bounded by a $k$-independent constant $C$. Consequently, if $\norm{v}_{C^1}$ denotes the supremum of Lipschitz constants of the smooth function $v$ over compact subsets of $\R^d$, then
\[ \int_{L \B_b} \di k \norm{u(k)}_{\Hf}^2 \int_{L \B_b} \di k' \, \frac{|v(k) - v(k')|^2}{|k-k'|^2} \, \frac{1}{|k-k'|^{d+2s-2}} \le C \norm{v}_{C^1}^2 \, \norm{u}_{L^2_{\tau}}^2. \]}%
. Since the generator of $V(k)$ is a bounded operator (multiplication by the bounded function $y \mapsto \{y\} = y \bmod \Gamma_b$), $\mathcal{V}$ defines the required bounded linear operator between $H^s_\tau(\R^d;\Hf)$ and $H^s(\T^d_b;\Hf)$ with bounded inverse. 

We now come to the core of the proof. Without loss of generality, we will prove the statement for $s \in [0,1]$; a similar argument applies to any interval $s \in [N, N+1]$ between consecutive positive integers. The above considerations yield the following two linear homeomorphisms, obtained for $s=0$ and $s=1$:
\[
L^2(\R^d) \xrightarrow{\UZ} L^2_\tau(\B; \Hf) \xrightarrow{\mathcal{V}}  L^2(\T^d_b) \otimes \Hf \xrightarrow{\mathcal{F} \otimes \Id} \ell^2(\Gamma_b) \otimes \Hf \equiv L^2(\Gamma_b \times Y_b, \, \di \gamma \otimes \di y)
\]
and
\[
L^2(\R^d, \langle x \rangle^2 \, \di x) \xrightarrow{\UZ} H^1_\tau(\B; \Hf) \xrightarrow{\mathcal{V}} H^1(\T^d_b) \otimes \Hf \xrightarrow{\mathcal{F} \otimes \Id} h^1(\Gamma_b) \otimes \Hf \equiv L^2(\Gamma_b \times Y_b, \, \langle \gamma \rangle^2 \, \di \gamma \otimes \di y)
\]
where $\mathcal{F}$ denotes the usual Fourier series. In each line, the composition $T$ of the arrows yields a bounded linear operator with bounded inverse (and actually a unitary isomorphism in the first line). An interpolation theorem of Stein \cite[Thm.~2]{Stein56} yields then that $T$ extends to a linear homeomorphism between the interpolating spaces
\[ T \colon L^2(\R^d, \langle x \rangle^{2s} \, \di x) \to L^2(\Gamma_b \times Y_b, \, \langle \gamma \rangle^{2s} \, \di \gamma \otimes \di y), \quad s \in (0,1). \]
As a consequence, the magnetic Bloch-Floquet transform $\UZ =   \mathcal{V}^{-1} \circ (\mathcal{F}^{-1} \otimes \Id) \circ T$ extends to a linear homeomorphism between $L^2(\R^d, \langle x \rangle^{2s} \, \di x)$ and $H^s_\tau(\R^d; \Hf)$, as wanted.
\end{proof}


\section{Approximation of Sobolev maps}
\label{Sec:SobolevApprox}

In the next Lemma we discuss a general approximation result for Sobolev maps into a compact, boundaryless, smooth manifold $M \subset \R^{\nu}$, which follows directly from techniques and results in the literature. For the applications we aim at (see Theorems~\ref{thm:SmoothApprox} and \ref{Corollary}),  the manifold $M$ will be either the Stiefel manifold   $W_{m}(\C^n)$ or the unitary group $\U(\C^m)$. 

\begin{lemma}
\label{smoothapproximation}
Let $2 \leq d \leq 3$. Consider a compact, boundaryless, smooth submanifold $M \subset \R^{\nu}$. If $d =3$, assume moreover that the homotopy group $\pi_2(M)$ is trivial. \newline
Then, every Sobolev map $\Psi \in H^1(\T^d, M)$ can be approximated by a sequence $\{ \Psi^{(\ell)} \}_{\ell \in \N} \subset C^{\infty}(\T^d, M)$ such that  $\Psi^{(\ell)} \stackrel{H^1}{\to} \Psi$ as $\ell \to \infty$.  If, in addition, $M$ is an analytic submanifold, then the approximating sequence can be chosen in $C^{\omega}(\T^d, M)$. 
\end{lemma}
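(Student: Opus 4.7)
My plan is to reduce the lemma to two well-established results in the theory of Sobolev maps into manifolds: a density theorem for smooth maps in $H^1(\T^d, M)$ (due to Schoen--Uhlenbeck in the critical regime and to Bethuel in the subcritical regime), and an analytic refinement obtained via an analytic tubular neighborhood argument together with a diagonal extraction.

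\textbf{Step 1: smooth approximation.} I would first approximate $\Psi \in H^1(\T^d, M)$ by a sequence $\Psi^{(\ell)} \in C^\infty(\T^d, M)$ in the $H^1$ topology. For $d=2$ the Sobolev exponent $p=2$ equals the domain dimension, and density of smooth maps holds unconditionally by the classical theorem of Schoen--Uhlenbeck: one mollifies $\Psi$ by convolution on $\T^2 \times \R^\nu$, exploits the $H^1$-energy bound combined with Poincar\'e's inequality to show that on most small balls the mean value of the mollified map lies within a tubular neighborhood of the compact submanifold $M \subset \R^\nu$, and then composes with the smooth nearest-point retraction $\pi_M$ to land back onto $M$. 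For $d=3$ the exponent $p=2$ is strictly below the domain dimension and density may fail in general; here I would invoke Bethuel's theorem, which guarantees density of $C^\infty(\T^3, M)$ in $H^1(\T^3, M)$ precisely under the hypothesis $\pi_{\lfloor p \rfloor}(M) = \pi_2(M) = 0$ assumed in the lemma.

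\textbf{Step 2: analytic refinement.} Suppose in addition that $M$ is a real-analytic submanifold of $\R^\nu$. Then I would refine each smooth $\Psi^{(\ell)}$ into an analytic approximant as follows. Trigonometric polynomials are real-analytic on $\T^d$ and dense in $C^k(\T^d; \R^\nu)$ for every $k \in \N$ (via truncation of the Fourier expansion or convolution with the Jacobi theta kernel); hence $\Psi^{(\ell)}$, viewed as a map into $\R^\nu$, admits a $C^1$- and a fortiori $H^1$-approximation by a sequence $\widetilde\Psi^{(\ell,j)} \in C^\omega(\T^d; \R^\nu)$. The compactness and analyticity of $M$ provide an analytic tubular neighborhood $U_\delta \subset \R^\nu$ of $M$ endowed with an analytic nearest-point projection $\pi_M: U_\delta \to M$. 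For $j$ sufficiently large the image of $\widetilde\Psi^{(\ell,j)}$ is contained in $U_\delta$, so $\pi_M \circ \widetilde\Psi^{(\ell,j)} \in C^\omega(\T^d, M)$, and the $H^1$-convergence to $\Psi^{(\ell)}$ follows from the chain rule together with the uniform boundedness of $\di \pi_M$ on the compact set $U_\delta$. A standard diagonal extraction combining the sequences produced in Steps~1 and~2 yields the required analytic approximants of the original $\Psi$.

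\textbf{Expected obstacle.} The entire analytic content sits in Step~1, and specifically in Bethuel's theorem for the case $d=3$: the hypothesis $\pi_2(M)=0$ is precisely what is needed to rule out the topological obstructions created by point-like singularities of $\Psi$, around which the map may wrap nontrivially over small transversal $2$-spheres; without this assumption one can exhibit Sobolev maps that are \emph{not} $H^1$-limits of smooth $M$-valued maps (the canonical example being $x/|x|: B^3 \to S^2$). Granting the density result, the passage from smooth to analytic in Step~2 is a routine Weierstrass-type argument relying only on the existence of an analytic nearest-point retraction onto $M$, and no further topological obstruction arises.
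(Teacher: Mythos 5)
Your proposal is correct and follows essentially the same two-step scheme as the paper: mollification and nearest-point reprojection for $d=2$ (the Schoen--Uhlenbeck argument, which the paper writes out explicitly via the Poincar\'e--Wirtinger inequality rather than citing), an appeal to a density theorem for Sobolev maps into manifolds for $d=3$, and an analytic refinement by truncated Fourier series combined with the analytic nearest-point retraction and a diagonal extraction. The only noteworthy divergence is the reference for $d=3$: you cite Bethuel's density theorem, which as originally stated concerns maps from Euclidean domains, whereas the paper invokes Hang--Lin (Theorem 1.3 and Section 5), which is the correct adaptation to maps defined on a closed manifold such as $\mathbb{T}^3$ and makes precise that the relevant condition is the extendibility of continuous maps from the $2$-skeleton of $\mathbb{T}^3$ to the whole torus. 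Under the hypothesis $\pi_2(M)=0$ this extension condition is automatic, so your shortcut through Bethuel is morally sound, but in general the domain's topology matters and Hang--Lin is the sharper citation.
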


We give two different arguments for $d = 2$ and for $d=3$. For $d=2$, we provide a direct proof based on a standard regularization by convolution and reprojection, which we detail for the reader's convenience. For $d =3$ the proof relies on a more general profound result in \cite{HangLin}.

\begin{proof}
Let $d=2$. Consider a mollifier $\rho \in C^\infty_0 (\mathbb{R}^2)$, $\rho \geq 0$, with compact support in the unit ball and with unit mass, and set $\rho_\ell(\cdot) := \ell^2 \rho(\, \cdot \, \, \ell)$. By convolution $\widetilde \Psi^{(\ell)}:= {\Psi} * \rho_\ell  \in C^\infty(\mathbb{T}^2; \R^{\nu})$  satisfy $\widetilde \Psi^{(\ell)} \to {\Psi}$ in $H^1(\mathbb{T}^2; \R^{\nu})$ as $\ell \to \infty$. 

Since $M$ is a smooth submanifold, there exists an open neighborhood $M \subset \mathfrak{U} \subset \R^\nu$, where the nearest-point projection $\Pi \colon \mathfrak{U} \to M$ is well defined and smooth \cite[Chapter 2]{GuilleminPollack}. We claim that $\Psi^{(\ell)}(\mathbb{T}^2) \subset \mathfrak{U}$ for $\ell$ large enough. Since $d=2$, this can be obtained from Poincar\'{e}-Wirtinger inequality. Given a point $x \in \R^{\nu}$, we recall that the distance to the manifold $M$ is defined by
$$
\dist \left( x , M \right)  = \inf_{m \in M} \norm{ x - m}. 
$$ 
For any $\bar{k} \in \T^2$,  choosing $x = \widetilde \Psi^{(\ell)}(\bar{k})$ and $m = \Psi(k)$, and averaging on $k$, one has
\begin{align*}
\dist \left( \widetilde \Psi^{(\ell)}(\bar{k}), M \right)^2 
& \leq  \frac{1}{\pi} \ell^2 \int_{B_{\ell^{-1}}(\bar{k})} \!\!\!\!\! \di k \, \left\|  \widetilde  \Psi^{(\ell)}(\bar{k})  - {\Psi}(k) \right\|^2 \\
& \leq  C_\rho \, \, \ell^{4} \int_{B_{\ell^{-1}}(\bar{k})} \!\!\!\!\! \di k  \int_{B_{\ell^{-1}}(\bar{k})} \!\!\!\!\! \di k'  \, \left\| \Psi(k) - \Psi(k') \right\|^2 \\
& \leq C_\rho \, \ell^{2} \int_{B_{\ell^{-1}}(\bar{k})} \!\!\!\!\!  \di k \, \left\| \Psi(k) - \left|B_{\ell^{-1}}(\bar{k})\right|^{-1} \int_{B_{\ell^{-1}}(\bar{k})} \!\!\!\!\!  \di k' \, \Psi(k^\prime) \right\|^2 \\
& \leq C_\rho \int_{B_{\ell^{-1}}(\bar{k})} \!\!\!\!\!  \di k \, \| \nabla {\Psi} (k)\|^2 \underset{\ell \to \infty}{\longrightarrow} 0
\end{align*}
uniformly over $\bar{k} \in \mathbb{T}^2$. In the last step we used the Poincar\'{e}-Wirtinger inequality. 

Thus we may define ${\Psi}^{(\ell)}:= \Pi \circ \widetilde \Psi^{(\ell)}$, whence $\Psi^{(\ell)} \subset C^\infty(\mathbb{T}^d; M)$ and $\Psi^{(\ell)} \to {\Psi}$ in $H^1$ as $\ell \to \infty$ because of the continuity of the composition with smooth maps under $H^1$-convergence. 

Assume in addition that $M$ is an analytic submanifold of $\R^{\nu}$, so that the projection $\Pi: \mathfrak{U} \to M$ is real-analytic. Up to a diagonal argument, it is enough to approximate in the $H^1$-norm each $\Phi = \Psi^{(\ell)} \in C^{\infty}(\T^d, M)$ by a sequence $\set{\Phi_N} \subset C^{\omega}(\T^d, M)$. \newline 
The construction of  $\set{\Phi_N} $ is based on the Fourier expansion of $\Phi$. Recall that $\T^d = \R^d/\Lambda$, so that $\Phi$ is identified by its  Fourier coefficients  $\{ \widehat\Phi_{\gamma} \}_{\gamma \in \Lambda^*}$. In view of that,  $\Phi = \lim_{N \to \infty} \widetilde \Phi_N$ in $H^1(\T^d, \R^{\nu})$, and uniformly since $\Phi$ is $C^{\infty}$-smooth, where
$$
\widetilde \Phi_{N}(k)  :=  \sum_{\gamma \in \Lambda^*, |\gamma| \leq N}  \eu^{\iu \gamma \cdot k} \, \widehat\Phi_{\gamma}. 
$$ 
is the truncated Fourier series. Clearly, $\widetilde \Phi_{N}$ is real-analytic and, for $N$ large enough, $\Phi_{N} = \Pi \circ \widetilde \Phi_{N}$ is well-defined and real-analytic. Furthermore, as $N \to \infty$ the sequence $\Phi_{N}$ converges to $\Phi$ in $H^1(\T^d,M)$ by the continuity of the composition with smooth maps under $H^1$-convergence.     

The argument for $d=3$ is subtler.  Since $\pi_2(M)=0$,  every continuous map on the 2-skeleton of $\mathbb{T}^3$ to $M$ has a continuos extension to $\mathbb{T}^3$. Thus, we can apply \cite[Theorem 1.3 and Section 5]{HangLin} to obtain the desired approximating sequence $\{{\Psi}^{(\ell)}\}\subset C^\infty(\mathbb{T}^3; M)$ such that ${\Psi}^{(\ell)} \to {\Psi}$ in $H^1$ as $\ell \to \infty$. When $M$ is analytic, the approximation by analytic maps follows exactly as above. 
\end{proof}


\bigskip \bigskip


{\footnotesize  

\begin{tabular}{ll}

(D. Monaco) & \textsc{Fachbereich Mathematik, Eberhard Karls Universit\"{a}t T\"{u}bingen} \\
 &  Auf der Morgenstelle 10, 72076 T\"{u}bingen, Germany \\
 &  {E-mail address}: \href{mailto:domenico.monaco@uni-tuebingen.de}{\texttt{domenico.monaco@uni-tuebingen.de}} \\
\\
(G. Panati) 
&  \textsc{Dipartimento di Matematica, \virg{La Sapienza} Universit\`{a} di Roma} \\
 &  Piazzale Aldo Moro 2, 00185 Rome, Italy \\
 &  {E-mail address}: \href{mailto:panati@mat.uniroma1.it}{\texttt{panati@mat.uniroma1.it}} \\
 \\
(A. Pisante) & \textsc{Dipartimento di Matematica, \virg{La Sapienza} Universit\`{a} di Roma} \\
 &  Piazzale Aldo Moro 2, 00185 Rome, Italy \\
 &  {E-mail address}: \href{mailto:pisante@mat.uniroma1.it}{\texttt{pisante@mat.uniroma1.it}} \\
 \\
(S. Teufel) & \textsc{Fachbereich Mathematik, Eberhard Karls Universit\"{a}t T\"{u}bingen} \\
 &  Auf der Morgenstelle 10, 72076 T\"{u}bingen, Germany \\
 &  {E-mail address}: \href{mailto:stefan.teufel@uni-tuebingen.de}{\texttt{stefan.teufel@uni-tuebingen.de}} \\
 \\

\end{tabular}

}
\end{document}